\title{Efficient Algorithms for Minimum Covering of Orthogonal Polygons with Squares} 
\titlerunning{Minimum Square Covering of Orthogonal Polygons} 
\author{Anubhav Dhar}{Indian Institute of Technology Kharagpur, India}{anubhavdhar@kgpian.iitkgp.ac.in}{}{}
\author{Subham Ghosh}{Indian Institute of Technology Kharagpur, India}{subham.g@kgpian.iitkgp.ac.in}{}{}
\author{Sudeshna Kolay}{Indian Institute of Technology Kharagpur, India}{skolay@cse.iitkgp.ac.in}{}{}
\authorrunning{A.~Dhar, S.~Ghosh, S.~Kolay} 
\keywords{Orthogonal polygon covering, Square covering, Exact algorithm, NP-hardness} 
\newcommand{\defparproblem}[4]{
  \vspace{1mm}
\noindent\fbox{
  \begin{minipage}{0.96\textwidth}
  \begin{tabular*}{\textwidth}{@{\extracolsep{\fill}}lr} #1  & {\bf{Parameter:}} #3 \\ \end{tabular*}
  {\bf{Input:}} #2  \\
  {\bf{Question:}} #4
  \end{minipage}
  }
  \vspace{1mm}
}
\newcommand{\defproblem}[3]{
  \vspace{1mm}
\noindent\fbox{
  \begin{minipage}{0.96\textwidth}
  \begin{tabular*}{\textwidth}{@{\extracolsep{\fill}}lr} #1 \\ \end{tabular*}
  {\bf{Input:}} #2  \\
  {\bf{Question:}} #3
  \end{minipage}
  }
  \vspace{1mm}
}
\newcommand{\pnp}{$\mbox{P} = \mbox{NP}$\xspace}
\newcommand{\OO}{\mathcal{O}}
\newcommand{\OPCS}{\textsc{OPCS}\xspace}
\newcommand{\OPCSH}{\textsc{OPCSH}\xspace}
\newcommand{\Ex}{\mathsf{ext}}
\newcommand{\GS}{G^\mathscr{R}(P)}
 \definecolor{babyblue}{rgb}{0.54, 0.81, 0.94}
 \definecolor{b1}{rgb}{0.63, 0.79, 0.95}
 \definecolor{b2}{rgb}{0.74, 0.83, 0.9}
 \definecolor{b3}{rgb}{0.67, 0.9, 0.93}
 \definecolor{gentlegreen}{rgb}{0.00, 0.51, 0.00}
\begin{document}

\maketitle

\begin{abstract}
Let $P$ be an orthogonal polygon (polygon having axis-parallel edges) of $n$ vertices, without holes. The {\sc Orthogonal Polygon Covering with Squares (OPCS)} problem takes as input such an orthogonal polygon $P$ with integral vertex coordinates, and asks to find the minimum number of axis-parallel squares whose union is $P$ itself. Aupperle et. al~\cite{aupperle1988covering} provide an $\mathcal O(N^{1.5})$-time algorithm to solve {\sc OPCS} for orthogonal polygons without holes, where $N$ is the number of integral lattice points lying in the interior or on the boundary of $P$. In their paper, designing algorithms for {\sc OPCS} with a running time polynomial in $n$, the number of vertices of $P$, was stated as an open question; $N$ can be arbitrarily larger than $n$. Output sensitive algorithms were known due to Bar-Yehuda and Ben-Chanoch~\cite{bar1994}, but these fail to address the open question, as the output can be arbitrarily larger than $n$. We address this open question by designing a polynomial-time exact algorithm for {\sc OPCS} with a worst-case running time of $\mathcal O(n^{10})$. 

We also consider the following structural parameterized version of the problem. Let a knob be a polygon edge whose both endpoints are convex polygon vertices. Given an input orthogonal polygon without holes that has $n$ vertices and at most $k$ knobs, we design an algorithm for {\sc OPCS} with a worst-case running time $\mathcal O(n^2 + k^{10} \cdot n)$. This algorithm is more efficient than the former, whenever $k = o(n^{9/10})$. 

The problem of {\sc Orthogonal Polygon with Holes Covering with Squares (OPCSH)} is also studied by Aupperle et. al~\cite{aupperle1988covering}. Here, the input is an orthogonal polygon which could have holes and the objective is to find a minimum square covering it. They claim a proof that OPCSH is NP-complete even when all lattice points inside the polygon constitute the input. We think there is an error in their proof, where an incorrect reduction from \textsc{Planar 3-CNF} is shown. We provide a correct reduction with a novel construction of one of the gadgets, and show how this leads to a correct proof of NP-completeness of OPCSH.
\end{abstract}
\newpage
\section{Introduction}\label{sec:intro}

An \emph{orthogonal polygon} is a simple polygon such that every polygon-edge is parallel to either the x-axis or the y-axis (Figure~\ref{fig:orthogonal-polygon}). We consider the problem of covering a given orthogonal polygon with the minimum number of (possibly overlapping) axis-parallel squares. 

Formally, given an orthogonal polygon $P$, the problem is to find the minimum number of squares such that every square lies inside the region defined by $P$, and the union of the squares is the entire polygon $P$ itself (Figure~\ref{fig:example-covering} and Figure~\ref{fig:example-non-covering}). Equivalently, the problem is to find the minimum number of axis-parallel squares whose union is exactly $P$.

\begin{figure} [ht]
    \hfill 
    \begin{subfigure}[t]{0.26\textwidth} 
        \centering 
        \includegraphics[height=4cm]{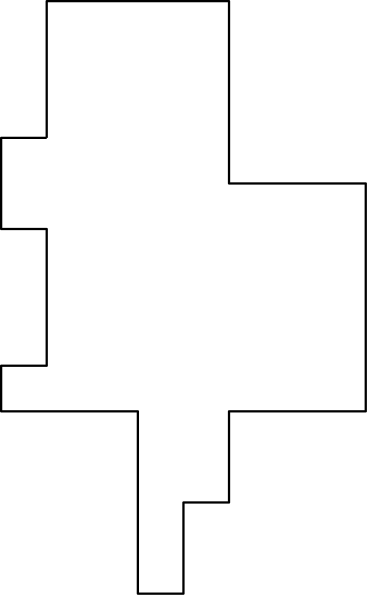}
        \caption{Orthogonal Polygon} \label{fig:orthogonal-polygon}
    \end{subfigure} \hfill
    \begin{subfigure}[t]{0.26\textwidth}
        \centering
        \includegraphics[height=4cm]{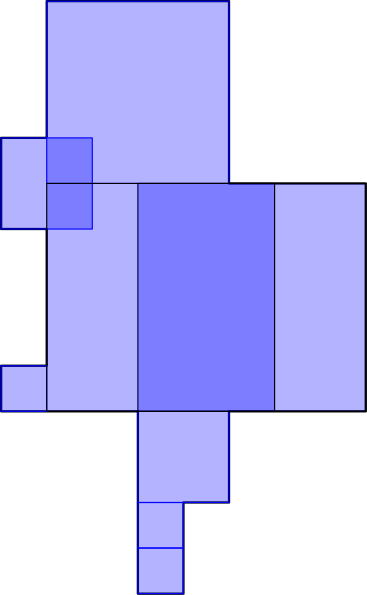}
        \caption{Valid Covering} \label{fig:example-covering}
    \end{subfigure} \hfill
    \begin{subfigure}[t]{0.38\textwidth}
        \centering
        \includegraphics[height=4cm]{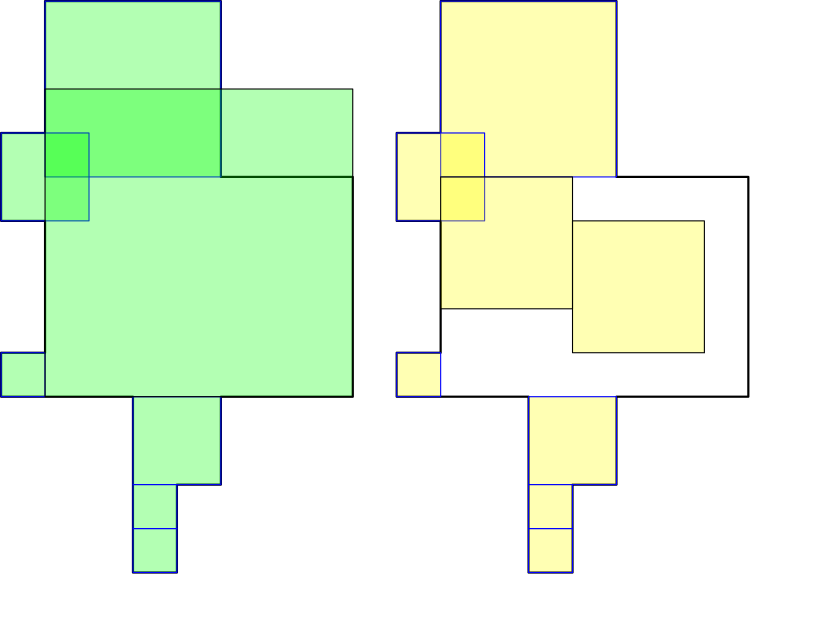}
        \caption{Invalid Coverings. Left: A chosen square not contained in polygon (largest square in the diagram). Right: squares not covering entire polygon} \label{fig:example-non-covering}
    \end{subfigure} \hfill \hfill \hfill
    \caption{Orthogonal Polygons and Covering with Squares}
\end{figure}

For most of the paper, we deal with orthogonal polygons without holes. We formally define the problem of \textsc{Orthogonal Polygon Covering with Squares (OPCS)}.

\defproblem{\textsc{Orthogonal Polygon Covering with Squares (OPCS)}}{An orthogonal polygon $P$ where its $n$ vertices have integral coordinates and $P$ does not have any holes}{Find the minimum number of axis-parallel squares contained in $P$, such that $P$ is entirely covered by these squares}

We also study a parameterized version of {\sc OPCS}, called $p$-{\sc OPCS}. Let a \emph{knob} be
an edge where both of its endpoints are convex vertices of $P$ (the formal definition of knobs is given in Section~\ref{sec:prelims}). The parameterized problem of $p$-OPCS takes the number of knobs in the input orthogonal polygon to be a structural parameter, to be utilized when designing parameterized algorithms. We define $p$-{\sc OPCS} as follows.

\defparproblem{$p$-\textsc{Orthogonal Polygon Covering with Squares ($p$-OPCS)}}{An orthogonal polygon $P$ where its $n$ vertices have integral coordinates and $P$ does not have any holes, a non-negative integer $k$ such that $P$ has at most $k$ knobs}{$k$}{Find the minimum number of axis-parallel squares contained in $P$, such that $P$ is entirely covered by these squares}

The final variant that we consider in this paper is the following variant of {\sc OPCS} where the input orthogonal polygon is allowed to have holes.

\defproblem{\textsc{Square Covering of Orthogonal Polygons with Holes (OPCSH)}}{An orthogonal polygon $P$ (possibly with holes) where all boundary vertices have integral coordinates}{Find the minimum number of axis-parallel squares contained in $P$, such that $P$ is entirely covered by these squares}

Note that for these problems, there are a couple of ways to represent the polygon $P$. One way is to just provide the coordinates of the $n$ vertices, in clockwise/counter-clockwise order. Another way is to list down all lattice points lying inside $P$. If there are $N$ lattice points lying inside $P$, $N$ can be even exponential in $n$.

\paragraph*{Previous results.}

The \OPCS problem originates from image processing and further finds its application in VLSI mask generation, incremental update of raster displays, and image compression \cite{DBLP:journals/algorithmica/Moitra91}. Moitra~\cite{DBLP:journals/algorithmica/Moitra91} considers the problem of a minimal covering of squares (i.e. no subset of the cover is also a cover) for covering a binary image $\sqrt N \times \sqrt N$ pixels; and presents a parallel algorithm running on EREW PRAM which takes time $T \in [\log N, N]$ with $(N/T)$ processors. 

In the works of Aupperle, Conn, Keil and O'Rourke \cite{aupperle1988covering}, exact polynomial time solutions of \OPCS are discussed, where the input is considered to be the set of all $N$ lattice points inside $P$. The algorithms crucially use the notion of associated graph $G(P)$.

\begin{definition}[Associated graph $G(P)$] \label{def:G}
    A unit square region inside an orthogonal polygon $P$ with corners on lattice points is called a \emph{block}. An \emph{associated graph} $G(P)$ is constructed with the set of nodes as the set of blocks inside $P$; two blocks $p_1$ and $p_2$ are connected by an undirected edge if there is an axis parallel square lying inside $P$ that simultaneously covers $p_1$ and $p_2$.
\end{definition}

By definition, any set of blocks covered by a single square shall correspond to a clique in the associated graph $G(P)$. Further, Albertson and O'Keefe \cite{doi:10.1137/0602026/albertson} show the converse as well. 

\begin{proposition} \label{prop:sq-clique}
    A subset of blocks inside an orthogonal polygon $P$ can be covered by a square if and only if they induce a clique in $G(P)$.
\end{proposition}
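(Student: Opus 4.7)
}
The forward direction is a one-line consequence of the definition of $G(P)$: if a single axis-aligned square $Q \subseteq P$ contains every block in a subset $S$, then $Q$ itself witnesses the edge $\{b_i, b_j\}$ of $G(P)$ for every pair $b_i, b_j \in S$, so $S$ induces a clique in $G(P)$. I would dispatch this direction in one sentence.

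For the converse, let $S = \{b_1, \ldots, b_m\}$ be a clique in $G(P)$, and let $[x_L, x_R + 1] \times [y_B, y_T + 1]$ be the axis-aligned bounding box of $S$, with dimensions $w \times h$; set $s := \max(w, h)$. Any axis-aligned square that covers $S$ must have side at least $s$, and a square of side exactly $s$ covers $S$ if and only if it contains the bounding box. Assuming without loss of generality that $w \geq h$ (so $s = w$), the horizontal extent of every such covering square is forced to be $[x_L, x_L + s]$, and the only remaining parameter is the bottom $y$-coordinate $c$, constrained to lie in $[y_T + 1 - s, y_B]$. The problem therefore reduces to showing that the set $C := \{c \in [y_T + 1 - s, y_B] : [x_L, x_L + s] \times [c, c + s] \subseteq P\}$ is non-empty.

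The plan is to exploit the pairwise witness squares furnished by the clique hypothesis together with the hole-free structure of $P$. The witness square $Q_{LR}$ for the $x$-extremal pair $(b_L, b_R)$ has side at least $s$ and contains $[x_L, x_L + s]$ in its horizontal extent, so a straightforward shrinking argument extracts from it a sub-square of side exactly $s$ with horizontal extent $[x_L, x_L + s]$ lying inside $P$; this already exhibits one $c$-value $c^{\star}$ for which $[x_L, x_L + s] \times [c^{\star}, c^{\star} + s] \subseteq P$. If $c^{\star} \in [y_T + 1 - s, y_B]$ we are done. Otherwise, say $c^{\star} > y_B$, which forces the existence of some $b \in S$ with $y$-coordinate $y_B$ not covered by this sub-square; I would then combine the witness squares for $(b_L, b)$ and $(b, b_R)$ with the fact that the set of feasible placements of a fixed-size axis-parallel rectangle inside a hole-free orthogonal polygon is connected in the position parameter, to slide the candidate square vertically from $c^{\star}$ down toward $y_B$ while remaining inside $P$. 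The main obstacle is precisely this sliding step: without the no-holes hypothesis, the placement set could be disconnected, and a purely local assembly of witness squares would fail to guarantee a globally admissible $c$. It is this connectivity of feasible placements, absent in the OPCSH setting considered later in the paper, that ultimately makes the proposition hold for hole-free orthogonal polygons.
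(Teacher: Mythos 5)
Your forward direction and the reduction to a one-parameter search are fine: with $s=\max(w,h)$ and (WLOG) $s=w$, any side-$s$ covering square has forced horizontal extent $[x_L,x_L+s]$, and the extraction of a first feasible height $c^{\star}$ from the witness square of the $x$-extremal pair is sound. The genuine gap is the sliding step, and specifically the ``fact'' you invoke to justify it: it is false that the set of feasible placements of a fixed-size axis-parallel square (or rectangle) inside a hole-free orthogonal polygon is connected in the position parameter. Take $P=\bigl([0,10]\times[0,10]\bigr)\cup\bigl([9,10]\times[10,20]\bigr)\cup\bigl([0,10]\times[20,30]\bigr)$, a simply connected orthogonal polygon. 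A $10\times 10$ square with horizontal extent $[0,10]$ fits at height $c=0$ and at $c=20$, but at no intermediate height, so the feasible set for exactly the kind of vertical slide you propose is disconnected; simple connectivity of $P$ does not give you this connectivity. Consequently the step ``slide the candidate square from $c^{\star}$ down toward $y_B$ while remaining inside $P$'' has no justification as written.

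Nor does the appeal to the witness squares for $(b_L,b)$ and $(b,b_R)$ repair this: those squares only need side roughly the horizontal distance from $b$ to $b_L$ (resp.\ $b_R$), which can be much smaller than $s$, and their horizontal extents need not be $[x_L,x_L+s]$, so it is not explained how they bridge the disconnected placements. This is precisely where the content of the proposition lies --- it is a Helly-type statement whose converse direction fails for polygons with holes (three blocks around a hole can be pairwise but not jointly coverable), so any correct argument must use simple connectivity in a more structural way than a generic sliding claim. Note also that the paper itself does not prove this proposition: it quotes it from Albertson and O'Keefe, whose proof is a nontrivial argument for exactly this Helly property, so you are attempting to reprove a cited result, and the attempt as it stands assumes the hard part.
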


In the works of Aupperle et. al~\cite{aupperle1988covering}, $G(P)$ is shown to be a chordal graph, if $P$ does not have holes (please refer to~\cref{def:chordal} for the definition of a chordal graph). They further show that, using the polynomial time algorithm for minimum clique covering in chordal graphs (due to Gavril~\cite{Gavril1972AlgorithmsFM}), \OPCS can be solved in polynomial time with respect to the number of lattice points, $N$, inside $P$. An algorithm with running time $\mathcal O (N^{2.5})$ is presented and techniques are mentioned to speed it up to $\mathcal O(N^{1.5})$, using specific structural properties of the associated graph $G(P)$.

Aupperle et. al~\cite{aupperle1988covering} also claim a proof that \OPCSH is \textsc{NP}-hard if the input orthogonal polygon $P$ (possibly with holes) is described in terms of the $N$ lattice points inside $P$. The NP-complete problem \textsc{Planar 3-CNF}~\cite{doi:10.1137/0211025} is reduced to \OPCSH in order to prove its \textsc{NP}-hardness.

The works of Bar-Yehuda and Ben-Chanoch~\cite{bar1994} consider the problem of \OPCS where the input is the $n$ vertices of the polygon $P$. They provide an exact algorithm solving \OPCS in $\OO(n + \textsc{OPT})$ time and $\OO(n)$ space, where {\sc OPT} is the minimum number of squares to cover $P$. This is particularly interesting as this eliminates the dependence on $N$, which could be arbitrarily larger than $n$. They crucially use the concept of \emph{essential squares} and build up the solution one square at a time, to finally get an optimal set of squares. Since they considered arbitrary placement of vertices (not necessarily at lattice points), the concept of associated graphs was irrelevant to their approach. However, in the worst case {\sc OPT} can be arbitrarily larger than $n$; for example, a $1 \times t$ rectangle has $n = 4$ and $\text{OPT} = t$. Hence this algorithm, or any other output-sensitive algorithm, can never provide a worst-case running time guarantee of a polynomial in $n$. Later, in another paper, Bar-Yehuda~\cite{BarYehuda2014CoveringPW} describes an $\OO(n \log n + \textsc{OPT})$ $2$-factor approximation algorithm to solve this problem where $n$ is the number of vertices and \textsc{OPT} is the minimum number of squares to cover $P$. Incidentally the same paper states that no polynomial time exact algorithm with respect to $n$ and {\sc OPT} were known.

The works of Culberson and Reckhow~\cite{Culberson21976} prove that the problem of minimum covering of orthogonal polygons with rectangles is NP-hard. Kumar and Ramesh~\cite{10.1145/301250.301369Kumar} provide a polynomial time approximation algorithm with an approximation factor of $\OO(\sqrt{\log N})$ for the problem of covering orthogonal polygons (with or without holes) with rectangles. Most of the other geometric minimum cover problems for orthogonal polygons are NP-complete as well~\cite{o1987art}. Another closely related problem is the tiling problem, where the squares additionally need to be non-overlapping~\cite{aamand2023}. This is already non-trivial when the input is a rectangle~\cite{KENYON1996272, WALTERS20092913}.

\paragraph*{Our results and the organization of the paper.}

In the algorithms due to Aupperle et. al~\cite{aupperle1988covering}, the input size is considered to be the number of lattice points $N$ lying inside $P$. However, this is an inefficient way to represent orthogonal polygons. Rather, it would be more efficient to represent an orthogonal polygon $P$ as a sequence of its vertices (in either clockwise or counter-clockwise direction). This is because a polygon with $n$ vertices can have arbitrarily large number of lattice points $N$ inside it~\cite{bar1994}.

Even if we consider the total number of bits required to represent the vertices to be $n'$, then $N$ can be exponential in $n'$. As an example, consider $P$ to be a large square with vertices $(0, 0)$, $(0, 2^t)$, $(2^t, 0)$, $(2^t, 2^t)$. Even though it can be represented by just $\OO(t)$ bits, there are $\Theta(4^t)$ lattice points inside it. Moreover, if {\sc OPT} is the minimum number of squares required to completely cover $P$, then recall that {\sc OPT} can also potentially be exponential in the number of bits needed to represent the vertices (for example $P$ being a $1 \times 2^t$ rectangle).

In this paper we consider the input to be the $n$ vertices of the polygon $P$, and we design efficient algorithms for \OPCS, with respect to $n$. The algorithm by Aupperle et. al~\cite{aupperle1988covering}. becomes exponential now, as there can be exponentially many lattice points inside $P$. In their paper, designing algorithms which are polynomial in the number $n$ of polygon vertices was stated as an open question. Note that output sensitive algorithms~\cite{bar1994} are also exponential in the worst case.

To the best of our knowledge, our paper is the first to answer this open question. We present an algorithm with a running time of $\mathcal O(n^{10})$, where $n$ is the number of vertices  of the input polygon; along with related structural results, in \cref{sec:structure} and \cref{sec:polytime}. It is interesting to note that our algorithm can also work for polygons with rational coordinates, as we can simply scale up the polygon by an appropriate factor without changing the number of vertices $n$.

In \cref{sec:knobbed}, we consider the $p$-{\sc OPCS} problem and further optimize the above algorithm. We consider the polygon to have $n$ vertices and at most $k$ knobs (edges with both endpoints subtending $90^\circ$ to the interior of the polygon, formally defined in \cref{def:conv-conc-vert}). We design a recursive algorithm running in time $\mathcal O(n^2 + n \cdot k^{10})$. This algorithm is more efficient than the former whenever $k = o(n^{9/10})$. 

Moreover, we think that the claimed proof of NP-hardness of \OPCSH by Aupperle et. al~\cite{aupperle1988covering} is incorrect, as they incorrectly reduce from a polynomial-time solvable variant of a problem. In \cref{sec:hardness}, we provide a correct proof of NP-hardness of \OPCSH. Our proof relies on some of the existing constructions~\cite{aupperle1988covering}, but we also provide novel constructions for some gadgets used in the reduction. Note that such a result implies that we do not expect an algorithm for \OPCSH that has a running time polynomial in $n$ as well as $N$, as the reduction creates an instance with $N = \Theta(n)$.

All other notations and definitions used in the paper can be found in \cref{sec:prelims}.

\section{Preliminaries}\label{sec:prelims}

\subparagraph{Orthogonal polygons.} We denote by $P$ our input polygon with $n$ vertices. For distinction, we associate the terms `vertices' and `edges' to the polygon, the terms `corners' and `sides' to other geometric objects, and the terms `nodes' and `arcs' to graphs. In the entire paper, we always consider the vertices of the polygon to have integral coordinates. Unless mentioned otherwise, the polygon $P$ is assumed to not have any holes. We denote the number of lattice points inside $P$ by $N$. We assume that all arithmetic operations take constant time.

Let the vertices of $P$ in order be $v_1, v_2, \ldots, v_n$, where $v_i = (x_i, y_i)$ with $x_i$ and $y_i$ being integer coordinates. For convenience, we define $v_0 := v_n$ and $v_{n+1} := v_1$. Therefore $(v_i, v_{i+1})$ is a polygon edge of $P$ for all $i \in \{0, 1, 2, \ldots, n\}$. For any natural number $d$, we denote the set $\{1, 2, \ldots, d\}$ by the notation $[d]$. Let the minimum number of squares required to cover $P$ be denoted by $\textsc{\OPCS}(P)$. For any two points $a$, $b$ on the plane, we denote the (Euclidean) length of the line segment joining $a$ and $b$ as $\overline{ab}$.

We use the terms \textit{left}, \textit{right}, \textit{top}, \textit{bottom}, \textit{vertical}, \textit{horizontal} to  indicate \textit{greater x-coordinate}, \textit{smaller x-coordinate}, \textit{greater y-coordinate}, \textit{smaller y-coordinate}, \textit{parallel to y-axis}, and \textit{parallel to x-axis}, respectively. By \textit{boundary} of a square (rectangle), we mean its four sides. A block lying on the boundary of a square (rectangle) is one which lies inside the square (rectangle) and where the boundary of the block overlaps with the boundary of the square. 

\begin{proposition}\label{prop:invariants}
   The minimum number of squares to cover an orthogonal polygon $P$, remains the same when $P$ is scaled up by an integral factor and/or rotated $90^\circ$ (clockwise/counter-clockwise).
\end{proposition}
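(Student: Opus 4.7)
I address the two invariances separately.

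For rotation, a $90^\circ$ rotation $\rho$ about any center is a bijective isometry of $\mathbb{R}^2$ that permutes the two axis-parallel directions, and hence maps axis-parallel squares contained in $P$ bijectively onto axis-parallel squares contained in $\rho(P)$. Consequently, $\{S_1, \ldots, S_k\}$ is a valid square cover of $P$ if and only if $\{\rho(S_1), \ldots, \rho(S_k)\}$ is a valid square cover of $\rho(P)$; applying this bijection (and its inverse) to optimal covers yields the desired equality of minimum cover sizes.

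For scaling, let $c \in \mathbb{Z}_{\geq 1}$ and write $cP := \sigma_c(P)$, where $\sigma_c(x, y) = (cx, cy)$. The inequality $\OPCS(cP) \leq \OPCS(P)$ is immediate: scaling each square of a cover of $P$ by factor $c$ yields a cover of $cP$ of the same size. For the reverse inequality $\OPCS(P) \leq \OPCS(cP)$, the plan is to exhibit an optimal cover of $cP$ consisting entirely of \emph{$c$-aligned} squares, i.e.\ axis-parallel squares whose corner coordinates are all multiples of $c$; such a cover descends via $\sigma_c^{-1}$ to a cover of $P$ by axis-parallel integer-coordinate squares of the same cardinality. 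To produce a $c$-aligned optimal cover, I would start with an optimal cover of $cP$ by \emph{maximal} squares (a standard reduction) and then iteratively transform it, exploiting that the boundary of $cP$ lies entirely on lines of the form $x = cj$ or $y = cj$ with $j \in \mathbb{Z}$; hence every $c \times c$ grid block of the plane is either wholly contained in $cP$ or disjoint from its interior.

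The main obstacle is the transformation step. Unlike in $P$, a maximal axis-parallel square of $cP$ need not itself be $c$-aligned: for instance, inside a $4 \times 8$ rectangle (with $c = 4$), the $4 \times 4$ square $[0, 4] \times [1, 5]$ is maximal but not $c$-aligned, and is not contained in any $c$-aligned square of that rectangle. Hence one cannot inflate each cover square to its $c$-aligned hull in isolation; shifts must be coordinated across the whole cover. I would argue that any cover containing a non-$c$-aligned maximal square admits a translation of that square along an unblocked direction towards $c$-alignment which preserves the covering property, any region it newly uncovers being absorbable by compensating shifts of neighbouring squares---the key point being that along the unblocked direction the adjacent slab of $cP$ is itself $c$-aligned, so the swap stays inside $cP$. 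Iterating this exchange strictly decreases the nonnegative integer potential $\sum_{i} \big((x_i \bmod c) + (y_i \bmod c)\big)$ summed over the bottom-left corners of the cover squares, so the procedure terminates at a $c$-aligned optimal cover, completing the argument.
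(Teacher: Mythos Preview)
The paper states this proposition without proof, treating it as self-evident, so there is no ``paper's approach'' to compare against beyond the implicit one-line argument.

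Your rotation argument is fine. For scaling, however, you have made the problem much harder than it is by implicitly assuming that covering squares must have integer-coordinate corners. Nothing in the definition of \OPCS requires this: a \emph{valid square} is simply any axis-parallel square contained in $P$ (Definition~2.5 in the paper), with no lattice constraint. Consequently the reverse inequality $\OPCS(P) \le \OPCS(cP)$ is just as immediate as the forward one: if $\{S_1,\ldots,S_k\}$ is an optimal cover of $cP$, then $\{\sigma_c^{-1}(S_1),\ldots,\sigma_c^{-1}(S_k)\}$ is a set of $k$ axis-parallel squares whose union is exactly $P$, hence a valid cover of $P$. The map $\sigma_c$ is a bijection of the plane that preserves the class of axis-parallel squares and preserves containment, so covers correspond bijectively in both directions. (Indeed, the same argument works for scaling by any positive real; the restriction to integral factors is only so that the scaled polygon again has integral vertices and is a legal \OPCS input.)

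Because of this, your entire ``$c$-aligned optimal cover'' programme---the potential function, the coordinated shifts, the acknowledged obstacle---is unnecessary. It may well be true that $cP$ always admits a $c$-aligned optimal cover, but you do not need it here, and your sketch of the exchange argument is not yet a proof: you have not established that a non-$c$-aligned maximal square always has an ``unblocked direction'' along which it can be shifted without uncovering points that no neighbour can absorb, nor that the compensating shifts themselves decrease (or at least do not increase) the potential.
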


We formally define convex and concave vertices of an orthogonal polygon.
\begin{definition}[Concave and convex vertices]\label{def:conv-conc-vert}
A vertex $v_i$ of an orthogonal polygon $P$ is a convex vertex if $v_i$ subtends an angle of $90^\circ$ to the interior of $P$. A vertex $v_j$ of $P$ is a concave vertex if it subtends an angle of $270^\circ$ to the interior of $P$(Figure~\ref{fig:con-vert}). 
\end{definition}

\begin{figure} [!htbp]
\centering    
\includegraphics[height=30mm]{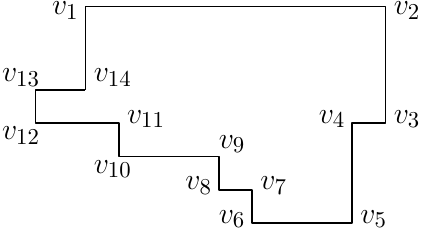}
\caption{Convex vertices: $\{v_1, v_2, v_3, v_5, v_6, v_8, v_{10}, v_{12}, v_{13}\}$, concave vertices: $\{v_4, v_7, v_9, v_{11}, v_{14}\}$}\label{fig:con-vert}
\end{figure}

It is interesting to look at maximal squares that are completely contained in the region of the input polygon $P$.
\begin{definition}[Valid square]
    An axis parallel square $S$ is said to be valid if $S$ is fully contained in the region of $P$.
\end{definition}

\begin{definition}[Maximal Square]\label{def:maximal}
    A valid square $S$ is a maximal square, if no other valid square with a larger area contains $S$ entirely (Figure~\ref{fig:max-sq}).
\end{definition}

\begin{figure} [ht]
    \hfill 
    \begin{subfigure}[t]{0.4\textwidth} 
        \centering 
        \includegraphics[width=30mm]{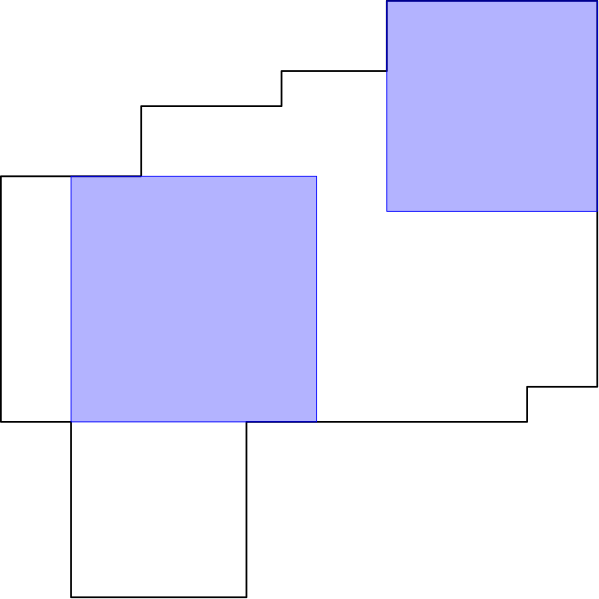}
        \caption{Some maximal squares}
    \end{subfigure} \hfill
    \begin{subfigure}[t]{0.4\textwidth}
        \centering
        \includegraphics[width=30mm]{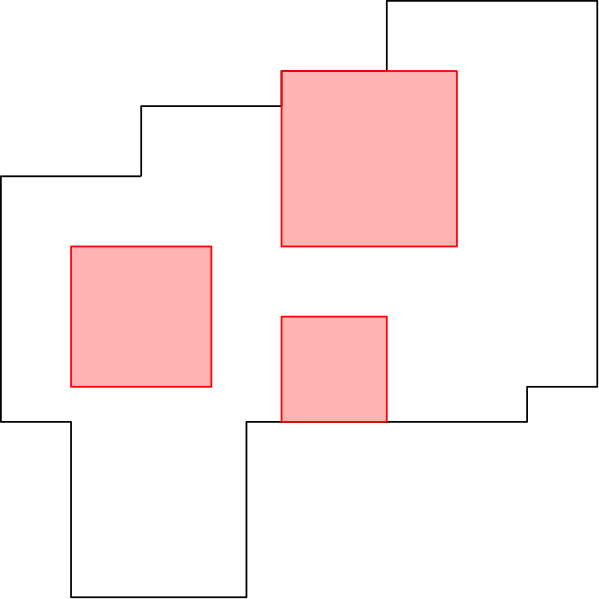}
        \caption{Valid squares which are not maximal}
    \end{subfigure} \hfill \hfill \hfill
    \caption{Orthogonal Polygons and Covering with Squares} \label{fig:max-sq}
\end{figure}

Note that for a minimum square covering of an orthogonal polygon $P$, some squares may be maximal squares while the other squares may be valid squares which are not maximal. However, for each of such valid square $S$ which is not maximal, we can replace it with any maximal square containing $S$ to obtain another minimum square covering of $P$ containing only maximal squares.

\begin{observation} \label{prop:max-sq-cov}
    There exists a minimum square covering of any orthogonal polygon $P$ where every square of the covering is a maximal square. 
\end{observation}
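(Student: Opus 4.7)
The plan is a straightforward exchange argument, essentially formalizing the intuition already given in the paragraph preceding the observation. Start with an arbitrary minimum square covering $\mathcal{S}^\star = \{S_1, S_2, \ldots, S_m\}$ of $P$, where $m = \textsc{OPCS}(P)$. The goal is to transform $\mathcal{S}^\star$ into another covering of the same size in which every square is maximal, by replacing non-maximal squares one at a time.

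First I would establish the following auxiliary fact: every valid square $S$ is contained in some maximal square. This can be argued by a simple monotone-expansion/termination argument. Consider the family $\mathcal{F}(S)$ of valid squares that contain $S$, ordered by area. This family is nonempty (it contains $S$ itself) and the area of any valid square is bounded above by the area of $P$, so the supremum of areas in $\mathcal{F}(S)$ is finite. A straightforward compactness argument (or, since vertex coordinates are integral and side lengths of valid squares are bounded, working with a finite search space of candidate positions) shows that this supremum is attained by some valid square $S^\ast \in \mathcal{F}(S)$. By definition, $S^\ast$ is a maximal valid square, and $S \subseteq S^\ast$.

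Now I would carry out the exchange. Iterating over each square $S_i \in \mathcal{S}^\star$ that is not maximal, replace $S_i$ by a maximal square $S_i^\ast \supseteq S_i$ guaranteed by the previous step. Let $\mathcal{S}^{\star\star}$ denote the resulting collection. I would then verify three properties: (i) $|\mathcal{S}^{\star\star}| = |\mathcal{S}^\star| = m$, since replacement is one-for-one; (ii) every square in $\mathcal{S}^{\star\star}$ is a valid square contained in $P$, since maximal squares are by definition valid; and (iii) $\bigcup_{S \in \mathcal{S}^{\star\star}} S \supseteq \bigcup_{S \in \mathcal{S}^\star} S = P$, since each $S_i$ was replaced by a superset $S_i^\ast$. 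Conversely, each $S_i^\ast \subseteq P$ by validity, so the union equals $P$ exactly. Thus $\mathcal{S}^{\star\star}$ is a valid covering of $P$ by exactly $m$ squares, all of which are maximal.

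The only mildly delicate step is the auxiliary fact that every valid square is contained in a maximal one; the rest is a bookkeeping argument. Since $P$ is a bounded polygon, the argument can be made completely elementary without invoking any nontrivial topology: among all valid squares containing $S$, pick one of largest side length (existence follows since side lengths are bounded by the diameter of $P$, and a supremum of valid side lengths is again realized by a valid square by the closedness of $P$ as a region in the plane).
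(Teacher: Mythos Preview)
Your proposal is correct and follows exactly the exchange argument the paper sketches in the paragraph immediately preceding the observation: replace each non-maximal square in a minimum covering by a maximal square containing it. The paper treats this as essentially self-evident and gives no formal proof, so your write-up is simply a more careful version of the same idea (including the auxiliary existence claim that every valid square sits inside some maximal one, which the paper tacitly assumes).
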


We restate the definition of knobs as in previous works \cite{bar1994, aupperle1988covering}.

\begin{definition}[Knob]\label{def:knobs}
    A knob is an edge $(v_i, v_{i+1})$ of the orthogonal polygon $P$ such that $v_i$ and $v_{i+1}$ are both convex vertices of $P$ (Figure~\ref{fig:knobs}). 
    
    Suppose $v_i$ and $v_{i+1}$ share the same x coordinate. Then the edge $(v_i, v_{i+1})$ is a left knob if it is a left boundary edge (\textit{i.e.} all points just to the left of the edge are outside $P$), otherwise they form a right knob. 
    
    Similarly, when $v_i$ and $v_{i+1}$ share the same y coordinate, then the edge $(v_i, v_{i+1})$ is a top knob if it is a top boundary edge (\textit{i.e.} all points just to the top of the edge are outside $P$), otherwise they form a bottom knob. 
\end{definition}

Next, we define a non-knob convex vertex.
\begin{definition}[Non-knob convex vertex] \label{def:non-knob-vert}
    A convex vertex $v_i$ is said to be a non-knob convex vertex if neither $(v_i, v_{i + 1})$ nor $(v_{i - 1}, v_i)$ is a knob. Equivalently, both $v_{i-1}$ and $v_{i + 1}$ are concave vertices.
\end{definition}

We introduce a special structure called \emph{strips} as follows.

\begin{definition}[Strip] \label{def:strip}
    A strip of an orthogonal polygon $P$ is a maximal axis-parallel non-square rectangular region $Y$ lying inside $P$, such that each of the longer parallel side of $Y$ is completely contained in a polygon edge of $P$ (Figure~\ref{fig:strip}).  
\end{definition}

\begin{figure}[ht]  
    \hfill    
    \begin{subfigure}[t]{0.45\textwidth}
        \includegraphics[width=\textwidth]{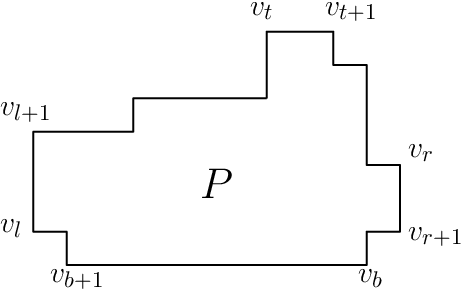}
        \caption{$(v_t, v_{t + 1})$ is a top knob, $(v_l, v_{l + 1})$ is a left knob, $(v_r, v_{r + 1})$ is a right knob, $(v_b, v_{b + 1})$ is a bottom knob}\label{fig:knobs} 
    \end{subfigure} \hfill
    \begin{subfigure}[t]{0.45\textwidth}
        \includegraphics[width=\textwidth]{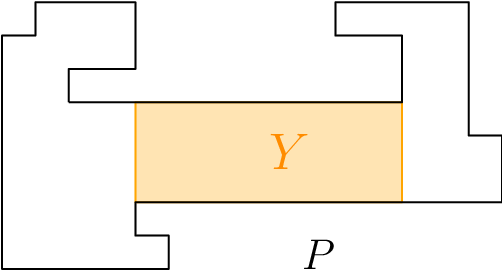}
        \caption{Strip $Y$ of the polygon $P$}\label{fig:strip} 
    \end{subfigure} \hfill \hfill
    \caption{Knobs and Strips}
\end{figure}

We make the following observation

\begin{observation}\label{obs:one-strip-per-pair}
    For any pair of parallel polygon edges $e_1$, $e_2$ of $P$, there can be at most one strip with its sides overlapping with $e_1$ and $e_2$. 
\end{observation}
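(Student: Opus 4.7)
The plan is a proof by contradiction that exploits both the maximality clause in the definition of a strip and the absence of holes in $P$. I would begin by assuming two distinct strips $Y_1, Y_2$ both have their longer sides overlapping the same parallel pair $e_1, e_2$. By \cref{prop:invariants} I can rotate so that $e_1, e_2$ are horizontal, lying on $y = y_{\max}$ and $y = y_{\min}$; each $Y_i$ then spans the full vertical gap and takes the form $Y_i = [a_i, b_i] \times [y_{\min}, y_{\max}]$, and after relabelling $a_1 \le a_2$.

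Next I would dispose of the trivial containment sub-cases. If $b_1 \ge b_2$, then $Y_2 \subseteq Y_1$, contradicting that $Y_2$ is a maximal such rectangle; if $a_1 = a_2$ and $b_1 < b_2$, then $Y_1 \subsetneq Y_2$, contradicting maximality of $Y_1$. Thus the only case left is $a_1 < a_2$ and $b_1 < b_2$. I would then form the enclosing rectangle $R = [a_1, b_2] \times [y_{\min}, y_{\max}]$, which strictly contains $Y_1$. Since each of $[a_1, b_1]$ and $[a_2, b_2]$ is contained in the horizontal extent of both $e_1$ and $e_2$, their convex hull $[a_1, b_2]$ is too; hence the top and bottom of $R$ lie in $e_1$ and $e_2$ respectively.

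The main obstacle is to establish $R \subseteq P$, since once this holds $R$ becomes a larger rectangle with longer sides in $e_1, e_2$ strictly containing $Y_1$, contradicting $Y_1$'s maximality as a strip. For this step I would argue topologically using the hole-freeness of $P$: if some $p \in R$ is not in $P$, then, as $\mathbb R^2 \setminus P$ has only the unbounded component, some path $\gamma$ joins $p$ to infinity while avoiding $P$, so $\gamma$ must exit $R$ through its boundary. However, every side of $R$ is contained in $P$---the top lies in $e_1 \subseteq \partial P$, the bottom in $e_2 \subseteq \partial P$, the left side coincides with the left side of $Y_1 \subseteq P$, and the right side coincides with the right side of $Y_2 \subseteq P$---so $\gamma$ has nowhere to exit, giving the desired contradiction.
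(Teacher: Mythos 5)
Your proof is correct. Note that the paper itself states \cref{obs:one-strip-per-pair} without any proof, so there is nothing to compare against line by line; your argument is a legitimate way to fill the gap. The two ingredients you isolate are exactly the ones that matter: (i) the maximality clause in \cref{def:strip}, which both disposes of the nested sub-cases and turns the enclosing rectangle $R=[a_1,b_2]\times[y_{\min},y_{\max}]$ into a contradiction once $R\subseteq P$ is known (your observation that $R$ is automatically non-square with horizontal longer sides, because it is wider than the strip $Y_1$, is the small point one must not forget); and (ii) hole-freeness, used via the topological escape-path argument to force $R\subseteq P$ from the fact that all four sides of $R$ lie in $P$ (top and bottom inside $e_1,e_2$ because the projections of the two strips onto these edges have convex hull inside the segments, left and right sides borrowed from $Y_1$ and $Y_2$). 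The reliance on hole-freeness is genuinely necessary rather than cosmetic: a rectangle with a small square hole between its top and bottom edges admits two strips spanning the same pair of edges, so any correct proof must invoke the absence of holes, and yours does so cleanly. The appeal to \cref{prop:invariants} for the rotation is slightly off-label (that proposition is about covering numbers, not about strips), but rotating coordinates so that $e_1,e_2$ are horizontal is a harmless normalization and does not affect correctness.
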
 

\subparagraph*{Efficient representation of squares.} We introduce the following notations to represent a sequence of squares placed side by side. We start by defining a rec-pack, which is a $t \times \eta t$ or $\eta t \times t$ rectangle lying inside the given orthogonal polygon $P$, where $\eta \in \mathbb N$. 

\begin{definition}[Rec-pack] \label{def:rec-pack}
    A rec-pack of $P$  is an axis parallel rectangle $R$ of dimension $t \times \eta t$ or $\eta t \times t$ that lies completely inside $P$ (Figure~\ref{fig:rec-pack}). We define the length $t$ of the shorter side of the rec-pack $R$ to be the \emph{width} of $R$ and the aspect ratio $\eta$ to be the \emph{strength} of $R$.
\end{definition}

\begin{remark}
    A valid square is a rec-pack of strength $1$. Any square is a \emph{trivial rec-pack}.
\end{remark}

For a rec-pack $R$ with width $t$ and strength $\eta$, it can be covered with exactly $\eta$ many $t \times t$ valid squares. We formally define this operation as an \emph{extraction}.

\begin{definition}[Extraction of rec-pack]\label{def:extraction}
    Given an orthogonal polygon $P$ and a rec-pack $R$ of width $t$ and strength $\eta$, we define the extraction of $R$, $\Ex(R)$ to be the set of $\eta$ valid squares of size $t \times t$ placed side by side to cover the entire region of $R$ (Figure~\ref{fig:extraction}). 
    
    For a set $\mathscr R$ of rec-packs, we define its extraction $\Ex(\mathscr R)$ to be the union of the extraction of each rec-pack in $\mathscr R$. 

    $$\Ex(\mathscr R) = \bigcup \limits_{R \in \mathscr R} \Ex(R)$$
\end{definition}

\begin{figure}[ht]  
    \hfill    
    \begin{subfigure}[t]{0.40\textwidth}
        \includegraphics[width=\textwidth]{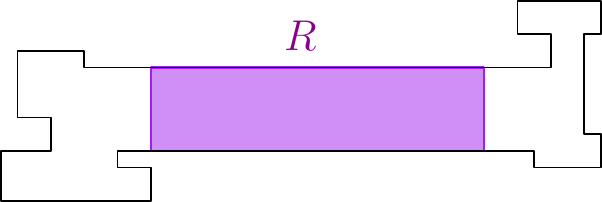}
        \caption{Rec-pack $R$ with strength $4$}\label{fig:rec-pack} 
    \end{subfigure} \hfill
    \begin{subfigure}[t]{0.40\textwidth}
        \includegraphics[width=\textwidth]{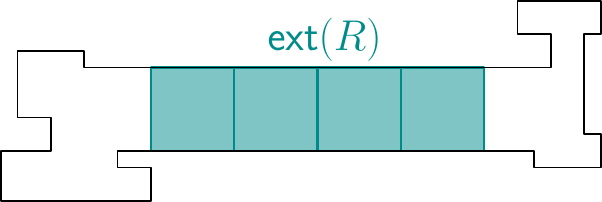}
        \caption{Extraction $\Ex(R)$ of rec-pack $R$}\label{fig:extraction} 
    \end{subfigure} \hfill \hfill
    \caption{Rec-packs and Extractions}
\end{figure}

\begin{remark}
    For a square $S$ (a trivial rec-pack), the extraction is a singleton set containing $S$ itself, \textit{i.e.} $\Ex(S) = \{S\}$. Moreover, for any set $\mathscr S$ of squares, we have $\Ex(\mathscr S) = \mathscr S$.
\end{remark}

This structure helps us to efficiently represent a solution of a square covering. If the set of squares constituting the solution has a subset of 
$\eta$ many side-by-side placed $t \times t$ valid squares, they can be simply replaced by the rec-pack defined by their union (of strength $\eta$ and width $t$). 

To exploit this, we need to formally define a covering using rec-packs.

\begin{definition}
    A set of rec-packs $\mathscr R$ is said to cover the input orthogonal polygon $P$ if the set of squares $\Ex(\mathscr R)$ cover the polygon $P$. Moreover, $\mathscr R$ is said to be a minimum covering, if $\Ex(\mathscr R)$ is a minimum covering of $P$, and no two rec-packs produce the same sqaure on extraction, \textit{i.e.} for all $R_1, R_2 \in \mathscr R$, $\Ex(R_1) \cap \Ex(R_2) = \emptyset$.
\end{definition}

\begin{remark}
    For any set $\mathscr S$ of squares covering $P$, $\mathscr S$ is also a set of rec-packs covering $P$. Similarly, for any minimum covering of squares $\mathscr S$ of $P$, $\mathscr S$ can also be viewed as a set of rec-packs which is a minimum covering of $P$. Similarly, if a set of rec-packs $\mathscr R$ is a covering (minimum covering) of $P$, then the set of squares $\Ex(\mathscr R)$ is a set of trivial rec-packs which is also a covering (minimum covering) of $P$.
\end{remark}




\subparagraph*{Graph theory.} Given a graph $G$, we denote the node set by $V(G)$ and the arc set by $E(G)$. The neighbourhood of a node $v\in V(G)$ is denoted by $N(v)$. The closed neighbourhood of $v$ is denoted by $N[v]$. Given a graph $G$, a subgraph $H$ is an induced subgraph of $G$ if $V(H) \subseteq V(G)$ and $E(H)$ contains all arcs of $E(G)$ such that both endpoints are in $V(H)$. In this case, we also say that the node set $V(H)$ induces the subgraph $H$. Recall the definition of chordal graphs and simplicial nodes \cite{10.1007/978-1-4613-8369-7_1/chordal}.

\begin{definition}[Chordal graphs and simplicial nodes] \label{def:chordal}
     A chordal graph $G$ is a simple graph in which every cycle of length at least four has a chord. A node $v \in V(G)$ is simplicial if $N[v]$ induces a complete graph.
\end{definition}

We get the following result directly from the definition.

\begin{proposition}\label{prop:sub-chordal}
    Any induced subgraph of a chordal graph is also chordal. 
\end{proposition}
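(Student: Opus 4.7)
The plan is to verify the defining property of chordality directly: every cycle of length at least four in the subgraph has a chord. Let $G$ be a chordal graph and let $H$ be an induced subgraph of $G$. I would start by fixing an arbitrary cycle $C = u_1 u_2 \cdots u_\ell u_1$ in $H$ of length $\ell \geq 4$, and showing that $C$ must admit a chord in $H$.

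The first observation is that $C$ is also a cycle in $G$. This is immediate because $V(H) \subseteq V(G)$ and every arc of $C$ belongs to $E(H) \subseteq E(G)$. Since $G$ is chordal by hypothesis, the cycle $C$ must have a chord in $G$, that is, there exist indices $i, j$ with $1 \leq i < j \leq \ell$ and $j \neq i+1$ (and $(i,j) \neq (1, \ell)$), such that $\{u_i, u_j\} \in E(G)$.

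The key step is to transfer this chord back to $H$. Both endpoints $u_i$ and $u_j$ lie in $V(H)$ since they are nodes of $C \subseteq H$. Because $H$ is an \emph{induced} subgraph of $G$, by the definition given in the preliminaries, $E(H)$ contains every arc of $E(G)$ whose endpoints are both in $V(H)$. Hence $\{u_i, u_j\} \in E(H)$, and this arc is a chord of $C$ in $H$. Since $C$ was arbitrary, every cycle of length at least four in $H$ has a chord, so $H$ is chordal.

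There is no real obstacle here: the whole argument is a one-line consequence of the definition of \emph{induced} subgraph, which guarantees that no arcs with both endpoints in $V(H)$ are lost when passing from $G$ to $H$. The only point worth being careful about is ensuring that the chord provided by chordality of $G$ is witnessed by an actual edge of $G$ between two nodes of $H$ — which is automatic since the chord connects two nodes of $C$ and $V(C) \subseteq V(H)$.
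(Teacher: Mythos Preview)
Your proof is correct and is precisely the standard ``directly from the definition'' argument that the paper alludes to (the paper does not actually spell out a proof, merely stating that the result follows directly from the definition of chordality and induced subgraphs). There is nothing to add or correct.
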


We now state \emph{Dirac's Lemma} on chordal graphs \cite{10.1007/978-1-4613-8369-7_1/chordal}.

\begin{proposition}[Dirac's lemma] \label{prop:dirac}
    Every chordal graph has a simplicial node. Morever, if a chordal graph is not a complete graph, it has at least two non-adjacent simplicial nodes.
\end{proposition}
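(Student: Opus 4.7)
The plan is to prove the stronger second statement by induction on $|V(G)|$, since the first statement follows as an immediate corollary: if $G$ is complete every node is trivially simplicial, and otherwise the stronger statement supplies (two) simplicial nodes.

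For the inductive step, assume $G$ is chordal but not complete, so there exist two non-adjacent nodes $a, b \in V(G)$. Let $S$ be an inclusion-minimal $a$-$b$ separator, and let $A$ and $B$ be the vertex sets of the connected components of $G - S$ containing $a$ and $b$ respectively. The first key sub-claim I would establish is that $S$ induces a clique in $G$. To see this, suppose $s_1, s_2 \in S$ are non-adjacent; by minimality of $S$, each of $s_1, s_2$ has at least one neighbour in $A$ and at least one neighbour in $B$. Picking shortest paths $\pi_A$ through $A$ and $\pi_B$ through $B$ from $s_1$ to $s_2$ and concatenating them yields a cycle of length at least $4$ that is induced (no chord can jump between $A$ and $B$ because $S$ separates them, and the paths themselves are shortest and hence chord-free), contradicting chordality.

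Next, I would apply the induction hypothesis to the strictly smaller induced subgraphs $G[A \cup S]$ and $G[B \cup S]$, each of which is chordal by \cref{prop:sub-chordal}. Consider $G[A \cup S]$: if it is complete then every node of $A$ is simplicial in $G[A \cup S]$, and since $S$ separates $A$ from the rest of $G$, the neighbourhood of such a node in $G$ is contained in $A \cup S$ and is therefore still a clique in $G$. Otherwise, the induction hypothesis produces two non-adjacent simplicial nodes in $G[A \cup S]$; since $S$ is a clique, at most one of them can lie in $S$, so at least one lies in $A$, and the same ``neighbourhood trapped inside $A \cup S$'' argument shows it is simplicial in $G$. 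By the symmetric argument applied to $B$, we extract a simplicial node of $G$ lying in $B$. These two simplicial nodes are in different components of $G - S$ and hence non-adjacent, completing the induction.

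The only delicate step is the clique property of the minimal separator $S$; everything else is careful bookkeeping. The hard part is ensuring that the cycle produced from the two shortest paths is genuinely induced, which requires using both (i) the minimality of $S$ to guarantee each $s_i$ reaches both sides, and (ii) the definition of shortest path together with the fact that $A$ and $B$ lie in distinct components of $G - S$ so that no chord can cross between $\pi_A$ and $\pi_B$ except through $s_1$ or $s_2$, which are the only shared endpoints by construction.
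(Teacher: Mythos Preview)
Your proof is correct and is in fact the standard textbook argument for Dirac's lemma: show that every minimal vertex separator in a chordal graph induces a clique, then recurse on the two ``sides'' $G[A\cup S]$ and $G[B\cup S]$ to extract a simplicial node from each. The delicate step you flag---that the concatenation of the two shortest $s_1$--$s_2$ paths through $A$ and through $B$ is an induced cycle---is handled correctly: chords inside either path would shorten it, chords between the two paths are blocked because $S$ separates $A$ from $B$, and chords from $s_1$ or $s_2$ to a non-adjacent internal vertex of either path would again shorten that path. One minor omission is an explicit base case, but for $|V(G)|\le 2$ the statement is immediate.

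As for comparison with the paper: the paper does not prove this proposition at all. It is stated with a citation to the chordal-graph literature and used as a black box. So your write-up supplies a self-contained proof where the paper simply imports the result.
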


We finally reiterate one of the major results of the works of Aupperle et. al \cite{aupperle1988covering}.

\begin{proposition}\label{prop:G-chordal}
    For a simple orthogonal polygon $P$ (without holes), the associated graph $G(P)$ is chordal. 
\end{proposition}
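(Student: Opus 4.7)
The plan is to prove Proposition~\ref{prop:G-chordal} by exhibiting a perfect elimination ordering for $G(P)$ via induction on the number of blocks $N$. The geometric engine is the observation that the block at any convex vertex of $P$ is simplicial in $G(P)$, which together with Dirac's lemma (Proposition~\ref{prop:dirac}) drives the elimination.

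The key lemma I would prove first is: if $c$ is a convex vertex of $P$ and $B_c$ denotes the unique block incident to $c$, then $B_c$ is simplicial in $G(P)$. Without loss of generality assume $c$ is a ``lower-left'' convex corner, so locally the interior of $P$ lies in the upper-right quadrant from $c$. Any valid square $S \supseteq B_c$ must then have its lower-left corner exactly at $c$: shifting the corner even slightly left or down would make $S$ cross the convex vertex and exit $P$, while shifting it right or up would fail to cover $B_c$ entirely. Consequently all valid squares through $B_c$ share the corner $c$ and are linearly ordered by inclusion; let $S^{\star}$ be the maximum among them. Every neighbour $B'$ of $B_c$ in $G(P)$ lies in some valid square containing $B_c$, hence in $S^{\star}$. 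Thus $N[B_c]$ coincides with the set of blocks inside $S^{\star}$, and by Proposition~\ref{prop:sq-clique} this is a clique, so $B_c$ is simplicial.

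Given the lemma, the inductive step selects a convex corner $c$ of $P$ (every orthogonal polygon has at least four of them), identifies the simplicial block $B_c$, and peels off $B_c$ (or a slightly larger carefully chosen corner sliver) to obtain a smaller orthogonal polygon $P'$ with the property that $G(P') = G(P) - B_c$. Applying the induction hypothesis to $G(P')$ and then re-attaching the simplicial vertex $B_c$, which preserves chordality, yields the chordality of $G(P)$. The base case $N = 1$ is trivial.

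The main obstacle I anticipate is the graph-level faithfulness of the geometric reduction, i.e.\ ensuring $G(P') = G(P) - B_c$. Two dangers arise: (a)~$P \setminus B_c$ may fail to be a simply-connected orthogonal polygon (for instance if $B_c$ is the tip of a width-one protrusion), and more subtly (b)~removing $B_c$ can destroy edges between surviving blocks when the only valid squares witnessing an edge $B_1 B_2$ in $G(P)$ happen to pass through $B_c$. A concrete bad scenario for (b) is when $B_1$ sits in the column just above $c$ and $B_2$ in the row just right of $c$, so the minimum square enclosing $B_1 \cup B_2$ is anchored at $c$ and must contain $B_c$. To handle (a) I would pick the convex corner more carefully, possibly peeling off an entire knob, row, or column of blocks (all of which lie in the clique $N[B_c]$, so the collective simplicial argument still applies). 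To handle (b) I would invoke the absence of holes: simply-connectedness of $P$ should allow any witness square through $c$ to be ``slid away'' from $c$ into an alternate valid square preserving the edge, via a careful case analysis near the corner. This is precisely the step that forces the no-holes hypothesis, since with holes $G(P)$ can genuinely contain induced $4$-cycles winding around a hole and chordality fails outright.
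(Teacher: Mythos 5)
A preliminary remark: the paper does not prove this proposition itself --- it is quoted from Aupperle et al.\ --- so your proposal is judged on its own merits and against the known direct argument (which shows every cycle of length at least four in $G(P)$ has a chord), not against an in-paper proof.

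Your corner lemma is correct: every valid square containing the block $B_c$ at a convex vertex $c$ must have a corner at $c$, these squares are nested, and the largest of them (exactly the paper's $MCS(c)$, cf.\ \cref{lem:unique-mcs}) contains $N[B_c]$, so $B_c$ is simplicial in $G(P)$. The genuine gap is the inductive transfer. Your plan needs $G(P') = G(P) - B_c$ for $P' = P \setminus B_c$, and this is false for exactly the reason you flag in (b) --- and the repair you hope for cannot exist. If $B_1$ is the block directly above $B_c$ and $B_2$ the block directly to its right, every square containing both contains their bounding box and hence $B_c$; so the edge $B_1B_2$ of $G(P)$ has no witness square avoiding $B_c$, and it simply disappears in $G(P\setminus B_c)$ --- no ``sliding'' of a witness is possible. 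Concretely, for $P$ a $2\times 2$ square, $G(P) = K_4$, while deleting a corner block leaves an L-tromino whose associated graph is edgeless on three vertices, so $G(P') \neq G(P)-B_c = K_3$. The knob/row/column variant fails the same way (peel the top row of a $3\times 3$ square: the two bottom corner blocks are adjacent in $G(P)$ but not in the graph of the remaining $3\times 2$ rectangle). Hence chordality of $G(P')$ tells you nothing about the induced subgraph $G(P)-B_c$, and ``re-attaching the simplicial vertex'' rebuilds the wrong graph: going from $G(P')$ to $G(P)$ also adds edges among the surviving blocks, which could a priori create chordless cycles.

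More fundamentally, exhibiting one simplicial vertex of $G(P)$ does not establish chordality; you need a perfect elimination ordering, i.e.\ a simplicial vertex in every residual induced subgraph of $G(P)$, and those residual subgraphs (blocks of the \emph{original} $P$ with squares still measured inside the original $P$) are not associated graphs of smaller polygons, so your corner lemma does not apply to them. Either prove a statement at that level of generality directly, or abandon the polygon-shrinking induction in favour of the direct chord-in-every-long-cycle argument using simple connectivity; the paper only ever uses simplicial vertices of induced subgraphs (via \cref{prop:sub-chordal} and Dirac's lemma) \emph{after} chordality of $G(P)$ is already in hand, so mimicking that machinery here would be circular.
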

\section{Overview of the Algorithms and the Reduction}\label{sec:overview}

\subsection{Polynomial-time algorithm for \OPCS}

The algorithm by Bar-Yehuda and Ben-Chanoch~\cite{bar1994} starts with an empty set and keeps adding \emph{essential} or \emph{unambiguous} squares one by one, while maintaining the invariant that there is always a minimum covering containing the current set of squares. This algorithm runs in time $\OO(n + \text{OPT})$, OPT being the minimum number of squares needed to cover the polygon. Recall that OPT may not be bounded by a polynomial in $n$. Our approach will be the same, except along with unambiguous squares, we will also use rec-packs while building our solution. This is because, our aim is to design an algorithm that runs in time polynomial in the number of vertices $n$ of the polygon.

The basic framework will be to keep adding squares/rec-packs to the existing set of rec-packs, that preserve the invariant: the currently constructed set of rec-packs can always be extended to a minimum covering of $P$. We stop as soon as the set of rec-packs cover the entire polygon $P$; by the invariance property, this must be a minimum covering.

The primary challenge for designing an algorithm that runs in time polynomial in the number of vertices, lies in finding unambiguous squares if they exist, and even figuring out what happens if there are no unambiguous squares. Firstly, we prove that there will always be at least one unambiguous square. This comes from the fact that there will always be a simplicial node in any induced subgraph of the chordal graph $G(P)$. However, finding an unambiguous square is still not trivial to achieve in time polynomial in $n$.

We break down the task of finding unambiguous squares into two subtasks: (i) obtaining a polynomial-sized set of squares which contain at least one unambiguous square, and (ii) detecting which of them is unambiguous.

Using this framework directly would still give us an output-sensitive algorithm, since we are enumerating each square one by one. However, this can be sped up by including rec-packs to the partial solution. Suppose at an instant, the algorithm finds an unambiguous square $S$ and adds it to the current set of rec-packs. At such a stage, let $\mathscr R$ be the set of rec-packs currently constructed by our algorithm, which can be extended to a minimum covering $\mathscr R^\star$ of $P$. We ensure that the algorithm further detects if $S$ is in some non-trivial rec-pack $R$ in $\mathscr R^\star$. In such a case, we replace the square $S$ by the rec-pack $R$ in our current set of rec-packs. Since $R$ may contain multiple new squares, it is possible that this step includes multiple squares to our solution in a single iterations, and this might result in a reduction in the number of iterations. In fact, we show that $\OO(n^2)$ iterations are enough, making the running time guarantee of the algorithm to be polynomial in $n$, and independent of the output. The exact time complexity turns out to be $\OO(n^{10})$; details of this are discussed in \cref{sec:polytime}.

\subsection{Recursive algorithm using separating squares}

Next, we consider the problem $p$-\OPCS, where the number of knobs is at most $k$. We design a recursive algorithm for $p$-\OPCS which is faster than our previous algorithm if $k$ is small enough.

For our recursive algorithm, we crucially use the structure of separating squares: maximal squares that have a corner at a non-knob convex vertex of $P$, and deletion of which separates the polygon. Let $P$ have $l$ non-knob convex vertices. We find a separating square $S$, the deletion of which separates the polygon into smaller polygons $Q'_1, Q'_2, \ldots, Q'_{t'}$, $t' \le n$. We now appropriately group $Q'_1, \ldots Q'_{t'}$ to get $Q_1, \ldots, Q_t$, $t \le 12$ such that each of the polygons $Q_1 \cup S, Q_2 \cup S, \ldots, Q_t \cup S$ have at most $k$ knobs, and at most $l - 1$ non-knob convex vertices. This allows us to recurse on the polygons $Q_1 \cup S, Q_2 \cup S, \ldots, Q_t \cup S$. The base case of the recursion is when there are no non-knob convex vertices, and this is solved by the algorithm of \cref{sec:polytime}.

The running time analysis of this recursive framework highly relies on the following results:
\begin{itemize}
    \item There are $\OO(n)$ recursive steps and $\OO(n)$ base cases
    \item Each recursive step requires $\OO(n)$ time
    \item The number of vertices for a polygon that appears as a base case is $\OO(k)$
\end{itemize}
This framework therefore gives us a running time guarantee of $\OO(n^2 + n \cdot k^{10})$. This is discussed in detail, in \cref{sec:knobbed}.

\subsection{Reducing {\sc OPCSH} from {\sc Planar 3-CNF}}

The proof of NP-hardness of {\sc OPCSH} by Aupperle et. al~\cite{aupperle1988covering} incorrectly reduces from the problem of the existence for a tautology of a {\sc Planar 3-CNF} instance (linear-time solvable), instead of the problem of satisfiability of a {\sc Planar 3-CNF} instance (NP-complete~\cite{doi:10.1137/0211025}). This is because the gadget used for clauses can be covered by 12 squares if all literals appear as {\sf true}; otherwise 13 squares are required. Intuitively, when the literals of some clause are such that some evaluate to {\sf true} and some evaluate to {\sf false}, the number of squares required to cover the gadget must be less than when all literals are {\sf false}. This is because the former setting makes the clause evaluate to {\sf true}, but the latter makes the clause evaluate to {\sf false}. 

We reduce {\sc OPCSH} from satisfiability of {\sc Planar 3-CNF}. In our construction, just like in~\cite{aupperle1988covering} we have variable gadgets, clause gadgets and connector gadgets. Our variable and connector gadgets are exactly the same as those in~\cite{aupperle1988covering}. However, we introduce a novel construction for a clause gadget. Our construction requires 29 squares to cover a clause gadget, if all literals appear as {\sf false}; otherwise 28 squares are required. This completes the proof of NP-hardness of \OPCSH.  This is discussed in detail, in \cref{sec:hardness}.
\section{Structural and Geometric Results}\label{sec:structure}

We prove some structural and geometric results of a minimum square covering of an orthogonal polygon $P$, and those of a set of rec-packs forming a minimum covering of $P$.

\subsection{Structure of minimum coverings}

Due to \cref{prop:max-sq-cov} we direct our focus to minimum coverings where every square is a maximal square. First, we define how a maximal square can be obtained from a convex vertex of the given orthogonal polygon $P$.

\begin{lemma} \label{lem:unique-mcs}
    The maximal square of an orthogonal polygon $P$, covering a convex vertex $v$ is unique and can be found in $\mathcal O(n)$.
\end{lemma}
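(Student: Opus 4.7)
The plan is to deduce uniqueness from the local geometry at $v$ and then compute the maximal square by a single pass over the boundary of $P$.

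Uniqueness becomes immediate once the position of $v$ relative to the square is pinned down. Since $v$ is a convex vertex, in a sufficiently small neighbourhood of $v$ the interior of $P$ occupies exactly one of the four quadrants determined by the two polygon edges meeting at $v$. A valid square containing $v$ must lie entirely inside $P$, so it cannot protrude into any of the other three quadrants; this forces $v$ to coincide with the specific corner of the square opposite the interior quadrant. By \cref{prop:invariants} we may assume the interior lies to the upper right of $v = (x_0, y_0)$, so every valid square containing $v$ has the form $S_s := [x_0, x_0+s] \times [y_0, y_0+s]$ for some $s > 0$. Because $S_s \subseteq S_{s'}$ whenever $s \le s'$, the set of admissible side lengths is a downward-closed interval $(0, s^\star]$, and the unique maximal valid square containing $v$ is $S_{s^\star}$.

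For the algorithm I would compute $s^\star$ by scanning once through all $n$ edges of $P$ and taking the minimum of the upper bounds each of them contributes to $s$. A horizontal top-boundary edge at height $y_h > y_0$ whose $x$-range intersects $[x_0, \infty)$ contributes a bound determined by $y_h - y_0$ and the edge's horizontal extent; a vertical right-boundary edge contributes an analogous bound; edges lying weakly to the left of or below $v$ contribute no relevant constraint. Concave reflex corners, being endpoints of two edges, are handled automatically by the per-edge scan. Each edge yields its candidate bound in $O(1)$, so the full scan runs in $O(n)$ and returns $S_{s^\star}$.

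The main subtlety is verifying that the per-edge constraint faithfully captures every way the growing square can first exit $P$; this reduces to a short case analysis based on the edge's orientation and on whether it is an upper, lower, left, or right boundary of $P$ relative to the cone $\{(x,y) : x \ge x_0,\, y \ge y_0\}$ emanating from $v$. Once these cases are resolved, correctness and the linear-time bound follow directly.
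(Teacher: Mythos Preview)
Your proposal is correct and follows essentially the same approach as the paper: both arguments pin $v$ to a fixed corner of any valid covering square, exploit the nesting $S_s \subseteq S_{s'}$ to obtain uniqueness of the maximal square, and compute its side length by a single $O(n)$ pass over the polygon edges, extracting an $O(1)$ constraint from each edge that intersects the quadrant at $v$. Your uniqueness argument is spelled out a bit more carefully than the paper's, but the underlying idea and the algorithm are the same.
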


\begin{proof}
    
    Let $\mathcal R$ denote the region defined by the interior of the $90^\circ$ angle between the rays formed by extending the two edges adjacent to $v$ (\textit{i.e.} a quadrant). The largest valid square $S$ in this quadrant with one corner at $v$ is the unique maximal square covering $v$. 
    
    To algorithmically find $S$ in $\mathcal O(n)$ time, we iterate over all polygon edges $(v_i, v_{i + 1})$, and check the following:

    \begin{itemize}
        \item if $(v_i, v_{i + 1})$ lies entirely outside $\mathcal R$, it can never constrain the maximal square covering $v$, so we continue to the next iteration
        \item if some portion of $(v_i, v_{i + 1})$ lies inside $\mathcal R$, the largest square in $\mathcal R$, with one vertex at $v$ and the strict interior of the square does not overlap with $(v_i, v_{i + 1})$. This can be done in $\mathcal O(1)$, just by comparing the coordinates of $v_i$, $v_{i+1}$ and $v$.
    \end{itemize}

    Finally, the square with the minimum area is the unique maximal square covering $v$, as all other edges allow larger (or equally large) squares.
\end{proof}

We define a maximal corner square of a vertex as follows.

\begin{definition}[Maximal Corner Square of a vertex]
    The Maximal Corner Square of a convex vertex $v$, or $MCS(v)$ is the unique maximal square covering $v$.    
\end{definition}

\begin{remark}\label{rem:mcs-always}
    Due to Proposition~\ref{prop:max-sq-cov} and Lemma~\ref{lem:unique-mcs}, there is a minimum square covering of an orthogonal polygon $P$ such that for each of the convex vertices $v$, $MCS(v)$ is one of the squares in the minimum covering. 
\end{remark}

Note that any maximal square should be bounded by either two vertical polygon edges or two horizontal polygon edges. 

\begin{lemma}\label{lem:2-side-poly}
    If $S$ is a maximal square of an orthogonal polygon $P$, then either both the vertical sides of $S$ overlap with some polygon edges in $P$ or both the horizontal sides of $S$ overlap with some polygon edges of $P$.
\end{lemma}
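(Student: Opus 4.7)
The approach is to prove the lemma by contraposition. Assume $S = [a, a+s] \times [b, b+s]$ is a valid square for which neither both vertical sides nor both horizontal sides overlap with polygon edges of $P$; then at least one vertical side (WLOG the right side) and at least one horizontal side (WLOG the top side) fail to overlap with any polygon edge. I will construct a strictly larger valid square containing $S$, contradicting the maximality of $S$.

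The key observation is that $\partial P$ is exactly the union of all polygon edges (each being a closed segment whose endpoints are polygon vertices), so a side of $S$ that shares no point with any polygon edge must lie entirely in the open interior $\mathrm{int}(P)$. By a standard compactness argument, since the right side and the top side are compact subsets of the open set $\mathrm{int}(P)$, there exists $\epsilon > 0$ such that the right strip $[a+s, a+s+\epsilon] \times [b, b+s]$ and the top strip $[a, a+s] \times [b+s, b+s+\epsilon]$ are both contained in $P$. Moreover, since the common endpoint $(a+s, b+s)$ of these two sides also lies in $\mathrm{int}(P)$, the corner region $[a+s, a+s+\epsilon] \times [b+s, b+s+\epsilon]$ is contained in $P$ for sufficiently small $\epsilon$.

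Combining these pieces, the square $S' = [a, a+s+\epsilon] \times [b, b+s+\epsilon]$ is the union of $S$, the right strip, the top strip, and the corner region, all of which lie in $P$. Hence $S' \subseteq P$, $S \subsetneq S'$, and $S'$ has strictly larger area, contradicting the maximality of $S$.

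The main subtle point in the proof is the precise interpretation of ``overlap''. The compactness argument above works cleanly under the natural reading that a side overlaps with a polygon edge if and only if the two share at least one common point. Under a stricter reading (e.g., requiring a shared segment of positive length), additional case analysis is needed to handle configurations such as a concave polygon vertex sitting precisely at the corner $(a+s, b+s)$, whose adjacent polygon edges extend away from $S$ but whose geometry still blocks extension of $S$ into the corner region. Resolving this corner-blocking case, by exploiting the axis-aligned structure of $\partial P$ near $(a+s, b+s)$ to show that the assumed absence of overlapping polygon edges forces the corner to be extensible, is what I expect to be the main obstacle in upgrading this sketch into a fully rigorous proof.
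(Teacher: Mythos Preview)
Your approach is essentially identical to the paper's: assume by contradiction that some horizontal side and some vertical side (WLOG top and right) fail to overlap with polygon edges, then grow $S$ while fixing the bottom-left corner to contradict maximality. Your version is more explicit (compactness argument, discussion of the meaning of ``overlap''), but the paper's proof is a two-line sketch of exactly this idea and leaves the same subtleties implicit.
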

\begin{proof}
    Assume the contrary. Suppose for some maximal square $S$ there is one horizontal side and one vertical side which does not overlap with any polygon edges of $P$. Further, we can assume that these are the top and the right edges of $S$ (\cref{prop:invariants}). Then we can further grow $S$ fixing its bottom-left corner, which contradicts that $S$ is a maximal square.
\end{proof}

\subsection{Simplicial nodes in the associated Graph and partial solutions}


We define a partial solution for {\sc OPCS} to be a subset of a set of rec-packs which is a minimum covering of $P$.


\begin{definition}[Partial solution]\label{def:part-sol}
    Given an orthogonal polygon $P$, a set of rec-packs $\mathscr R$ is a \emph{partial solution} for {\sc OPCS} if there is a minimum covering set of rec-packs $\mathscr R'$ with $\mathscr R \subseteq \mathscr R'$. 
\end{definition}

Now consider a partial solution $\mathscr R$. We define $G^{\mathscr R}(P)$ as follows.

\begin{definition}
    Let $G^{\mathscr R}(P)$ denote the induced subgraph of the associated graph $G(P)$ consisting of nodes corresponding to blocks in $P$ which are not covered by rec-packs in $\mathscr R$. 
\end{definition}

\cref{prop:G-chordal} and \cref{prop:sub-chordal} imply that $\GS$ is chordal, and \cref{prop:dirac} implies that $\GS$ has a simplicial node $p$ if $\GS$ is non-empty. Note that nodes in $\GS$ or $G(P)$ are blocks in $P$. Let $A$ denote the union of the block $p$ and its neighbouring blocks in $\GS$. $A$ induces a clique in $\GS$ and hence in $G(P)$ (\cref{def:chordal}). Therefore, there exists a maximal square $S_A$ that covers all blocks in $A$ (\cref{prop:sq-clique}). With this, we define \emph{unambiguous squares given a partial solution $\mathscr R$}.

\begin{definition}[Unambiguous squares given a partial solution]\label{def:unambiguous}
    Let $\mathscr R$ be a partial solution for an orthogonal polygon $P$. We call a maximal square $S$ to be an unambiguous square given $\mathscr R$, if there is simplicial node $p$ in $\GS$ such that $S$ covers $p$ and all its neighbours in $\GS$. 
\end{definition}

\begin{remark}
    For a partial solution $\mathscr R$ which does not completely cover $P$, a simplicial vertex always exists in $\GS$; and hence an unambiguous square given $\mathscr R$ always exists.
\end{remark} 

It is interesting to note that for a given simplicial node $p$ in $\GS$, there can be multiple maximal squares that cover $p$ and all its neighbours in $\GS$; they cover up the exact same set of uncovered blocks, but they may overlap differently with rec-packs in $\mathscr R$. \emph{Umabiguous} squares are essentially equivalent to \emph{essential} squares defined in the works of Bar-Yehuda and Ben-Chanoch~\cite{bar1994}. 

\begin{lemma}\label{lem:unambiguous}
    If $\mathscr R$ is a partial solution for an orthogonal polygon $P$ and $S$ is an unambiguous square given $\mathscr R$, then $\mathscr R \cup \{S\}$ is also a partial solution.
\end{lemma}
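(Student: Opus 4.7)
The plan is to exhibit a minimum rec-pack covering $\mathscr R''$ of $P$ containing $\mathscr R \cup \{S\}$. Since $\mathscr R$ is a partial solution, I would first fix a minimum rec-pack covering $\mathscr R^\star \supseteq \mathscr R$ and work at the level of its extraction $\mathscr S^\star := \Ex(\mathscr R^\star)$, which is a minimum square covering of $P$ and necessarily contains $\Ex(\mathscr R)$. Because the simplicial block $p \in G^{\mathscr R}(P)$ is by definition not covered by $\mathscr R$, the square $S^\star \in \mathscr S^\star$ that covers $p$ cannot lie in $\Ex(\mathscr R)$.

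The core step will be to verify that $\mathscr S' := (\mathscr S^\star \setminus \{S^\star\}) \cup \{S\}$ still covers all of $P$. It suffices to show that every block $q$ covered by $S^\star$ is covered either by $\mathscr S^\star \setminus \{S^\star\}$ or by $S$. If $q$ is already covered by $\Ex(\mathscr R)$, then $q$ lies in $\mathscr S^\star \setminus \{S^\star\}$. Otherwise $q$ is a node of $G^{\mathscr R}(P)$; by \cref{prop:sq-clique} the blocks inside $S^\star$ induce a clique in $G(P)$, so $q$ and $p$ together with the remaining $\mathscr R$-uncovered blocks of $S^\star$ form a clique in the induced subgraph $G^{\mathscr R}(P)$. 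Simpliciality of $p$ then forces $q \in N_{G^{\mathscr R}(P)}[p]$, and \cref{def:unambiguous} guarantees that $S$ covers $q$. Combined with $|\mathscr S'| \le |\mathscr S^\star|$ and minimality of $\mathscr S^\star$, this shows $\mathscr S'$ is itself a minimum square covering of $P$.

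To finish, I would take $\mathscr R'' := \mathscr R \cup \{S\} \cup \bigl(\mathscr S' \setminus \Ex(\mathscr R) \setminus \{S\}\bigr)$, where each element of the last set is viewed as a trivial rec-pack. By construction $\Ex(\mathscr R'') = \mathscr S'$, so $\mathscr R''$ is a minimum rec-pack covering. Disjointness of the extractions holds because $\mathscr R \subseteq \mathscr R^\star$ already had pairwise disjoint extractions, $S \notin \Ex(\mathscr R)$ since $S$ covers the uncovered block $p$, and the remaining trivial rec-packs are pairwise distinct squares drawn from the complement of $\Ex(\mathscr R)$. Thus $\mathscr R \cup \{S\} \subseteq \mathscr R''$, as required. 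The main obstacle is conceptual rather than technical: one must carefully distinguish between rec-packs in $\mathscr R$ and the individual squares in $\Ex(\mathscr R)$ so that the single swap $S^\star \mapsto S$ performed at the square-covering level lifts cleanly to a valid rec-pack covering satisfying the disjoint-extraction condition; once this separation is made, the remainder is routine bookkeeping.
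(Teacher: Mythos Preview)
Your proposal is correct and follows essentially the same swap argument as the paper: pick a minimum covering extending $\mathscr R$, locate the square $S^\star$ covering the simplicial block $p$, and replace it by $S$, using simpliciality to guarantee that every $\mathscr R$-uncovered block of $S^\star$ is a neighbour of $p$ in $G^{\mathscr R}(P)$ and hence covered by $S$. You are in fact more careful than the paper about lifting the swap back from the square level to the rec-pack level and verifying the disjoint-extraction condition, which the paper leaves implicit.
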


\begin{proof}
    Let $p$ be a simplicial node in $\GS$ such that $S$ covers $p$ and its neighbours in $\GS$. Since $\mathscr R$ is a partial solution, there exists a set of rec-packs $\mathscr R'$ which is a minimum cover for $P$, satisfying $\mathscr R \subseteq \mathscr R'$. Let $S'$ be some square in $\Ex(R')$ that covers $p$. 

    Note that $S$ covers all such blocks in $\GS$ that $S'$ covers. $(\Ex(\mathscr R') \setminus \{S'\}) \cup \{S\}$ therefore also forms a minimum cover of $P$. This implies that $\Ex(\mathscr R) \cup \{S\}$ as well as $\mathscr R \cup \{S\}$ are partial solutions.
\end{proof}


Although we know that an unambiguous square given a partial solution $\mathscr R$ always exist, it is not trivial to algorithmically find one. The first step with which we achieve this is to get a polynomial-sized set of squares, which has at least one unambiguous square. The next result formally discusses this. 

\begin{lemma}\label{lem:grid-unambiguous}
    Let $\mathscr R$ be a partial solution of an orthogonal polygon $P$ not completely covering $P$. We define $C_x$ and $C_y$ as follows.
    \begin{align*}
        C_x :=  \{&(x,y) \in \mathbb Z^2 \mid x\text{ is an x-coordinate of a vertex of } P, \text{ and } y\text{ is a }\\
        & \text{y-coordinate of a vertex of }P\text{ or a corner of a rec-pack in }\mathscr R\}\\
        C_y :=  \{&(x,y) \in \mathbb Z^2 \mid x\text{ is an x-coordinate of a vertex of }P\text{ or a corner }\\
        & \text{of a rec-pack in }\mathscr R, \text{ and } y\text{ is a y-coordinate of a vertex of }P\}
    \end{align*}
    There exists an unambiguous square $S$ given $\mathscr R$ which has one corner in $C_x \cup C_y$.
\end{lemma}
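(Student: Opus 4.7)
The plan is to take an arbitrary unambiguous square and slide it until a corner lies in $C_x \cup C_y$. Since $\mathscr R$ does not cover $P$, the graph $\GS$ is non-empty; by \cref{prop:sub-chordal} and \cref{prop:G-chordal} it is chordal, so \cref{prop:dirac} provides a simplicial node $p$ in $\GS$. The closed neighbourhood $N[p]$ induces a clique in $G(P)$, so by \cref{prop:sq-clique} a valid square covers every block of $N[p]$; extending to a maximal square yields an unambiguous square $S^*$ (\cref{def:unambiguous}). By \cref{lem:2-side-poly}, $S^*$ has two parallel sides lying on polygon edges; WLOG these are the two vertical sides (the case of horizontal supports is symmetric, with the roles of $C_x$ and $C_y$ swapped). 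Then every corner of $S^*$ has its $x$-coordinate equal to a polygon vertex $x$-coordinate, and it suffices to arrange the top or bottom $y$-coordinate of $S^*$ to be either a polygon vertex $y$-coordinate or a rec-pack corner $y$-coordinate.

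The next step is to slide $S^*$ vertically within the family of squares that stay inside $P$, keep both vertical sides on the original two supporting polygon edges (so the slid square remains maximal), and still cover every block of $N[p]$. Writing $[a, a+s] \times [b, b+s]$ for $S^*$, let $\delta^+ \ge 0$ be the maximum upward shift preserving these three properties. At $\delta^+$, one of three constraints is tight: either (A) the top side meets a horizontal polygon edge bounding $P$ from above, making $b+s+\delta^+$ a vertex $y$-coordinate; or (B) one of the supporting vertical polygon edges ends at its upper endpoint, again making $b+s+\delta^+$ a vertex $y$-coordinate; or (C) the bottom side reaches $y_1 := \min_{q \in N[p]} q_y^-$, so that any further shift would uncover some block of $N[p]$. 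Cases (A) and (B) immediately produce the desired corner in $C_x$.

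For case (C), let $q \in N[p]$ attain $y_1$ and inspect the row of unit cells $[a, a+s] \times [y_1-1, y_1]$ lying just below the shifted-up position of $S^*$. If any of these cells lies outside $P$, a horizontal polygon edge must bound the transition at $y = y_1$, so $y_1$ is a vertex $y$-coordinate. If any of these cells lies inside a rec-pack of $\mathscr R$, then $y = y_1$ is the top side of that rec-pack, so $y_1$ is a rec-pack corner $y$-coordinate. In the remaining sub-case every cell of this row is an uncovered block of $P$, and then the shifted-down square $[a, a+s] \times [y_1-1, y_1-1+s]$ is valid and simultaneously covers $p$, every block of $N[p]$, and the uncovered block $q''$ immediately below $q$. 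By \cref{prop:sq-clique} this witnesses $q'' \in N[p]$, contradicting the minimality of $y_1$.

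The main obstacle is the degenerate case in which the bounding box of $N[p]$ already has height equal to $s$, so the shifted-down square fails to cover a block of $N[p]$ sitting at the top of the bounding box, and the symmetric horizontal-support configuration is simultaneously constrained. I plan to resolve this by choosing, when possible, a larger maximal extension of the covering square of $N[p]$ with strictly larger side length; otherwise, exploit the symmetry to swap to horizontal supports and target $C_y$ through the analogous downward-sliding argument. Combining upward and downward slides with both orientations exhausts all possibilities and yields an unambiguous square with a corner in $C_x \cup C_y$.
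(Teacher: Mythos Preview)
Your sliding strategy is natural, but there are two gaps.

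First, in Case~(C) you assert that if some cell of the row $[a,a+s]\times[y_1-1,y_1]$ lies in a rec-pack $R\in\mathscr R$, then $y_1$ is the top side of $R$. This need not hold: the cell directly above (in the bottom row of your shifted square) is not required to be in $N[p]$, so it too may lie in $R$, which then extends strictly above $y_1$. You can repair this by looking specifically at the cell below your chosen $q\in N[p]$ (since $q$ itself is uncovered, any rec-pack covering the cell just below $q$ really does top out at $y_1$)---but then the validity of the down-shifted square still needs the full-row-inside-$P$ check separately.

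Second, and more seriously, your degenerate case is not resolved; the two proposed escapes can fail simultaneously. Take $s=1$: then $S^*=p$ is a single block in a width-one vertical corridor, the blocks immediately above and below $p$ may both be uncovered and inside $P$, no larger maximal square contains $p$ (any $2\times 2$ square would protrude out of the corridor), and $S^*$ has no horizontal supports to swap to. Your argument then produces no unambiguous square with a corner in $C_x\cup C_y$, even though one exists (e.g.\ the topmost uncovered block of the corridor). The paper avoids this by an extremal choice up front: it takes $p$ to be a simplicial node of \emph{maximum degree} in $\GS$, breaking ties by topmost-then-leftmost position, and takes $S_0$ with extremal top-left corner among maximal squares covering $N[p]$. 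The maximum-degree choice forces $\GS$ to consist of isolated vertices whenever $s=1$, after which the positional tie-break immediately pins a side of $p$ to a rec-pack boundary or a polygon edge; the extremal choice of $S_0$ drives the remaining contradictions for $s\ge 2$ without needing to slide in both directions.
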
 

\begin{proof}
    Notice that \cref{prop:dirac} guarantees the existance of at least one simplicial node in $\GS$. In fact, there can be multiple simplicial nodes. Let $V'$ be the simplicial nodes with the largest number of neighbours in $\GS$ (highest degree). Among them, consider the topmost simplicial nodes in $V'$, and let $p$ be the leftmost among them. Therefore, $p$ is the simplicial node in $\GS$ with the largest number of neighbours, and among the topmost of such simplicial nodes it is the leftmost.
    
    Let $A$ denote the closed neighbourhood of $p$ in $\GS$. By \cref{def:chordal}, $A$ must induce a clique in $\GS$, and hence in $G(P)$. Therefore, there must be a square that covers $A$ completely (\cref{prop:sq-clique}). Let $S_0$ be a maximal square that covers $A$ completely, such that the position of the top-left corner of $S_0$ is leftmost among the topmost of all such squares covering $A$. 
    
    $S_0$ is maximal, so either the horizontal sides of $S_0$ individually overlap with two horizontal polygon edges, or the vertical sides of $S_0$  individually overlap with two vertical polygon edges (\cref{lem:2-side-poly}). We will assume that the horizontal sides of $S_0$ overlap with two horizontal polygon edges; the other case permits a symmetric argument. 
    
    We consider the following two cases. 

    \subparagraph*{Case-I: $S_0$ is a $1 \times 1$ square.} We first show that, in such a case, $\GS$ is a collection of isolated vertices.
    
    \begin{claim}
    If $S_0$ is a $1 \times 1$ square, then $\GS$ is a collection of isolated vertices.
    \end{claim} 
    \begin{proof}[Proof of claim]
        For the sake of contradiction, assume that $\GS$ has some arc. Consider the largest connected $G^\star$ of $\GS$. $G^\star$ induces a chordal graph, and hence must have a simplicial node, that has degree at least $1$. So the highest degree simplicial node in $\GS$ cannot be of degree 1 --- contradiction.
    \end{proof}
    
    Now, $p$ is the leftmost among the topmost of such simplicial nodes. Therefore, the immediate left block $p'$ of $p$, is either (i) covered by a rec-pack in $\mathscr R$ or (ii) is outside the polygon. 
    
    If $p'$ is covered by a rec-pack in $\mathscr R$, but $p$ is not, then the left side of $S_0$ is lies on the right side of some rec-pack $R$ in $\mathscr R$. Otherwise, if $p'$ is outside the polygon, then the left side of $S$ lies on some polygon edge. Therefore, the top-left corner of $S_0$ must share the x-coordinate of either a corner of a rec-pack in $\mathscr R$ or a polygon vertex. Further, the top-left corner of $S_0$ shares a y-coordinate of a polygon vertex as the horizontal sides of $S_0$ overlap with polygon edges. This means that the top-left vertex of $S_0$ is in $C_y$.

    \subparagraph*{Case-II: $S_0$ is not a $1 \times 1$ square.} Recall that we assumed the horizontal sides of $S_0$ to overlap with polygon edges. If the left (resp. right) side of $S_0$ shares x-coordinate of some polygon edge or a corner of some rec-pack in $\mathscr R$, then the top-left (resp. top-right) corner of $S_0$ is in $C_y$; then we are done. 

    So, It suffices to assume that neither the left side nor the right side of $S_0$ share the x-coordinate with a polygon edge or with a corner of some rec-pack in $\mathscr R$. Let $S_l$ and $S_r$ be squares congruent to $S_0$, that are shifted one unit to the left and right respectively (\cref{fig:Sl-Sr}). 
    
    \begin{claim}
        $S_l$ and $S_r$ are maximal valid squares.
    \end{claim}
    \begin{proof}
        $S_l$ and $S_r$ are valid as the left and right sides of $S_0$ do not overlap with any polygon edge. Moreover, the horizontal sides of $S_l$ and $S_r$ overlap with polygon edges, the same edges with which the horizontal sides of $S_0$ overlap. So, $S_l$ and $S_r$ are maximal as well. 
    \end{proof}
    
    We further consider three more subcases. 
    
    \begin{itemize}
        \item \textbf{Case II(a): All blocks inside $S_0$ along its right boundary, are also covered by some rec-pack in $\mathscr R$.} Therefore, $S_l$ covers the same set of uncovered blocks as $S_0$. Moreover, the top-left corner of $S_l$ lies to the left of of the top-left corner of $S_0$. This contradicts the definition of $S_0$, and hence this subcase never arises.
        \item \textbf{Case II(b): All blocks inside $S_0$ along its left boundary, are also covered by some rec-pack in $\mathscr R$.} Consider the following process: we keep moving $S_0$ horizontally to the right until one of the following happens:
        \begin{enumerate}
            \item it is obstructed by a polygon edge to its right.
            \item moving it any further uncovers some previously covered block.
            \item moving it any further will make one of its horizontal sides lose overlap with a polygon edge.
        \end{enumerate}
        Let the final square obtained be $S'$. Due to the second condition, $S'$ and $S_0$ cover the same set of uncovered blocks. Notice that since the process terminated with $S'$, one of the three above conditions must hold for $S'$. If further right movement is restricted due to a polygon edge, the right side of $S'$ overlaps with a polygon edge. If the right movement is restricted as a covered block becomes uncovered, then the left side of $S'$ overlaps with the right-side of some rec-pack of $\mathscr R$. Finally, if moving it any further makes $S'$ lose overlap with a horizontal polygon edge, then there is a vertex of $P$, that coincides with the top-left corner or the bottom-left corner of $S'$. 
        
        Therefore, in all cases, one of the vertical sides share an x-coordinate of either a polygon vertex, or a rec-pack in $\mathscr R$. Since the horizontal sides of $S_0$ overlap with polygon edges, there is a corner of the $S'$ in $\mathcal C_y$.

        \begin{figure}[ht]  
            \hfill    
            \begin{subfigure}[t]{0.40\textwidth}
                \includegraphics[width=\textwidth]{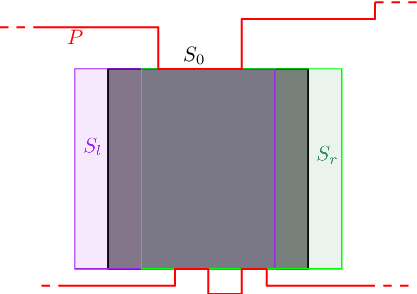}
                \caption{Construction of $S_l, S_r$ given $S, P$}\label{fig:Sl-Sr} 
            \end{subfigure} \hfill
            \begin{subfigure}[t]{0.40\textwidth}
                \includegraphics[width=\textwidth]{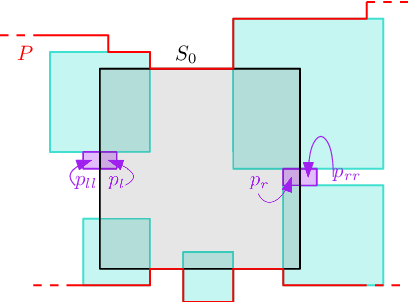}
                \caption{Blocks $p_l, p_{ll}, p_r, p_{rr}$ given $S_0$ and rec-packs in $\mathscr R$}\label{fig:pl-pr} 
            \end{subfigure} \hfill \hfill
            \caption{Cases in proof of \cref{lem:grid-unambiguous}}
        \end{figure}
        
        \item \textbf{Case II(c): There is a block $p_l$ inside $S_0$ along its left boundary, and a block $p_r$ inside $S_0$ along its right that are both not covered by any rec-pack in $\mathscr R$ (Figure~\ref{fig:pl-pr}).} Let $p_{ll}$ be the block immediately to the left of $p_{l}$ and $p_{rr}$ be the block immediately to the right of $p_{r}$. As the vertical sides of $S_0$ are assumed to not overlap with a vertical side of a rec-pack or polygon edges, the blocks $p_{ll}$ and $p_{rr}$ must be uncovered blocks lying inside $P$. $S_l$ covers $p_{ll}$ and $S_r$ covers $p_{rr}$. However if $p$ is not on the left boundary of $S_0$, then $p$ is covered by $S_r$ (which also covers $p_{rr}$). This implies that $p$ and $p_{rr}$ are adjacent in $\GS$. Similarly, if $p$ is on the left boundary of $S_0$, then $p$ is covered by $S_l$ (which also covers $p_{ll}$). Thus, $p$ and $p_{ll}$ are adjacent in $\GS$. Therefore in either case there is some node $p'$ which is not covered by $S_0$ but adjacent to $p$ in $\GS$. This contradicts the definition of $S_0$ and therefore such a case does not arise.
    \end{itemize} 
\end{proof}

\cref{lem:grid-unambiguous} gives us a set of squares which contains at least one unambiguous square. We now require a method to detect if a given square is unambiguous or not. The following result takes us in this direction. 

\begin{lemma}\label{lem:1-pad}
    Let $\mathscr R$ be a partial solution of an orthogonal polygon $P$, which does not completely cover $P$. Let $S$ be any valid  square. We define $D_x$ and $D_y$ as a set of lattice points as follows.
    \begin{align*}
        D_x &:=  \{(x, y) \in \mathbb Z^2 \mid \exists x'\text{ which is an x-coordinate of a vertex of } P \text{ or a corner of a}\\
        & \text{ rec-pack in }\mathscr R \cup \{S\}\text{ such that }|x - x'| \le 1, y\text{ is the y-coordinate of a vertex in }P\}\\
        D_y &:=  \{(x, y) \in \mathbb Z^2 \mid \exists y'\text{ which is a y-coordinate of a vertex of } P \text{ or a corner of a}\\
        & \text{ rec-pack in }\mathscr R \cup \{S\}\text{ such that }|y - y'| \le 1, x\text{ is the x-coordinate of a vertex in }P\}
    \end{align*}
    Let $p$, $p_1$ be neighbouring nodes in $\GS$ such that $p \in S$, $p_1 \notin S$. Then, there is a neighbour $p'$ of $p$ in $\GS$, such that $p' \notin S$ but $p, p' \in S^\star$ for some maximal square $S^\star$ having a corner in $D_x \cup D_y$.
\end{lemma}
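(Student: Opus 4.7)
The plan is to start from a maximal square covering both $p$ and $p_1$, then slide it horizontally until a corner is forced close to a significant coordinate. Since $p$ and $p_1$ are adjacent in $\GS \subseteq G(P)$, \cref{prop:sq-clique} yields a valid square covering both; extending it gives a maximal valid square $M$ covering both. By \cref{lem:2-side-poly} together with the rotational symmetry of \cref{prop:invariants}, I assume WLOG that the horizontal sides of $M$ lie on polygon edges. Then the top and bottom $y$-coordinates of $M$ equal $y$-coordinates of polygon vertices, so every corner of $M$ already satisfies the $y$-requirement of $D_x$, and it suffices to force a vertical side of $M$ within $1$ of a significant $x$-coordinate from the reference set used to build $D_x$.

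The core of the proof is a shifting argument. I would slide $M$ leftward while maintaining four invariants: validity inside $P$; the top and bottom of $M$ remain on the original two polygon edges (preserving the side length and hence maximality via \cref{lem:2-side-poly}); $p \in M$; and at least one uncovered $\GS$-neighbour of $p$ that lies outside $S$ remains inside $M$. Let $M^\star$ denote the leftmost position obtained in this way, and analyse what blocks a further leftward shift. If a vertical polygon edge blocks it, the left $x$-coordinate of $M^\star$ equals a polygon vertex $x$-coordinate, and the top-left corner of $M^\star$ lies in $D_x$. If the top or bottom polygon edge ends before this happens, a left corner of $M^\star$ aligns with a polygon vertex in $x$, and again lies in $D_x$. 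In either case we take $S^\star = M^\star$ and let $p'$ be any qualifying neighbour supplied by the last invariant. These two subcases mirror the argument of \cref{lem:grid-unambiguous} and do not use the within-$1$ slack.

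The main obstacle I anticipate is the remaining two stopping causes, where the shift is blocked not by the polygon boundary but by the positions of $p$ and its neighbours relative to $S$. If a further shift would push $p$ out of $M^\star$, then the right side of $M^\star$ coincides with the right side of $p$, and since $p \in S$ this side lies inside the $x$-range of $S$. If a further shift would expel the last qualifying $p'$, then the side of $M^\star$ lying on the outside of $S$ is pinned against a block adjacent to a vertical side of $S$. The crucial point is that the corners of $S$ are themselves in the reference set for $D_x$, and the slack permitted is now up to $1$ rather than exact equality; these pinning conditions will place a vertical side of $M^\star$ within $1$ of a vertical side of $S$, yielding a corner in $D_x$. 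To cover both handedness of these stopping configurations, I would combine the leftward-shift analysis with a symmetric rightward shift, arguing that at least one direction terminates with a corner in $D_x \cup D_y$. The delicate book-keeping — tracking how the set of qualifying neighbours evolves under the shift and verifying its non-emptiness at every intermediate step — is where most of the technical work sits.
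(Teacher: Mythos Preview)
Your translation approach matches the paper's core mechanism, but you are missing a preparatory step that the paper relies on and that your invariant bookkeeping does not replace. Before any sliding, the paper runs a token from $p_1$ toward $p$ (down and left, after a WLOG placing $p_1$ weakly up-right of $p$) to single out a \emph{specific} neighbour $p'$ with two properties: $p'$ weakly dominates $p$ in both coordinates, and the blocks immediately to the left of $p'$ and immediately below $p'$ are each already covered by $\mathscr R \cup \{S\}$ or lie outside $P$. Only this fixed $p'$ is carried through the translation. The first property guarantees that $p'$ reaches the boundary of the sliding square no later than $p$ does; the second is precisely what manufactures the within-$1$ slack when the translation halts with $p'$ on the boundary, since the side of the square is then one unit from the side of whichever rec-pack in $\mathscr R \cup \{S\}$ covers that adjacent block.

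Your third stopping case, where a further shift would push $p$ out of $M^\star$, is where this bites. You observe that the right side of $M^\star$ then lies in the $x$-range of $S$, but it can sit many units from either vertical side of $S$ and from every rec-pack in $\mathscr R$, so no corner of $M^\star$ need land in $D_x$. The bidirectional trick does not close the gap: if the only qualifying neighbours of $p$ share $p$'s $x$-coordinate (lying above or below $S$), then in both the leftward and the rightward shift $p$ and those neighbours reach the boundary simultaneously, and the stopping $x$-coordinate is simply that of $p$. The paper handles exactly this configuration separately (its Case~II/III), translating in the orthogonal direction and using an intermediate-value argument between an upward-shifted and a downward-shifted copy to land a side of the square on a side of $S$; that argument again hinges on the covered-neighbour property of the token-selected $p'$. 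A minor secondary point: in your fourth stopping case the block adjacent to the last qualifying $p'$ may be covered by an arbitrary rec-pack in $\mathscr R$ rather than by $S$, so the within-$1$ reference must be to all of $\mathscr R\cup\{S\}$, as the lemma statement allows.
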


Before we start formally proving \cref{lem:1-pad}, we provide some intuition on the statement, and briefly explain why this is needed. Note that, $D_x$ is a collection of x-coordinates of vertices/corners in $P$, $\mathscr R$ or $S$ as well as the x-coordinates differing by at most $1$, intersecting with y-coordinates of the vertices of $P$. $D_y$ is defined symmetrically. Let $\mathcal D$ be the set of all maximal squares with a corner in $D_x \cup D_y$. Now, what \cref{lem:1-pad} mentions is: if there is a neighbour of $p \in S$ outside $S$, then one of the squares in $\mathcal D'$ will cover $p$ as well as something outside $\mathscr R \cup \{S\}$.

This will be particularly useful in detecting unambiguous squares as follows. Let $S$ be the square that is to be tested if it is an unambiguous square given $\mathscr R$. If $S$ contains a simplicial node $p$ such that $S$ covers all neighbours of $p$ in $\GS$, then by definition, $S$ is an unambiguous square. No square in $\mathcal D'$ can cover $p$ as well as any uncovered block $q$ outside $S$, because $q$ must be a neighbour of $p$ in $\GS$, $S$ already covers all neighbours of $p$. On the other hand, if $S$ is not an unambiguous square, all blocks inside $S$ would either be covered by rec-packs in $\mathscr R$ or by a square in $\mathcal D'$ that covers something else that is not covered by $\mathscr R \cup \{S\}$. We now prove \cref{lem:1-pad}.

\begin{proof}[Proof of \cref{lem:1-pad}]
    Without loss of generality, assume that the x-coordinate and the y-coordinate of $p_1$ are not less than the x-coordinate and the y-coordinate of $p$ respectively. Let $L$ be the axis-parallel rectangular region with $p$ and $p_1$ being diagonally opposite corner blocks. Any valid square containing $p$ and $p_1$ contains the entire region of $L$. So, all blocks lying in $L$ are neighbours of $p$ in $\GS$.
    
    Consider placing a token on $p_1$. We move the token as follows:
    
    \begin{itemize}
        \item If the block immediately below the token is not covered by $\mathscr R \cup \{S\}$, and does not have a y-coordinate less than that of $p$, then move the token one block down.
        \item If the previous step is not possible, and if the block immediately to the left of the token is not covered by $\mathscr R \cup \{S\}$, and does not have an x-coordinate less than that of $p$, then move the token one block to the left.
        \item Stop when none of these are applicable. Let the final block of the token be denoted as $p'$.
    \end{itemize}
    
    Note that the token stays inside $L$, and therefore, $p'$ must be a node in $\GS$. Further, $p'$ must be adjacent to $p$ in $\GS$. Since the process terminated, the block immediately to the left of $p'$ and the block immediately below $p'$ are either covered with some rec-pack in $\mathscr R \cup \{S\}$, or lie outside the polygon $P$. We consider the following cases. 

    \subparagraph*{Case I: $p$ and $p'$ differ in both $x$- and $y$-coordinates (Figure~\ref{fig:outside-oblique}).} Since the process stopped with the token on $p'$, the block immediately to the left and the block immediately below $p'$ are covered by rec-packs in $\mathscr R \cup \{S\}$. Let $S'$ be a maximal square covering both $p$ and $p'$. From \cref{lem:2-side-poly}, $S'$ has either its horizontal sides or its vertical sides overlapping with polygon edges. Assume that its horizontal sides overlapping with horizontal polygon edges $e_1$, $e_2$ (a similar argument can be made for vertical sides). We keep translating $S'$ to the left until:
    \begin{itemize}
        \item (i) it gets obstructed by a polygon edge to its left and translating it any further would take a part of the square outside the polygon $P$
        \item (ii) it touches $e_1$ at just a point, and translating it any further would make the square lose contact with $e_1$
        \item (iii) it touches $e_2$ at just a point, and translating it any further would make the square lose contact with $e_2$
        \item (iv) $p'$ is on the right boundary of the square, and translating it any further would take $p'$ outside the square
    \end{itemize}
     Let this translated square be $S''$. Due the the fourth condition, $S''$ contains $p$ and $p'$. Note that since $S''$ was obtained by translating $S'$ to the left, the y-coordinates of the corners of $S''$ will share the y-coordinates of $e_1$ or $e_2$. Therefore, the corners of $S''$ share y-coordinates of some vertices of $P$.
     
     If the translation of $S'$ to $S''$ terminated due to the first three conditions, then a vertical side of $S''$ must overlap with some polygon edge (possibly at just a point). Therefore, one of the corners of $S''$ shares its x-coordinate with a vertex in $P$, and also its y-coordinate with a vertex in $P$. So, $S^\star = S''$ satisfies the conditions of the lemma, as these coordinates appear in $D_x$, as well as $D_y$.
     
     We now consider the fourth condition: the translation was stopped as $p'$ is on the right boundary. Recall that both blocks immediately to the left and below $p'$ are covered by some rec-packs $\mathscr R \cup \{S\}$. Therefore, $S''$ has its right side one unit to the right of a vertical side of some edge of rec-pack in $\mathscr R \cup \{S\}$ (the rec-pack that covers the immediate left block of $p'$ but not $p$). Then, the x-coordinate of at least one corner of $S''$ differs by 1 from the x-coordinate of a corner of some rec-pack. Therefore, $S^\star = S''$ has a corner in $D_y$. 
        
        \begin{figure}[ht]  
            \hfill    
            \begin{subfigure}[t]{0.40\textwidth}
                \includegraphics[width=\textwidth]{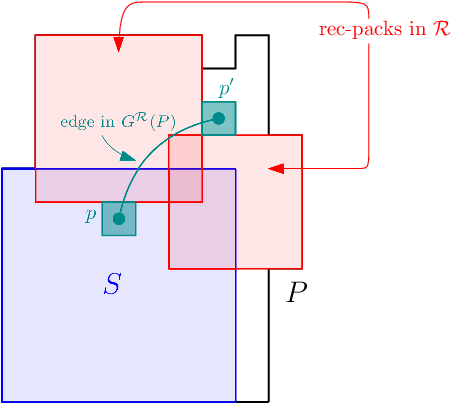}
                \caption{Case I in proof of \cref{lem:1-pad}}\label{fig:outside-oblique} 
            \end{subfigure} \hfill
            \begin{subfigure}[t]{0.40\textwidth}
                \includegraphics[width=\textwidth]{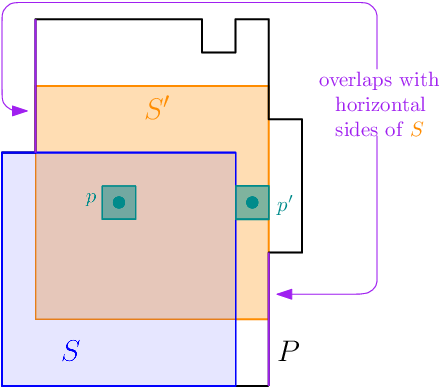}
                \caption{Case II in proof of \cref{lem:1-pad}}\label{fig:outside-orthogonal} 
            \end{subfigure} \hfill \hfill
            \caption{Cases in proof of \cref{lem:1-pad}}
        \end{figure}
        
    \subparagraph*{Case II: $p$ and $p'$ have the same $y$-coordinate.} Let $S'$ be a maximal square covering both $p$ and $p'$. If $S'$ has its horizontal sides overlapping with polygon edges, we have an identical argument as before: translate $S'$ as long as the four conditions hold true, and then the exact same steps prove that the translated square has a corner in $D_y$. Therefore, we only consider the case when $S'$ has vertical sides that overlap with polygon edges $e_1$, $e_2$ (Figure~\ref{fig:outside-orthogonal}). Now, we translate $S'$ to the top until:
    \begin{itemize}
        \item (i) it gets obstructed by a polygon edge to its top and translating it any further would take a part of the square outside the polygon $P$
        \item (ii) it touches $e_1$ at just a point, and translating it any further would make the square lose contact with $e_1$
        \item (iii) it touches $e_2$ at just a point, and translating it any further would make the square lose contact with $e_2$
        \item (iv) $p'$ (and hence $p$) is on the bottom boundary of the square, and translating it any further would take $p'$ (and hence $p$) outside the square
    \end{itemize}
    Let the square obtained be $S_t$. Similar to the previous case, if the translation terminated because of any of the first three conditions, then $S_t$ must have a horizontal side overlapping with some horizontal polygon edge (hence would have a corner in $D_x$, proving what we want). 
    
    Now, consider the fourth case: $p, p'$ are on the bottom boundary of $S_t$. We similarly construct $S_b$ by moving $S'$ to the bottom until either it gets obstructed by (i) a polygon edge to its bottom, (ii) overlaps with $e_1$ or $e_2$ at just a point, or (iii) $p'$ (and hence $p$) is on the top boundary of the $S_b$. Again as the first two of these cases already achieve $S_b$ having a corner in $D_x$, we consider the third case: $p, p'$ are on the top boundary of $S_b$. 
    
    We observe that the bottom side of $S_t$ does not lie below the bottom side of $S$, as $p, p'$ are on the bottom boundary of $S_t$. Similarly the top side of $S_b$ does not lie above or on the top side of $S$. Again if $S_t$ has its bottom side lying on the bottom side of $S$, $S_t$ would have a corner in $D_x$ and we would be done. So, we consider the case when $S_t$ has its bottom side lying above the bottom side of $S$ and $S_b$ has its top side lying below the top side of $S$. 
    
    \begin{claim}
        The top side of $S_t$ must lie above or on the top side of $S$
    \end{claim}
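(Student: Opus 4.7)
The plan is to argue by contradiction. Assume that the top side of $S_t$ lies strictly below the top side of $S$. Combined with the subcase assumption that the bottom side of $S_t$ lies strictly above the bottom side of $S$, the vertical extent of $S$ strictly contains that of $S_t$. Since $S$ and $S_t$ are both axis-parallel squares, the side length of $S$, call it $a$, must therefore strictly exceed the side length of $S_t$, call it $b$.

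I plan to derive the contradiction by showing that the horizontal width of $S$ is at most $b$, which clashes with $a > b$. The key observation is that the polygon edges $e_1$ and $e_2$, which overlap the left and right sides of $S_t$ respectively, both extend at least one unit above the bottom of $S_t$. This follows from the fact that the upward translation of $S'$ into $S_t$ terminated at condition (iv) and not at condition (ii) or (iii), meaning that translating $S_t$ one further unit upward would still preserve non-trivial overlap with both $e_1$ and $e_2$. Combined with the existing overlap of $e_1, e_2$ with the sides of $S_t$ (so their lower endpoints lie at or below the top of $S_t$), the vertical ranges of $e_1$ and $e_2$ each intersect the vertical range of $S$ non-trivially, since $S$'s vertical extent strictly contains that of $S_t$.

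At any y-coordinate in $e_1$'s vertical range, the horizontal extent of $S$ at that y-coordinate cannot cross $e_1$ to the left, because $S$ is valid and the polygon interior lies to the right of $e_1$ (as $S_t$ sits to the right of its own left side, which overlaps $e_1$). Since $S$ is an axis-parallel square, its left $x$-coordinate is constant across its vertical extent, and so must be at least the $x$-coordinate of the left side of $S_t$. A symmetric argument using $e_2$ shows that the right $x$-coordinate of $S$ is at most that of the right side of $S_t$. Hence the horizontal width of $S$ is at most $b$, contradicting $a > b$.

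The main obstacle I anticipate is the verification that the vertical ranges of $e_1$ and $e_2$ intersect the vertical range of $S$ non-trivially; this relies on a careful unpacking of the stopping criteria of the upward translation and how they constrain the top endpoints of $e_1$ and $e_2$. Once this is in hand, the remainder is a clean application of the validity of $S$ and the basic geometry of vertical polygon edges bounding axis-parallel squares from the left and right.
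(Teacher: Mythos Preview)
Your proof is correct, but it takes a different route from the paper's. The paper argues directly about the position of $S_t$'s left side: since $S_t$ covers $p'$ (which lies strictly to the right of $S$) and $|S_t|<|S|$ (from vertical containment), the left side of $S_t$ must lie strictly to the right of $S$'s left side; and since $S_t$ also covers $p\in S$, the left side of $S_t$ lies strictly to the left of $S$'s right side. Together with the vertical containment, this places the entire left side of $S_t$ in the strict interior of $S$, which is impossible because that side overlaps a polygon edge.

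Your argument is essentially the dual: rather than placing $S_t$'s side inside $S$, you use the polygon edges $e_1,e_2$ as barriers for $S$. You observe that in this subcase conditions (ii) and (iii) did not trigger, so $e_1,e_2$ overlap $S_t$'s vertical sides in segments of positive length; since $S$'s vertical range strictly contains $S_t$'s, those segments also meet $S$'s vertical range non-trivially. Validity of $S$ together with $p\in S$ lying between $e_1$ and $e_2$ then forces $S$'s horizontal extent to be trapped between $e_1$ and $e_2$, yielding $|S|\le |S_t|$, the desired contradiction. This works, though note that the ``one unit above the bottom of $S_t$'' step is more than you need: all you use is that $e_1,e_2$ still overlap $S_t$'s sides non-trivially, which is exactly what the failure of (ii) and (iii) gives. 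The paper's proof is shorter and uses the location of $p'$ as the lever; yours avoids $p'$ entirely and reasons purely through the bounding edges---a perfectly valid alternative.
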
 
    \begin{proof}[Proof of claim]
        Assume the contrary. Then, both the top side and the bottom edge of $S_t$ lie between the top side and bottom side of $S$. But $S_t$ covers $p'$ which lies to the right of the right side of $S$. This means $S_t$ has its left side lying completely in the strict interior of $S$, contradicting that $S_t$ has vertical sides overlapping with polygon edges $e_1$, $e_2$.
    \end{proof}
    
    This means, if we were to vertically move a square initially positioned as $S_t$ (top side of $S_t$ above the top side of $S$) to $S_b$ (top side of $S_b$ below the top side of $S$), there would be a square $S_1$ with its top side lying on the top side of $S$ which contains $p$ and $p'$. Moreover, as $S_1$ has its vertical sides overlapping with polygon edges $e_1$, $e_2$, $S^\star = S_1$ has a corner in $D_x$.
    
    \subparagraph*{Case II
    I: $p$ and $p'$ have the same $x$-coordinate.} An argument symmetric to the previous argument follows for this case. 

    This completes the proof.
\end{proof}

\subsection{Placing rec-packs given a partial solution}

We start with a result regarding placement of maximal squares to extend partial solutions.

\begin{lemma} \label{lem:max-sq-2}
    Let $S$ be a maximal square of an orthogonal polygon $P$ with side length $d$, such that
    \begin{itemize}
        \item the top and bottom sides of $S$ overlap with horizontal polygon edges $e_1$ and $e_2$ respectively.
        \item $e_1$ contains the top right corner $a$ of $S$
        \item $e_2$ contains the bottom right corner $b$ of $S$
        \item there is a strip $Y$ between $e_1, e_2$, and the right side of the strip is more than $d$ distance away from the right side of $S$.
    \end{itemize}
    
    Let $S'$ be the square generated by reflecting $S$ with respect to its right side. Then, for any partial solution $\mathscr R$ containing $S$ and with no overlap with $S'$, $\mathscr R \cup \{S'\}$ is also a partial solution.
\end{lemma}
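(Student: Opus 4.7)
The plan is to derive this lemma directly from Lemma~\ref{lem:unambiguous} by exhibiting $S'$ as an unambiguous square given $\mathscr R$. Concretely, I will produce a simplicial node $p$ in $\GS$ whose closed neighbourhood is exactly the set of blocks of $S'$; once $S'$ has been shown to be a maximal valid square that covers $p$ and all its $\GS$-neighbours, Definition~\ref{def:unambiguous} identifies $S'$ as unambiguous given $\mathscr R$, and Lemma~\ref{lem:unambiguous} then yields that $\mathscr R \cup \{S'\}$ is a partial solution.

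Write $y_1$ for the $y$-coordinate of $e_1$ and $x_S$ for the $x$-coordinate of the right side of $S$, so that $S' = [x_S, x_S + d] \times [y_1 - d, y_1]$ and $a = (x_S, y_1)$. The first (routine) step is to check that $S'$ is a maximal valid square: since the strip $Y$ has height $d$, contains $S$, and extends more than $d$ to the right of $x_S$, the square $S'$ lies inside $Y \subseteq P$; moreover, its top and bottom sides lie on $e_1$ and $e_2$ respectively (these edges span at least the horizontal extent of $Y$ by the definition of a strip), so Lemma~\ref{lem:2-side-poly} applies and $S'$ is maximal. For the candidate simplicial node I take $p$ to be the unit block $[x_S, x_S + 1] \times [y_1 - 1, y_1]$ lying at the top-left corner of $S'$, immediately to the right of $a$.

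The core of the argument, and the step I expect to be the main obstacle, is to characterize the $\GS$-neighbourhood of $p$. Given any valid square $Q$ containing $p$, I will show that the top side of $Q$ must lie at $y = y_1$ and that the side length $s$ of $Q$ satisfies $s \le d$. For the first claim, if the top of $Q$ were strictly above $y_1$ then $Q$ would contain a slab of the form $[x_S, x_S + 1] \times [y_1, y_1 + \delta]$; since $[x_S, x_S + 1]$ lies inside the horizontal span of $e_1$ (because $S \subseteq Y$ and this span contains the span of $Y$), this slab would sit above $e_1$ and hence outside $P$, contradicting the validity of $Q$. The second claim uses $e_2$ symmetrically: if $s > d$, then $Q$ would contain a slab $[x_S, x_S + 1] \times [y_1 - s, y_1 - d]$ lying below $e_2$ but inside the horizontal span of $e_2$, and therefore outside $P$. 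Here the strip-width hypothesis is essential, since it is what forces the relevant $x$-interval $[x_S, x_S + 1]$ to sit comfortably inside the horizontal extents of both $e_1$ and $e_2$. With these two bounds in hand, $Q$ is confined to the rectangle $[x_S - d + 1, x_S + d] \times [y_1 - d, y_1]$.

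The portion of this rectangle with $x$-coordinate less than $x_S$ lies inside $S$, whose blocks are covered by $\mathscr R$ and are therefore absent from $\GS$; the remaining portion coincides with $S'$, whose blocks are uncovered by hypothesis. Hence every $\GS$-neighbour of $p$ is a block of $S'$, and conversely every block of $S'$ is adjacent to $p$ via the valid square $S'$ itself, so $N_{\GS}[p]$ is exactly the block set of $S'$. These blocks are pairwise co-covered by $S'$, so Proposition~\ref{prop:sq-clique} makes them a clique in $G(P)$ and hence in $\GS$, and $p$ is simplicial. The maximal square $S'$ then covers $p$ together with all its $\GS$-neighbours, which is precisely the unambiguity condition of Definition~\ref{def:unambiguous}, and a final appeal to Lemma~\ref{lem:unambiguous} completes the proof.
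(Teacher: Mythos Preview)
Your proposal is correct and follows essentially the same route as the paper: both pick $p$ to be the top-left block of $S'$, argue that its closed $\GS$-neighbourhood is exactly the block set of $S'$, conclude that $S'$ is unambiguous given $\mathscr R$, and then invoke Lemma~\ref{lem:unambiguous}. Your write-up is in fact more careful than the paper's (which asserts the neighbourhood characterization without justification); the only quibble is that Lemma~\ref{lem:2-side-poly} states a necessary condition for maximality, not a sufficient one, so your appeal to it for the maximality of $S'$ is technically in the wrong direction---but the intended argument (both horizontal sides on polygon edges forces maximality) is immediate anyway.
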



\begin{proof}

    \begin{figure}
        \centering
        \includegraphics[width=0.8\linewidth]{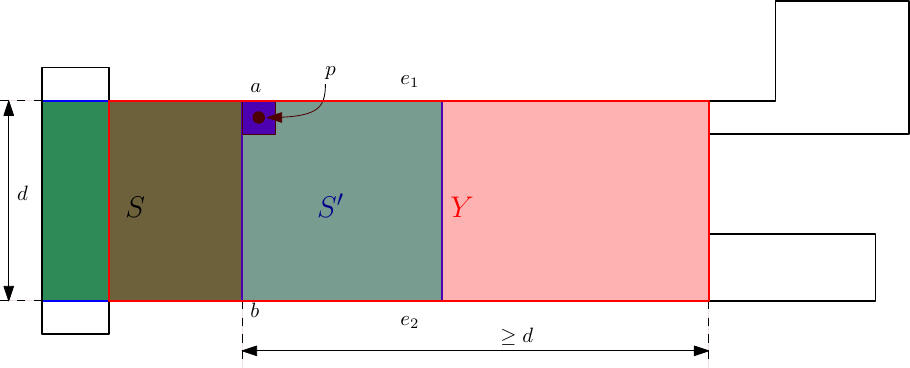}
        \caption{Proof of \cref{lem:max-sq-2}}
        \label{fig:max-sq-2}
    \end{figure}

     $S'$ is a valid square as it lies completely inside $Y$. Let $p$ be the block inside $S'$ which is located at the top-left corner of $S'$. Consider a partial solution $\mathscr R$ containing $S$ but not $S'$. 

     Any block $p'$ other than $p$ in $\GS$ is a neighbour of $p$, if and only if $p'$ is covered by $S'$. Hence all neighbours of $p$ in $\GS$ are covered by $S'$ and hence they must induce a clique in $\GS$. Therefore $p$ is a simplicial node in $\GS$, making $S'$ an unambiguous square given $\mathscr R$. Hence $\mathscr R \cup \{S'\}$ is a partial solution (\cref{lem:unambiguous}).
\end{proof}
    
Note that in \cref{lem:max-sq-2}, $S'$ forms a rec-pack of width $d$ and strength $1$. However, if the strip $Y$ is long enough, we can repeat the same procedure and reflect $S'$ with respect to its ride side to get $S''$. Now $S' \cup S''$ gives us a rec-pack of strength $2$. We can keep repeating this and get rec-packs of much larger strengths, as long as the strip $Y$ permits. We formalize this in the next result.

\begin{lemma} \label{lem:max-rec-pack-2}
    Let $S$ be a maximal square of an orthogonal polygon $P$ with side length $d$ such that, for some natural number $\eta \in \mathbb N$,
    \begin{itemize}
        \item the top and bottom sides of $S$ overlaps with horizontal polygon edges $e_1$ and $e_2$ respectively.
        \item $e_1$ contains the top right corner of $S$
        \item $e_2$ contains the bottom right corner of $S$
        \item there is a strip $Y$ between $e_1, e_2$, and the right side of the strip is more than $\eta d$ distance away from the right side of $S$.
    \end{itemize}
    Let $R$ be the rec-pack of width $d$ and strength $\eta$, such that
    \begin{itemize}
        \item the the vertical sides of $R$ are of length $d$, and the horizontal sides are of length $\eta d$
        \item the right side of the square $S$ coincides with the left side of $R$
    \end{itemize} 
    If there is a partial solution $\mathscr R$ containing $S$ such that $R$ does not overlap with any rec-pack in $\mathscr R$, then $\mathscr R \cup \{R\}$ is also a partial solution.
\end{lemma}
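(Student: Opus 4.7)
The plan is to prove \cref{lem:max-rec-pack-2} by induction on the strength $\eta$, adding one square at a time from the extraction of $R$ and repeatedly invoking \cref{lem:max-sq-2}. Let $\Ex(R) = \{S_1, S_2, \ldots, S_\eta\}$ with $S_i$ being the $i$-th $d \times d$ square from the left inside $R$, so that $S_1$ shares its left side with the right side of $S$, and $S_i$ shares its left side with the right side of $S_{i-1}$ for $i \geq 2$. I will show by induction on $i \in \{0, 1, \ldots, \eta\}$ that $\mathscr R_i := \mathscr R \cup \{S_1, \ldots, S_i\}$ is a partial solution. The base case $i = 0$ is the hypothesis of the lemma.

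For the inductive step, assume $\mathscr R_{i-1}$ is a partial solution and apply \cref{lem:max-sq-2} with $S_{i-1}$ (with the convention $S_0 = S$) playing the role of $S$ and $S_i$ playing the role of $S'$. I must verify the hypotheses of \cref{lem:max-sq-2}: (i) $S_{i-1}$ is a maximal square; this holds because $S_{i-1}$ lies entirely inside the strip $Y$ of height $d$ between the polygon edges $e_1$ and $e_2$, so any larger valid square containing $S_{i-1}$ would have to cross $e_1$ or $e_2$, a contradiction. (ii) The horizontal sides of $S_{i-1}$ overlap with $e_1$ and $e_2$, with the top-right and bottom-right corners of $S_{i-1}$ lying on $e_1$ and $e_2$ respectively; this is immediate from the side-by-side placement of squares of width $d$ inside the strip $Y$. (iii) The right side of $Y$ is more than $d$ distance from the right side of $S_{i-1}$; since the right side of $Y$ is more than $\eta d$ from the right side of $S$ by hypothesis, and the right side of $S_{i-1}$ is $(i-1)d$ from the right side of $S$, the remaining slack is more than $(\eta - i + 1)d \geq d$. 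Finally, $\mathscr R_{i-1}$ does not overlap with $S_i$ because $\mathscr R$ does not overlap with $R$ (and $S_i$ lies in $R$), while $S_1, \ldots, S_{i-1}$ lie strictly to the left of $S_i$ inside $R$.

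The most delicate point is the strip-length condition at the final iteration $i = \eta$, where the slack tightens to exactly the threshold \emph{more than $d$} required by \cref{lem:max-sq-2}; the strict inequality in the hypothesis of \cref{lem:max-rec-pack-2} is precisely what preserves strictness through every step of the induction, including the last. Once the induction produces $\mathscr R_\eta = \mathscr R \cup \{S_1, \ldots, S_\eta\}$ as a partial solution, there is a minimum covering set of rec-packs $\mathscr R^\star \supseteq \mathscr R_\eta$. Replacing the $\eta$ trivial rec-packs $S_1, \ldots, S_\eta$ in $\mathscr R^\star$ by the single non-trivial rec-pack $R$ gives a set $(\mathscr R^\star \setminus \{S_1, \ldots, S_\eta\}) \cup \{R\}$ with the same extraction as $\mathscr R^\star$, so it is again a minimum covering set of rec-packs; and it contains $\mathscr R \cup \{R\}$, showing that the latter is a partial solution and completing the proof.
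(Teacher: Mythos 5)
Your proposal is correct and follows essentially the same route as the paper: the paper also proves this by repeatedly applying \cref{lem:max-sq-2} to the side-by-side squares $S_1,\ldots,S_\eta$ whose union is $R$, concluding that $\mathscr R \cup \{S_1,\ldots,S_\eta\}$ and hence $\mathscr R \cup \{R\}$ is a partial solution. Your write-up merely spells out the hypothesis checks (maximality of the intermediate squares, the slack bound, non-overlap) and the final rec-pack replacement step, which the paper leaves implicit.
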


\begin{proof}
    The proof is just a repeated application of \cref{lem:max-sq-2}. Let $S_1$ be the square obtained by reflecting $S$ along its right side, let $S_2$ be the square obtained by reflecting $S_1$ along its right side, and so on, till we define $S_\eta$. Note that $R$ is same as the region defined by $S_1 \cup S_2 \cup \cdots \cup S_\eta$.

    Applying \cref{lem:max-sq-2} to $\mathscr R$ and $S$, we get that $\mathscr R \cup \{S_1\}$ is a partial solution. Applying it again to $\mathscr R \cup \{S_1\}$ and $S_1$, we get $\mathscr R \cup (\{S_1\} \cup \{S_2\})$ is a partial solution. Continuing in this fashion gives us that $\mathscr R \cup (\{S_1\} \cup \{S_2\} \cup \cdots \{S_\eta\})$. This directly implies that $\mathscr R \cup \{R\}$ is a partial solution.
\end{proof}

\begin{remark}
    \cref{lem:max-sq-2} and \cref{lem:max-rec-pack-2} holds true symmetrically for all four directions.
\end{remark}

Notice that in \cref{lem:max-rec-pack-2}, $\eta$ can be arbitrarily large. However, detecting if \cref{lem:max-rec-pack-2} is applicable does not become any more difficult for large values of $\eta$. This will be crucially used to add multiple squares as a single rec-pack to the partial solution, in a single iteration. In later sections we will show how this makes our algorithm run in polynomial time with respect to $n$, the guarantee being independent of the output.

\section{Polynomial-Time Algorithm with respect to the Number of Polygonal Vertices}\label{sec:polytime}
In this Section, we design polynomial-time algorithm for {\sc OPCS} with respect to the number of vertices $n$, of the orthogonal polygon. The algorithm runs in $\mathcal{O}(n^{10})$ time. We first discuss some subroutines the algorithm algorithm uses, followed by the algorithm itself with its analysis.

\subsection{Checking if a set of rec-packs covers an orthogonal polygon}

The first subtask we discuss is to check whether a given set of rec-packs $\mathscr R$ completely covers an orthogonal polygon $P$. However, this is equivalent to checking if the area of the rec-packs in $\mathscr R$ is equal to the area of $P$.

However, this is exactly Klee's measure problem in two dimensions~\cite{Klee1977CanTM}, asking the area of the union of rectangles. Using Bentley's Algorithm~\cite{ben-or1983}, this can be solved in time $\OO(|\mathscr R| \log |\mathscr R|)$.

\begin{lemma} \label{lem:check-algo}
    There exists an algorithm to check in $\OO(|\mathscr R| \log |\mathscr R|)$, whether the set of rectangles $\mathscr R$ completely cover an orthogonal polygon $P$.
\end{lemma}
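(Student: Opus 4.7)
The strategy is to reduce the covering check to an area comparison, which in turn reduces to Klee's measure problem.

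First, I would compute the area $A_P$ of the orthogonal polygon $P$ exactly. Since $P$ has $n$ vertices with integer coordinates, a single pass of the shoelace formula yields $A_P$ in $\OO(n)$ time; this cost is subsumed by $\OO(|\mathscr R| \log |\mathscr R|)$ in any non-degenerate instance (otherwise $\mathscr R$ trivially cannot cover $P$ and we reject).

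Next, I would compute the total area $A_{\mathscr R}$ of the union $\bigcup_{R \in \mathscr R} R$ of the axis-parallel rectangles in $\mathscr R$. This is precisely Klee's measure problem in two dimensions, for which Bentley's sweep-line algorithm (maintaining a segment tree indexed by the sorted $y$-coordinates of the input and sweeping in the $x$-direction over the $2|\mathscr R|$ vertical rectangle sides) computes the area of the union in $\OO(|\mathscr R| \log |\mathscr R|)$ time, as cited in the excerpt.

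Finally, the algorithm outputs ``yes'' iff $A_P = A_{\mathscr R}$. The correctness argument is short: by the definition of a rec-pack (\cref{def:rec-pack}), every $R \in \mathscr R$ lies entirely inside $P$, so $\bigcup_{R \in \mathscr R} R \subseteq P$ and hence $A_{\mathscr R} \le A_P$, with equality iff the union of the rec-packs equals $P$ (as both sets are closed and have the same Lebesgue measure while one contains the other). Since $\mathscr R$ covers $P$ in the sense of \cref{def:extraction} exactly when the union of the rec-packs equals $P$, the equality $A_P = A_{\mathscr R}$ is the desired certificate.

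The only conceptual care needed is the ``containment'' step: one must either invoke the \cref{def:rec-pack} guarantee that rec-packs lie inside $P$, or, if $\mathscr R$ is supplied by an untrusted source, first test each $R \in \mathscr R$ for containment in $P$ in $\OO(n)$ per rec-pack; in our algorithmic pipeline the rec-packs are produced by our own procedure and hence this step is free. Total running time: $\OO(|\mathscr R| \log |\mathscr R|)$, as claimed.
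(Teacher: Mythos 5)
Your proposal is correct and matches the paper's proof: both reduce the covering check to comparing the area of $P$ (computed once, e.g.\ by the shoelace formula) with the area of the union of the rectangles, computed via Bentley's sweep-line solution to Klee's measure problem in $\OO(|\mathscr R|\log|\mathscr R|)$ time, using the fact that rec-packs lie inside $P$ so area equality certifies coverage. Your added remarks on containment and measure-theoretic equality are just a more explicit spelling-out of the same argument.
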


We assumed that the area of the polygon $P$ is already known. This is a natural assumption as this can be computed in $\OO(n)$ time using the shoelace formula or the surveyor's area formula~\cite{bart1986}, which is asymptotically the same as just reading the polygon.

\subsection{Detecting unambiguous squares}

Next, we discuss an algorithm that takes in a square $S$, a partial solution $\mathscr R$ for {\sc OPCS} and the input orthogonal polygon $P$; and decides if $S$ is an unambiguous square given $\mathscr R$. We crucially use \cref{lem:1-pad} for the correctness of the algorithm.

\begin{lemma} \label{lem:detect-algo}
    Given an orthogonal polygon $P$, a partial solution $\mathscr R$ for {\sc OPCS} and valid square $S$, there is an algorithm that checks if $S$ is an unambiguous square given $\mathscr R$ in $\OO(n(n + |\mathscr R|)^2)$ time.
\end{lemma}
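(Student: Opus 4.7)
The plan is to turn the unambiguity test into a coverage check against a polynomially-bounded family of candidate maximal squares via \cref{lem:1-pad}. Observe that $S$ is unambiguous given $\mathscr R$ iff there is an uncovered block $p \in S$ (\emph{i.e.}, $p$ is a node of $\GS$ lying inside $S$) such that every neighbour of $p$ in $\GS$ also lies inside $S$: any such $p$ is automatically simplicial because $S$ itself is a valid square covering $p$'s closed neighbourhood, so by \cref{prop:sq-clique} that neighbourhood induces a clique in $\GS$. \cref{lem:1-pad} now provides the crucial handle in the reverse direction: a block $p \in S$ has a neighbour outside $S$ in $\GS$ iff some maximal square $S^\star$ with a corner in $D_x \cup D_y$ contains both $p$ and an uncovered block lying outside $S$.

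Concretely, the algorithm proceeds in three stages. First, I would enumerate $D_x \cup D_y$ --- a set of size $\OO(n(n+|\mathscr R|))$ --- and, using the quadrant-based construction from the proof of \cref{lem:unique-mcs}, compute the set $\mathcal D'$ of maximal squares that have a corner at one of these lattice points. Since each candidate corner yields $\OO(1)$ maximal squares and each such square is produced in $\OO(n)$ time, $|\mathcal D'| = \OO(n(n+|\mathscr R|))$ and $\mathcal D'$ is assembled in $\OO(n^2(n+|\mathscr R|))$ time. Second, I would filter $\mathcal D'$ down to $\mathcal D'_{\text{ext}}$, consisting of those $S^\star$ for which $S^\star \setminus (S \cup \mathscr R)$ contains some lattice point inside $P$; since $S^\star \setminus S$ decomposes into $\OO(1)$ axis-parallel sub-rectangles, each such test reduces to a constant number of coverage checks of a rectangle against the rectangles in $\mathscr R$, handled in the spirit of \cref{lem:check-algo} at cost $\OO(n+|\mathscr R|)$ per square. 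Third, I would decide unambiguity by checking whether $S \cap P \subseteq \mathscr R \cup \bigcup_{S^\star \in \mathcal D'_{\text{ext}}} S^\star$ via one more coverage/area computation. If the containment holds, every uncovered block inside $S$ is witnessed by some $S^\star \in \mathcal D'_{\text{ext}}$ to have an external neighbour in $\GS$, so there is no simplicial witness inside $S$ and $S$ is not unambiguous. Otherwise, some uncovered block $p \in S$ is not covered by any $S^\star \in \mathcal D'_{\text{ext}}$, and by the contrapositive of \cref{lem:1-pad} all of its $\GS$-neighbours lie in $S$, certifying $p$ as a simplicial witness and $S$ as unambiguous.

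The main obstacle will be packing the filtering step into the claimed $\OO(n(n+|\mathscr R|)^2)$ budget: a naive Klee-measure sweep per candidate would add a logarithmic factor, so the analysis must exploit the fact that each $S^\star \setminus S$ is an $\OO(1)$-complexity region and that a single sweep over $\mathscr R$ suffices to test its coverage. Putting the three stages together, the dominant cost is $\OO(n(n+|\mathscr R|))$ coverage tests of $\OO(n+|\mathscr R|)$ each, yielding the stated $\OO(n(n+|\mathscr R|)^2)$ running time.
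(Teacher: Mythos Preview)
Your proposal is correct and matches the paper's argument essentially line for line: the paper also enumerates the maximal squares with corners in $D_x \cup D_y$, filters to those (called $\mathcal D'$ there) that cover some block outside $\mathscr R \cup \{S\}$, and decides unambiguity by a single coverage test of $S$ against $\mathscr R \cup \mathcal D'$ via \cref{lem:check-algo}. The one step you omit is the preliminary $\OO(n)$ check that $S$ is maximal, which the paper performs first; without it your ``iff'' in the opening paragraph fails, since \cref{def:unambiguous} requires $S$ to be maximal even when a simplicial witness with all neighbours in $S$ exists.
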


\begin{proof}
    We assume $S$ is maximal. We start by constructing the sets $D_x$ and $D_y$ as in \cref{lem:1-pad}. By definition, $|D_x|, |D_y| = \OO((n + |\mathscr R| + 1) \cdot n) = \OO(n^2 + |\mathscr R|n)$. Let $\mathcal D$ be the set of maximal squares with a corner in $D_x \cup D_y$. Among these, let $\mathcal D' \subseteq \mathcal D$ be the set of maximal squares which cover at least one block that is not covered by $\mathscr R$ or $S$.
    
    Suppose there is a block $q$ covered by $S$, which has a neighbour $q'$ in $\GS$ lying outside $S$. By \cref{lem:1-pad}, there will be a square $S_q \in \mathcal D'$ which covers $q$.
    
    $S$ is an unambiguous square given $\mathscr R$ if and only if there is a simplicial node $p$ in $\GS$ such that $S$ covers $p$ and all its neighbours in $\GS$ (\cref{def:unambiguous}). 
    Therefore, there is no square $S_p \in \mathcal D'$ that covers $p$. This means that $S$ is not unambiguous if and only if the rectangles in $\mathscr R \cup \mathcal D'$ cover up $S$ entirely (along with possibly some portion outside $S$ as well). 

    We can use \cref{lem:check-algo} to actually check this in time $\OO((|\mathscr R| + |\mathcal D'|) \log (|\mathscr R| + |\mathcal D'|)) = \OO((n^2 + |\mathscr R|n)\log(n^2 + |\mathscr R|n))$ by checking if the rectangles $\{R \cap S \mid R \in \mathscr R \cup \mathcal D'\}$ cover $S$ completely or not.

    We have $|\mathcal D'| \le |\mathcal D| \le 4|D_x| + 4|D_y| = \OO(n^2 + |\mathscr R|n)$. Computing $D_x, D_y$ are trivial from their definitions. $\mathcal D$ can be computed in $\OO(|\mathcal D|n) = \OO((n^2 + |\mathscr R|n)n)$ time using \cref{lem:unique-mcs}. Moreover, for every square in $\mathcal D$, it can be checked in $\OO(|\mathscr R|)$ time if it lies completely inside the region defined by $\mathscr R \cup \{S\}$. We throw out all such squares, and just keep the rest in $\mathcal D'$. Hence $\mathcal D'$ can be computed in $\OO(|\mathcal D'||\mathscr R|) = \OO((n^2 + |\mathscr R|n)|\mathscr R|)$ time. 

    This gives us a total time complexity of $\OO(n(n + |\mathscr R|)^2)$. Putting everything together, the algorithm we get is as follows.

\begin{algorithm}[H]
    \caption{Check-if-Unambiguous ~~~ \textbf{Input:} $P$, $\mathscr R$, $S$ }\label{alg:detect-algo}
    \begin{algorithmic}
    \If{$S$ is not maximal} \Comment{$\OO(n)$ time}
        \State{\Return{NOT-UNAMBIGUOUS}}
    \EndIf
    \State{Construct $D_x$, $D_y$ as in \cref{lem:1-pad}} 
    \State{$\mathcal D \gets $ \{maximal squares with a corner in $D_x \cup D_y$\}} \Comment{$\OO((n^2 + |\mathscr R|n) n)$}
    \State{$\mathcal D' \gets $ \{squares in $\mathcal D$ covering a block not covered by $\mathscr R \cup \{S\}$\}} \Comment{$\OO((n^2 + |\mathscr R|n)|\mathscr R|)$}
    \If{$\{R \cap S \mid R \in \mathscr R \cup \mathcal D'\}$ cover $S$ completely} \Comment{$\OO((|\mathscr R| + |\mathcal D'|) \log (|\mathscr R| + |\mathcal D'|))$}
        \State{\Return NOT-UNAMBIGUOUS}
    \Else
        \State{\Return UNAMBIGUOUS}
    \EndIf
\end{algorithmic}
\end{algorithm}
\end{proof}

\subsection{Generating rec-packs within a strip}

Our next step would be to algorithmically use \cref{lem:max-rec-pack-2} to extend an existing partial solution by adding a rec-pack. In order to do this, we look into a maximal square $S$, which we call the `seed', and check if there is a corresponding strip $Y$ (as defined in \cref{lem:max-rec-pack-2}). If they exist, we construct the corresponding rec-pack which does not intersect with the current partial solution, and add it to our partial solution. The resultant set of rec-packs still form a partial solution due to \cref{lem:max-rec-pack-2}. We formally discuss this algorithmic result.

\begin{lemma}\label{lem:gen-rec-pack}
    Given a maximal square $S$ of side length $d$, referred to as the `seed', and a partial solution $\mathscr R$ containing $S$, there is an algorithm to test if there are horizontal polygon edges $e_1, e_2$ such that
    \begin{itemize}
        \item the top and bottom sides of $S$ overlaps with horizontal polygon edges $e_1$ and $e_2$ respectively.
        \item $e_1$ contains the top right corner of $S$
        \item $e_2$ contains the bottom right corner of $S$
        \item there is a strip $Y$ between $e_1, e_2$, and the right side of the strip is more than $d$ distance away from the right side of $S$.
    \end{itemize}
    Moreover, if this exists, the algorithm finds and returns the rec-pack $R$ of width $d$, and the largest possible strength $\eta \ge 1$, such that:
    \begin{itemize}
        \item the vertical sides of $R$ are of length $d$ 
        \item right side of the square $S$ coincides with the left side of $R$
        \item $R$ does not overlap with partial solution $\mathscr R$
    \end{itemize}
    The algorithm runs in total $\mathcal O(n + |\mathscr R|)$ time.

\end{lemma}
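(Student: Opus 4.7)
The plan is to reduce each condition of \cref{lem:max-rec-pack-2} to a direct geometric test computable in time linear in $n + |\mathscr R|$, and then compute the largest valid strength $\eta$ by taking the minimum of two constraints: the strip's right extent and the nearest blocking rec-pack in $\mathscr R$.

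First, I would iterate once over the $n$ polygon edges to identify $e_1$ and $e_2$: a horizontal polygon edge containing the top side of $S$ with the top-right corner of $S$ lying on it, and symmetrically for the bottom. If no such pair exists, report failure. Next, let $y_1, y_2$ denote the $y$-coordinates of $e_1, e_2$ and let $x_S^{\text{right}}$ be the $x$-coordinate of the right side of $S$. Compute the right boundary $x_Y$ of the strip between $e_1$ and $e_2$ as the smallest $x > x_S^{\text{right}}$ at which an obstruction occurs; the candidates are (i) a polygon vertex with $y$-coordinate in the open interval $(y_2, y_1)$ and $x$-coordinate greater than $x_S^{\text{right}}$, and (ii) the right endpoint of $e_1$ or of $e_2$. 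By \cref{obs:one-strip-per-pair} the strip between $e_1$ and $e_2$ is unique, so $x_Y$ is the minimum over this set of candidates, and can be computed in $O(n)$ by a single pass over polygon vertices and edges. If $x_Y - x_S^{\text{right}} \le d$, the strip is not wide enough to fit an additional square of side $d$ to the right of $S$, so report failure.

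To cap the strength by $\mathscr R$, let $\eta_Y = \lfloor (x_Y - x_S^{\text{right}})/d \rfloor$ be the maximum strength permitted by the strip. Iterate over $\mathscr R$ and, for each rec-pack $R' \in \mathscr R$ whose region intersects the horizontal band $[y_2, y_1]$ and whose $x$-extent intersects $(x_S^{\text{right}}, x_S^{\text{right}} + \eta_Y d]$, compute its left $x$-coordinate; let $x_{\mathscr R}$ denote the minimum such value (or $+\infty$ if no such rec-pack exists). Set $\eta = \min(\eta_Y, \lfloor (x_{\mathscr R} - x_S^{\text{right}})/d \rfloor)$. If $\eta \ge 1$, output the rec-pack of width $d$ and strength $\eta$ with its left side coinciding with the right side of $S$; otherwise report failure. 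Correctness follows from \cref{lem:max-rec-pack-2}, since the returned $R$ lies inside the strip $Y$ (by the choice of $\eta_Y$) and is disjoint from $\mathscr R$ (by the choice of $x_{\mathscr R}$); maximality of $\eta$ is immediate, as any larger strength would either cross the strip's right boundary or overlap some rec-pack in $\mathscr R$. The total running time is $O(n) + O(|\mathscr R|) = O(n + |\mathscr R|)$.

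The main obstacle I anticipate is rigorously justifying the characterization of $x_Y$ as the minimum of the two types of candidate obstructions: a polygon can in principle have arbitrarily complicated concave features jutting into the horizontal band, and one has to argue that no other event can be the actual blocker. \cref{obs:one-strip-per-pair} makes this manageable, since uniqueness of the strip means it suffices to locate the first obstruction to the right; once this characterization is established, the remaining steps amount to a straightforward linear scan.
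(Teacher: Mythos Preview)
Your proposal is correct and follows essentially the same approach as the paper: a linear scan over polygon edges to locate $e_1,e_2$, a linear scan over polygon vertices between $e_1,e_2$ to find the first obstruction to the right (the paper's $x_2$, your $x_Y$), a linear scan over rec-packs in $\mathscr R$ to find the first blocking rec-pack (the paper's $x_3$, your $x_{\mathscr R}$), and then $\eta = \lfloor(\min - x_S^{\text{right}})/d\rfloor$. The only cosmetic difference is that the paper combines the two constraints into a single $\min(x_2,x_3)$ before the floor, whereas you compute $\eta_Y$ first and then cap by $\mathscr R$; and the paper tests ``corner of a rec-pack lying between $e_1,e_2$'' where you test ``rec-pack intersecting the band'', but in this setting these coincide. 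The justification you flag as the main obstacle (that the first polygon vertex in the band is indeed the strip's right boundary) is treated just as lightly in the paper.
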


\begin{proof}
Consider the following algorithm.
\begin{algorithm}[H]
        \caption{Generate-Rec-pack-left ~~~ \textbf{Input:} $P$, seed $S$, set of rec-packs $\mathscr R$}\label{alg:gen-rec-pack}
        \begin{algorithmic}
            \State{Find $e_1$, $e_2$ containing top-right and bottom-right corner of $S$} \Comment{$\OO(n)$ check}
            \If{$e_1$ or $e_2$ is not found} \State{\Return NONE}
            \EndIf
            \State{$x_1 \gets$ x coordinate of right side of $S$}
            \State{$x_2 \gets \min\{\text{x coordinate $x'$ of vertex in $P$ lying between }e_1, e_2 \text{ such that } x' > x_1$\}}
            \State{$x_3 \gets \min\{\text{x coordinate $x'$ of corner in $\mathscr R$ lying between }e_1, e_2 \text{ such that } x' > x_1$\}}
            \State{$d \gets$ side length of $S$}
            \If{$\min(x_2, x_3) - x_1 \le d$}\Comment {rec-pack $R$ does not exist}
            \State{\Return NONE} 
            \EndIf
            \State{$\eta \gets \left \lfloor \dfrac{\min(x_2, x_3) - x_1}{d} \right \rfloor$} \Comment{gets maximum possible strength $\eta$}
            \State{\Return rec-pack $R$ of strength $\eta$, width $d$ sharing left-side with right-side of $S$}
        \end{algorithmic}
    \end{algorithm}

    \begin{figure}
        \centering
        \includegraphics[width=0.8\linewidth]{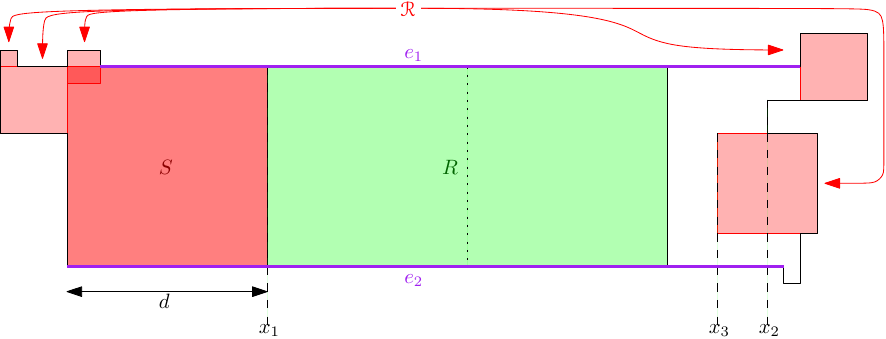}
        \caption{Setting and proof of \cref{lem:gen-rec-pack}}
        \label{fig:gen-rec-pack}
    \end{figure}

    This algorithm first checks for the existence of polygonal edges $e_1$ and $e_2$ by looping over all polygon-edges. Next it checks if there is a valid rec-pack $R$, by checking the x-coordinate $x_2$ of the polygon vertex that bounds the right side of $Y$, as well as the x-coordinate $x_3$, of some rec-pack in $\mathscr R$ that restricts the right side of $R$, as they cannot overlap (Figure~\ref{fig:gen-rec-pack}). If either of these fail, there is no rec-pack $R$, and the algorithm returns `NONE'.

    Otherwise, it finds the largest value of $\eta$ and returns the rec-pack $R$ as required. The entire algorithm runs in time $\OO(n + |\mathscr R|)$ time, as the algorithm just consists of loops over the vertices of $P$ and corners of rec-packs in $\mathscr R$.
\end{proof}

\begin{remark}
    \cref{lem:gen-rec-pack} is applicable for all four directions.
\end{remark}




\subsection{Polynomial-time algorithm: building up partial solutions}
Finally, we are ready to state the polynomial time algorithm for {\sc OPCS} when the input orthogonal polygon $P$ is given in terms of its $n$ vertices. We use the following framework:
\begin{enumerate}
    \item Start with an empty partial solution.
    \item Keep extending the partial solution either by adding unambiguous squares (\cref{alg:detect-algo}) or by adding rec-packs generated by a seed (\cref{alg:gen-rec-pack}).
    \item Terminate when $P$ is completely covered (\cref{lem:check-algo}).
\end{enumerate}

This must be a minimum cover as throughout the algorithm, the set of rec-packs is guaranteed to be a partial solution (\cref{lem:unambiguous} and \cref{lem:max-rec-pack-2}). 

To find unambiguous squares given a partial solution $\mathscr R$ of {\sc OPCS}, we generate all possible maximal squares having end points in $C_x \cup C_y$ as defined in \cref{lem:grid-unambiguous}. We are guaranteed to get an unambiguous square given $\mathscr R$ (\cref{lem:grid-unambiguous}). Next, we test for all such generated squares if it is unambiguous using \cref{alg:detect-algo}. 

Formally, the algorithm is as follows.
\begin{algorithm}[H]
    \caption{Extend-Partial-Solution ~~~ \textbf{Input:} Orthogonal Polygon $P$ with $n$ vertices}\label{alg:polytime}
    \begin{algorithmic}
        \State{$\mathscr R \gets \emptyset$} \Comment{Empty partial solution}
        \While{$\mathscr R$ does not cover $P$} \Comment{$\OO(|\mathscr R| \log |\mathscr R|)$ check, \cref{lem:check-algo}}
            \State{Construct $C_x, C_y$ as \cref{lem:grid-unambiguous}} \Comment{$|C_x|,|C_y| = \OO(n (n + |\mathscr R|))$}
            \State{$\mathscr S' \gets \{\text{squares with corner in }C_x \cup C_y\}$} \Comment{$|\mathscr S'| = \OO(n  (n + |\mathscr R|))$, time $= \OO(|\mathscr S'| \cdot n)$}
            \For{$S \in \mathscr S'$} \Comment{$|\mathscr S'| = \OO(n \cdot (n + |\mathscr R|))$ loop}
                \If{$S$ is unambiguous in $P$ given $\mathscr R$}\Comment{$\OO(n(n + |\mathscr R|)^2)$, \cref{alg:detect-algo}}
                    \State{$\mathscr R \gets \mathscr R \cup \{S\}$} \Comment{$\mathscr R$ remains a partial solution, \cref{lem:unambiguous}}
                    \State{$R_l \gets $ generated rec-pack with seed $S$ to the left}. \Comment {$\OO(n)$, \cref{alg:gen-rec-pack}}
                    \State{$R_r \gets $ generated rec-pack with seed $S$ to the right}. \Comment {$\OO(n)$, \cref{alg:gen-rec-pack}}
                    \State{$R_t \gets $ generated rec-pack with seed $S$ to the top}. \Comment {$\OO(n)$, \cref{alg:gen-rec-pack}}
                    \State{$R_b \gets $ generated rec-pack with seed $S$ to the bottom}. \Comment {$\OO(n)$, \cref{alg:gen-rec-pack}}
                    \For{$R \in \{R_l, R_r, R_t, R_b\}$}
                        \If{$R \ne$ NONE} \Comment{\cref{alg:gen-rec-pack} output}
                            \State{$\mathscr R \gets \mathscr R \cup \{R\}$} \Comment{$\mathscr R$ remains a partial solution, \cref{lem:max-rec-pack-2}}
                        \EndIf
                    \EndFor
                    \State{\textbf{break} out of the \textbf{for} loop, continue to the next \textbf{while} loop iteration}
                \EndIf
            \EndFor
        \EndWhile
        \State{\Return $\sum \limits_{R \in \mathscr R}(\text{strength of }R)$} \Comment{$\OO(|\mathscr R|)$ loop}
    \end{algorithmic}
\end{algorithm}

From the discussion above, we obtain the following statement.

\begin{lemma}\label{lem:stops}
    Algorithm~\ref{alg:polytime} (Extend-Partial-Solution) terminates in finite time and outputs the minimum number of squares to cover $P$.
\end{lemma}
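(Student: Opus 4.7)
The plan is to show that the algorithm maintains the invariant that $\mathscr R$ is always a partial solution of $P$, and that each pass through the \textbf{while} loop strictly decreases the number of uncovered blocks. Together these give termination in finite time, while the invariant at termination yields optimality.

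I would prove the invariant by induction on the iterations of the \textbf{while} loop. Initially $\mathscr R = \emptyset$ is trivially a partial solution. For the inductive step, suppose at the start of an iteration $\mathscr R$ is a partial solution not yet covering $P$. By \cref{lem:grid-unambiguous}, some maximal valid square with a corner in $C_x \cup C_y$ is unambiguous given $\mathscr R$; since $\mathscr S'$ is precisely this (finite) set of candidate maximal squares, \cref{alg:detect-algo} --- whose correctness is \cref{lem:detect-algo} --- is guaranteed to identify an unambiguous $S \in \mathscr S'$. Adding $S$ to $\mathscr R$ preserves the partial-solution property by \cref{lem:unambiguous}. Each rec-pack subsequently returned by \cref{alg:gen-rec-pack} satisfies the hypotheses of \cref{lem:max-rec-pack-2} (by \cref{lem:gen-rec-pack}) and so may also be added without violating the invariant. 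The four directional rec-packs $R_l, R_r, R_t, R_b$ live on disjoint sides of $S$, so adding them sequentially within one iteration is well-defined and the argument for each one goes through independently.

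Termination follows because an unambiguous square, by \cref{def:unambiguous}, covers some simplicial node of $\GS$, i.e., a block of $P$ not covered by any rec-pack in $\mathscr R$. So each completed iteration strictly decreases the (non-negative integer) number of uncovered blocks, and the loop must halt. At termination $\mathscr R$ covers $P$ and is a partial solution, say $\mathscr R \subseteq \mathscr R^\star$ with $\mathscr R^\star$ a minimum rec-pack covering; if the inclusion were strict, the extraction-disjointness clause in the definition of a minimum rec-pack covering would force $|\Ex(\mathscr R^\star)| > |\Ex(\mathscr R)|$, contradicting that $\Ex(\mathscr R)$ already covers $P$ while $|\Ex(\mathscr R^\star)| = \textsc{\OPCS}(P)$ is the minimum. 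Hence $\mathscr R = \mathscr R^\star$, and the returned sum $\sum_{R \in \mathscr R}(\text{strength of }R) = |\Ex(\mathscr R)|$ equals $\textsc{\OPCS}(P)$. The main subtlety I expect is ensuring the inner \textbf{for} loop cannot stall without finding an unambiguous square, since otherwise the \textbf{while} loop would spin forever on an unchanged $\mathscr R$; this is precisely what \cref{lem:grid-unambiguous} rules out by confining the search to the polynomial-sized set $\mathscr S'$ and simultaneously guaranteeing that it contains at least one unambiguous element. Everything else is a clean accumulation of the structural and algorithmic lemmas already established in \cref{sec:structure}.
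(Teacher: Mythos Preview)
Your proposal is correct and follows essentially the same approach as the paper: maintain the invariant that $\mathscr R$ is a partial solution via \cref{lem:unambiguous} and \cref{lem:max-rec-pack-2}, argue termination because each iteration covers a previously uncovered block, and conclude optimality from the invariant at termination. You are simply more explicit than the paper on points it leaves implicit---in particular, invoking \cref{lem:grid-unambiguous} and \cref{lem:detect-algo} to guarantee the inner \textbf{for} loop always succeeds, noting that the four directional rec-packs are pairwise non-overlapping so their sequential addition is sound, and spelling out why a partial solution that covers $P$ must itself be a minimum covering.
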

\begin{proof}
    \textbf{Termination.} Each iteration of the while loop takes finite time and all steps outside the while loop take finite time as well. Moreover, the number of while loop iterations is also finite, as in each step the algorithm covers some previously uncovered block. Hence \cref{alg:polytime} terminates in finite time.

    \textbf{Minimum Covering.} By \cref{lem:unambiguous} and \cref{lem:max-rec-pack-2}, $\mathscr R$ is always a partial solution, and therefore if $\mathscr R$ covers $P$ completely, it has to be a minimum cover.
\end{proof}

Our next step is to bound the running time of \cref{alg:detect-algo} with respect to the final set of rec-packs $\mathscr R$.

\begin{lemma}\label{lem:time-square-count}
    If the final set of rec-packs from \cref{alg:polytime} is $\mathscr R$, then its total running time is $\OO(n^2|\mathscr R|(n + |\mathscr R|)^3)$. 
\end{lemma}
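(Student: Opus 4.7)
The plan is to carry out a straightforward two-stage running-time accounting: bound the cost of a single iteration of the outer while loop in terms of the partial solution size at that iteration, and then bound the number of iterations by the final size $|\mathscr R|$. Combining these gives the claimed $\OO(n^2|\mathscr R|(n+|\mathscr R|)^3)$.

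First I would fix notation: let $\mathscr R_t$ denote the partial solution at the start of the $t$-th iteration of the while loop, and note that $\mathscr R_t \subseteq \mathscr R$, so $|\mathscr R_t| \le |\mathscr R|$ at every iteration. For a single iteration, I bound each step using the annotations in \cref{alg:polytime}: the coverage check via \cref{lem:check-algo} costs $\OO(|\mathscr R_t|\log|\mathscr R_t|)$; the sets $C_x,C_y$ from \cref{lem:grid-unambiguous} have size $\OO(n(n+|\mathscr R_t|))$ and are constructed in time proportional to their size; the square collection $\mathscr S'$ likewise has $\OO(n(n+|\mathscr R_t|))$ elements, and building each maximal square via \cref{lem:unique-mcs} takes $\OO(n)$ time, for a total of $\OO(n^2(n+|\mathscr R_t|))$. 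The inner for-loop then invokes \cref{alg:detect-algo} on each $S\in\mathscr S'$, costing $\OO(n(n+|\mathscr R_t|)^2)$ per call by \cref{lem:detect-algo}, for a total of $\OO(n(n+|\mathscr R_t|)) \cdot \OO(n(n+|\mathscr R_t|)^2) = \OO(n^2(n+|\mathscr R_t|)^3)$. The four rec-pack generations via \cref{alg:gen-rec-pack} add only $\OO(n+|\mathscr R_t|)$ each (\cref{lem:gen-rec-pack}), which is dominated. Hence a single while-loop iteration runs in $\OO(n^2(n+|\mathscr R|)^3)$ time using the uniform upper bound $|\mathscr R_t|\le|\mathscr R|$.

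Next I would bound the number of iterations of the outer while loop. Each iteration either terminates the algorithm (when $\mathscr R$ covers $P$) or succeeds in executing the inner \textbf{break}; in the latter case at least one new rec-pack (the unambiguous square $S$) is appended to $\mathscr R$. Since rec-packs are never removed, the number of executed iterations is at most $|\mathscr R|+1 = \OO(|\mathscr R|)$. Multiplying the per-iteration cost by the iteration count yields $\OO(|\mathscr R|) \cdot \OO(n^2(n+|\mathscr R|)^3) = \OO(n^2|\mathscr R|(n+|\mathscr R|)^3)$, as claimed.

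The main obstacle is not really technical but bookkeeping: I must ensure that the partial solution size that appears in each subroutine's complexity is bounded by the \emph{final} $|\mathscr R|$, which is valid because $|\mathscr R_t|$ is monotonically non-decreasing in $t$. A minor subtlety is that the iteration count must actually be bounded by $|\mathscr R|$ rather than by the number of blocks covered per iteration; this is immediate because the inner for-loop is guaranteed to find an unambiguous square (by \cref{lem:grid-unambiguous}) whenever $P$ is not yet covered, so every non-terminal iteration strictly increases $|\mathscr R_t|$ by at least one. Once these observations are in place, the stated running-time bound follows directly.
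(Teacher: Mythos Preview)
Your proposal is correct and follows essentially the same approach as the paper: bound the cost of a single while-loop iteration by $\OO(n^2(n+|\mathscr R|)^3)$ (with the unambiguous-square detection dominating), observe that each non-terminal iteration adds at least one rec-pack so there are $\OO(|\mathscr R|)$ iterations, and multiply. Your explicit introduction of $\mathscr R_t$ and the monotonicity remark is slightly more careful bookkeeping than the paper's version, but the argument is the same.
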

\begin{proof}
    First we analyse the running time of each iteration of the while loop.
    \begin{itemize}
        \item Checking the condition of the while loop takes $\OO(|\mathscr R| \log |\mathscr R|)$ time.
        \item Generating $\mathscr S'$ takes time $\OO(n|\mathscr S'|) = \OO(n^2(n + |\mathscr R|))$ time.
        \item Checking for all possible squares in $\mathscr S'$ if they are ambiguous takes time $\OO(n (n + |\mathscr R|) \cdot n(n + |\mathscr R|)^2) = \OO(n^2(n + |\mathscr R|)^3)$. This will asymptotically be the slowest step. 
        \item Generating rec-packs with the seed $S$ and the respective checks are only done four times per while loop iteration. Therefore they take up a total overhead of $\OO(n + |\mathscr R|)$, per while loop iteration (\cref{lem:gen-rec-pack}).
    \end{itemize}

    Hence, each iteration of the while loop takes $\OO(n^2(n + |\mathscr R|)^3)$ time. Moreover, each iteration increases the size of $\mathscr R$ by at least $1$. So the total running time of \cref{alg:polytime} should be $\OO(|\mathscr R| \cdot n^2(n + |\mathscr R|)^3)$.
\end{proof}


The dependence of the running time on the final set of rec-packs $\mathscr R$, makes it seem like an output-sensitive algorithm. However, we bound the size of $\mathscr R$ to be quadratic in $n$.

\begin{lemma}\label{lem:R-bound}
    If $\mathscr R$ be the final set of rec-packs in \cref{alg:polytime}, then $|\mathscr R| = \OO(n^2)$
\end{lemma}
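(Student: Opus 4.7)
The plan is to first show that the number of iterations of the \textbf{while} loop in \cref{alg:polytime} is $\OO(n^2)$; since each iteration adds at most five rec-packs to $\mathscr R$ (the seed $S$ plus up to four directional extensions $R_l, R_r, R_t, R_b$), this yields $|\mathscr R| = \OO(n^2)$.

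The first step is structural. Each seed $S$ added is a maximal square, and by \cref{lem:2-side-poly} it has two opposite sides lying on a pair of parallel polygon edges. Similarly, each directional rec-pack sits inside a strip (cf.\ \cref{def:strip}) bounded by the same pair of polygon edges as the seed (for $R_l, R_r$) or by a perpendicular such pair (for $R_t, R_b$). By \cref{obs:one-strip-per-pair}, there is at most one strip per pair of parallel polygon edges, so every rec-pack in $\mathscr R$ is canonically assigned to one of at most $\OO(n^2)$ strips.

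The second step is a charging argument bounding the number of rec-packs hosted by each strip. By \cref{alg:gen-rec-pack}, whenever a directional extension is added to a strip, it grows maximally until blocked either by a polygon vertex at the strip's boundary or by an already-placed rec-pack. Consecutive seeds and extensions in the same strip are therefore forced to abut distinct obstructions, each ultimately traceable to a polygon feature. Combined with \cref{lem:grid-unambiguous}, which pins each seed's corner to an intersection coordinate of polygon vertices or of a polygon vertex with a previously introduced rec-pack corner, the overall count of seeds across all strips becomes polynomially bounded.

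The main obstacle is the inter-strip coupling: a rec-pack corner introduced in one strip can serve as the obstruction that anchors a new rec-pack in a transverse strip, so a naive per-strip counting appears circular, since $C_x \cup C_y$ in \cref{lem:grid-unambiguous} grows with $|\mathscr R|$. Resolving this requires a global amortized argument, charging each iteration's seed position to a distinct pair of polygon features (two polygon vertices, or a polygon vertex and a polygon edge), and showing that no such pair is charged more than a constant number of times across the whole execution. Carrying out this charging rigorously, without double counting as $\mathscr R$ grows, is the crux of the proof.
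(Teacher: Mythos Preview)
Your proposal does not contain a proof: you outline a charging scheme, identify the central difficulty (inter-strip coupling and the fact that the candidate set $C_x\cup C_y$ in \cref{lem:grid-unambiguous} grows with $|\mathscr R|$), and then state that ``carrying out this charging rigorously \ldots\ is the crux of the proof'' without actually carrying it out. Concretely, your second step never establishes any bound on the number of rec-packs per strip; the observation that an extension stops at ``a polygon vertex \ldots\ or an already-placed rec-pack'' is precisely the circularity you flag, and invoking \cref{lem:grid-unambiguous} does not break it, since that lemma only anchors a seed corner to a grid whose size is $\OO(n(n+|\mathscr R|))$. So the argument, as written, proves nothing beyond $|\mathscr R|=\OO(n(n+|\mathscr R|))$, which is vacuous.

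The paper avoids this difficulty entirely by a different idea that you are missing: it exploits the \emph{optimality} of the final cover $\mathscr R$. Writing $\mathscr R=\mathscr R_u\cup\mathscr R_g$ (unambiguous seeds and generated rec-packs), optimality gives $|\mathscr R_u|\le|\mathscr S|$ for \emph{any} set of squares $\mathscr S$ with $\Ex(\mathscr R_g)\cup\mathscr S$ covering $P$. The paper then constructs such an $\mathscr S$ of size $\OO(n^2)$ directly: the $\OO(n^2)$ grid cells cut out by vertex-induced lines are each covered by at most five extra squares once $\mathscr R_g$ is in place (the key point being that the \emph{first} seed ever placed in a long strip, together with its two directional extensions in $\mathscr R_g$, already covers all but a bounded end region of that strip). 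This yields $|\mathscr R_u|=\OO(n^2)$ with no charging and no circularity, and $|\mathscr R_g|\le 4|\mathscr R_u|$ finishes. Your approach of tracking iterations and charging seed positions never needs to enter the picture.
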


\begin{proof}
    Let $\mathscr R_g \subseteq \mathscr R$ be the set of rec-packs generated from a seed, and let $\mathscr R_u \subseteq \mathscr R$ be the set of unambiguous squares appearing as trivial rec-packs in $\mathscr R$. Therefore, $\mathscr R = \mathscr R_g \cup \mathscr R_u$.

    \begin{claim}\label{clm:Ru-S}
        For any set of squares $\mathscr S$ such that $\Ex(\mathscr R_g) \cup \mathscr S$ covers $P$, we must have $|\mathscr S| \ge |\mathscr R_u|$.
    \end{claim}
    \begin{proof}[Proof of \cref{clm:Ru-S}]
    Since $\mathscr R$ is a minimum cover (\cref{lem:stops}), $\Ex(\mathscr R) = \Ex(\mathscr R_g) \cup \mathscr R_u$ is a minimum cardinality set of squares covering $P$. Therefore, for any set of squares $\mathscr S'$ that covers $P$, we must have $|\mathscr S'| \ge |\Ex(\mathscr R_g) \cup \mathscr R_u|$. In particular, for any set of squares $\mathscr S$ such that $\Ex(\mathscr R_g) \cup \mathscr S$ covers $P$, we must have $|\Ex(\mathscr R_g) \cup \mathscr S| \ge |\Ex(\mathscr R_g) \cup \mathscr R_u|$. This means that $|\mathscr S| \ge |\mathscr R_u|$, as $\mathscr R_u$ and $\Ex(\mathscr R_g)$ are disjoint. 
    \end{proof}
    
    We now construct such a set of squares $\mathscr S$ of size $\mathcal O(n^2)$, with $\Ex(R_g) \cup \mathscr S$ entirely covering $P$. Let us draw a vertical line and a horizontal line through each vertex $v$ of $P$. We will refer to these as \emph{vertex-induced lines of $P$}. As $P$ has $n$ vertices, there are at most $n$ vertex-induced vertical lines and $n$ vertex-induced horizontal lines. 

    These vertex-induced lines form a grid-like structure consisting of rectangular grid cells. Each rectangular grid cell is formed between two consecutive vertex-induced vertical lines and two consecutive vertex-induced horizontal lines. Let $abcd$ be a such a rectangular grid cell, that does not lie outside $P$ (Figure~\ref{fig:abcd-def}). Let $a$ be the top-left corner, $b$ be the top-right corner, $c$ be the bottom-right corner and $d$ be the bottom-left corner. Without loss of generality, assume that the rectangle $abcd$ has its vertical sides are shorter than or the same as its horizontal sides.
    
    Consider the set of rec-packs in $R_g$ to already be placed. We now present a method to cover the entire rectangular region $abcd$ (possibly covering some more portion of the interior of the polygon) with at most $5$ more valid squares. 

    \begin{figure}[ht]  
    \centering    
    \includegraphics[width=60mm]{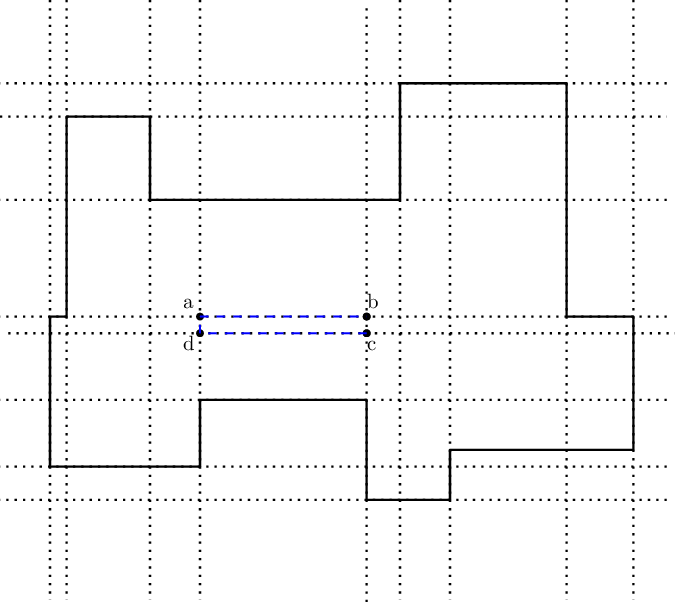}
    \caption{Defining the rectangular region $abcd$}\label{fig:abcd-def} 
    \end{figure}
    
    Consider the two consecutive vertex-induced vertical lines $l_1, l_2$, passing through $a,d$ and passing through $b,c$ respectively. Notice that there cannot be any vertex $v$ of $P$ lying strictly between these two lines; otherwise the vertical vertex-induced line due to this vertex will make $l_1$ and $l_2$ not consecutive. As $abcd$ lies inside $P$, the following must hold. (Figure~\ref{fig:abcd-e1-e2}):

    \begin{figure} [ht]
        \hfill 
        \begin{subfigure}[t]{0.45\textwidth} 
            \centering 
            \includegraphics[width=50mm]{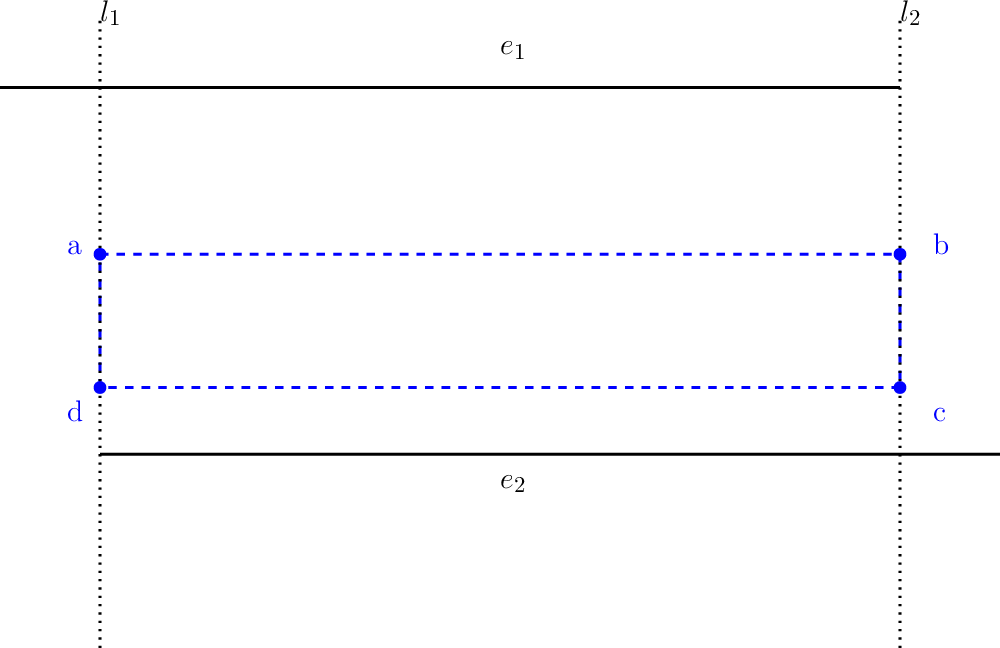}
            \caption{The edges $e_1, e_2$}\label{fig:abcd-e1-e2}
        \end{subfigure} \hfill
        \begin{subfigure}[t]{0.45\textwidth}
            \centering
            \includegraphics[width=50mm]{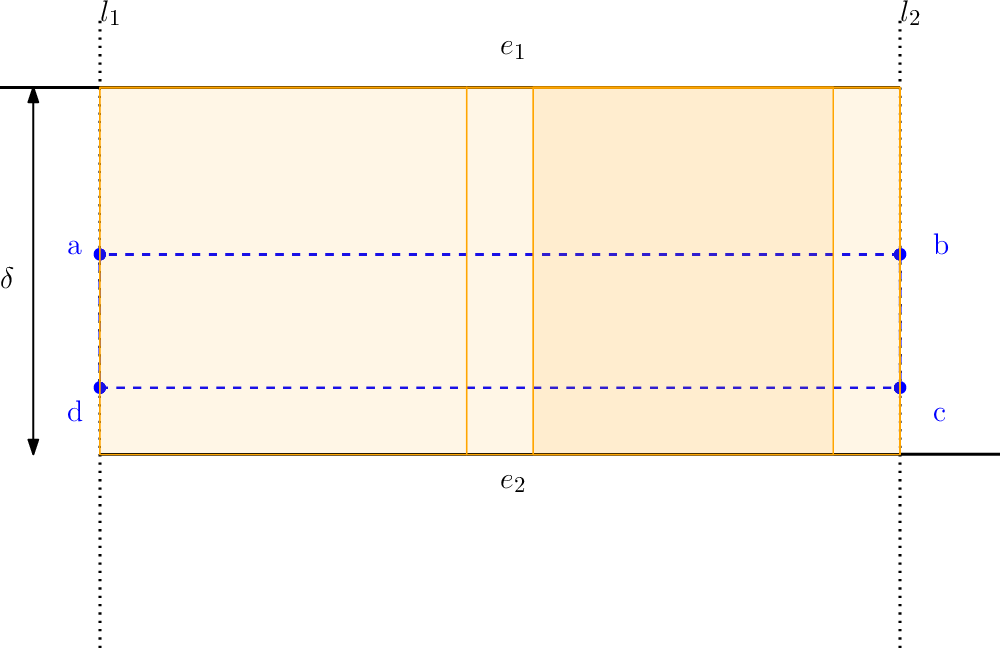}
            \caption{Covering of $abcd$}\label{fig:abcd-cover}
        \end{subfigure} \hfill \hfill \hfill
        \caption{$3$ more squares cover $abcd$}
    \end{figure}

    \begin{itemize}
        \item There must be a horizontal polygon edge $e_1$ of $P$ that intersects both $l_1$ and $l_2$ (possibly at its endpoint), and that either overlaps with the line segment $ab$ or lies above $ab$.
        \item There must be a horizontal polygon edge $e_2$ of $P$ that intersects both $l_1$ and $l_2$ (possibly at its endpoint), and that either overlaps with the line segment $cd$ or lies below $cd$.
        \item The rectangular region enclosed by $e_1$, $e_2$, $l_1$, $l_2$ entirely lies inside the polygon $P$.
    \end{itemize}

    If $\overline{e_i}$ denotes the length of the edge $e_i$, then $\overline{e_1}, \overline{e_2} \ge \overline{ab}$, as $e_1$, $e_2$ intersect both $l_1$, $l_2$. Let $\delta$ denote the (vertical) distance between $e_1$ and $e_2$.  We now discuss the following cases.

    \begin{figure}[ht]  
    \centering    
    \includegraphics[width=80mm]{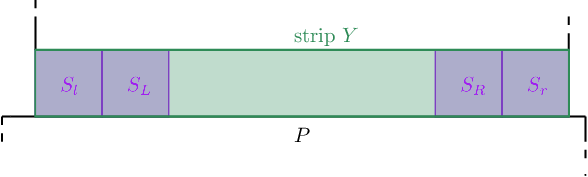}
    \caption{Defining $S_l, S_r, S_L, S_R$ in the strip $Y$}\label{fig:strip-arg} 
    \end{figure}

    \begin{itemize}
        \item \textbf{Case I: $5 \delta \ge \overline{ab}$.} In this case, we can draw at most five $\delta \times \delta$ squares which cover the entire rectangular region enclosed by $e_1$, $e_2$, $l_1$, $l_2$ and hence covers $abcd$ (Figure~\ref{fig:abcd-cover}). Therefore $abcd$ can be covered by at most $5$ squares.
        \item \textbf{Case II: $5 \delta < \overline{ab}$.} In this case, there must be a unique strip $Y$ that is enclosed within $e_1$, $e_2$ (\cref{obs:one-strip-per-pair}). Moreover, $Y$ must have an aspect ratio of more than $5$. Let $S_l$ be the leftmost $\delta \times \delta$ square lying inside $Y$ and $S_L$ be the square obtained by reflecting $S_l$ about its right side. Similarly, let $S_r$ be the rightmost $\delta \times \delta$ square lying inside $Y$ and $S_R$ be the square obtained by reflecting $S_r$ about its left side (Figure~\ref{fig:strip-arg}). Since the aspect ratio of $Y$ is more than $5$, the squares $S_l$,$S_L$,$S_R$,$S_r$ are pairwise non-overlapping $\delta \times \delta$ squares lying inside $Y$. Let $Y'$ denote the region in $Y$ not covered by $S_l$, $S_L$, $S_R$, $S_r$. Since $Y$ is a strip with $\delta$ as the length of the shorter side, any valid square that covers some region outside $Y$ and some region inside $Y$, must overlap with $S_l$ or $S_r$, and can never overlap with $S_L$ and $S_R$. 
        
         Let $S_0 \in \mathscr R$ be the first $\delta \times \delta$ square placed  that covers some portion in $Y$ when \cref{alg:polytime} is run on $P$. We look into the rec-packs generated when $S_0$ is considered as a seed (\cref{alg:gen-rec-pack}); rec-packs generated are only to the right or to the left. Recall that we look at the longest rec-packs, and any square covering points outside strip $Y$ does not overlap with $S_L$, $S_R$. Therefore, $S_0$ along with the rec-packs generated in $\mathscr R_g$ with $S_0$ as a seed together cover $Y'$, and possibly some portions of $S_l$, $S_L$, $S_R$, $S_r$. Hence, with all rec-packs in $R_g$ already placed, we only need $5$ more squares, $S_0$, $S_l$, $S_L$, $S_R$, $S_r$ to cover $Y$ (and therefore cover $abcd$).
    \end{itemize}

    If we repeat the same process for each rectangular grid-cell $abcd$ formed by consecutive vertex-induced lines, then we can cover up the entire polygon $P$ using just $5$ more squares for each such rectangular region lying inside $P$ (along with the rec-packs in $R_g$). However, the total number of such rectangular grid-cells is $\OO(n^2)$. Therefore, if $\mathscr S$ is the set of all such squares (at most $5$ of them per rectangular region $abcd$), then $|\mathscr S| = \OO(n^2)$ and $\Ex(\mathscr R_g) \cup S$ is a set of squares covering the entire polygon $P$. As discussed in \cref{clm:Ru-S}, we have $|\mathscr R_u| \le |\mathscr S| = \OO(n^2)$.

    Moreover, as for each unambiguous square $S$, \cref{alg:polytime} adds at most four rec-packs with seed $S$, the number of rec-packs generated from a seed can be at most $4$ times the number of unambiguous squares. Therefore $|\mathscr R_g| \le 4|\mathscr R_u| = \OO(n^2)$.

    Hence, $|\mathscr R| = |\mathscr R_g| + |\mathscr R_u| = \OO(n^2)$.
\end{proof}

Putting together \cref{lem:time-square-count} and \cref{lem:R-bound}, we get that the total time complexity of \cref{alg:polytime} is $\OO(n^2 \cdot n^2 \cdot (n + n^2)^3) = \OO(n^{10})$

\begin{theorem}\label{thm:polytime}
    For an orthogonal polygon $P$ with $n$ vertices, \cref{alg:polytime} solves \textsc{OPCS} in $\OO(n^{10})$ time.
\end{theorem}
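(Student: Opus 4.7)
The plan is to assemble the theorem directly from the three supporting results already established in this section: \cref{lem:stops} for correctness, \cref{lem:time-square-count} for the per-iteration running-time bound in terms of the final partial solution $\mathscr R$, and \cref{lem:R-bound} for the size bound on $\mathscr R$.

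First, I would invoke \cref{lem:stops} to observe that \cref{alg:polytime} terminates and returns a minimum square covering of $P$. The invariant that $\mathscr R$ is always a partial solution is maintained because every addition is either an unambiguous square (justified by \cref{lem:unambiguous}) or a rec-pack generated from a valid seed (justified by \cref{lem:max-rec-pack-2}); termination is guaranteed because each while-loop iteration strictly grows the covered region by at least one new block. When the loop exits, the covering condition is certified by \cref{lem:check-algo}.

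Next, for the running time, I would simply plug the bound from \cref{lem:R-bound} into the expression from \cref{lem:time-square-count}. Since the final set of rec-packs satisfies $|\mathscr R| = \OO(n^2)$, the overall running time is
\[
\OO\!\left(n^2 \cdot |\mathscr R| \cdot (n + |\mathscr R|)^3\right) = \OO\!\left(n^2 \cdot n^2 \cdot (n + n^2)^3\right) = \OO(n^{10}),
\]
which is the claimed bound.

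There is essentially no obstacle here: the genuine technical work is encapsulated in \cref{lem:R-bound} (the grid-cell covering argument that keeps $|\mathscr R_u|$, and therefore $|\mathscr R|$, quadratic in $n$) and in \cref{lem:time-square-count} (the careful accounting of each loop iteration, dominated by testing unambiguity for each candidate square from $C_x \cup C_y$). Given those, the theorem follows by direct substitution, so my proof would consist of little more than citing the three lemmas and performing the arithmetic above.
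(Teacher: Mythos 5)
Your proposal is correct and follows essentially the same route as the paper: the paper likewise obtains correctness from \cref{lem:stops} and the running time by substituting the $|\mathscr R| = \OO(n^2)$ bound of \cref{lem:R-bound} into the $\OO(n^2|\mathscr R|(n+|\mathscr R|)^3)$ estimate of \cref{lem:time-square-count}, yielding $\OO(n^{10})$. No gaps to report.
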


\begin{remark}
    To report the solution instead of just the count, we could return the set $\mathscr R$ of rec-packs in \cref{alg:polytime}. Here, rec-packs are used to efficiently encode multiple squares (potentially exponentially many) into constant sized information.
\end{remark}

\section{Improved Algorithm for Orthogonal Polygons with $k$ Knobs}\label{sec:knobbed}

In this section, we consider the $p$-{\sc OPCS} problem, where our input polygon $P$, has $n$ vertices, and at most $k$ knobs (\cref{def:knobs}), where $k$ is a parameter. We design an algorithm for $p$-{\sc OPCS} that is more efficient than \cref{alg:polytime}, whenever the input instances are such that $k = o(n^{9/10})$.

First, we define a special structure called separating squares, and prove some of its structural results. We will crucially use this to construct our recursive algorithm.

\subsection{More on structure of minimum covering}

We define a separating square, and use the definition to further explore properties of non-knob convex vertices (\cref{def:non-knob-vert}). 

\begin{definition}[Separating Square]\label{sep-square}
    For a convex vertex $v$ of an orthogonal polygon $P$, $MCS(v)$ is said to be a separating square if the region inside $P$ but outside $MCS(v)$ is not a connected region (Figure~\ref{fig:sep-sq}).  
\end{definition}

\begin{figure} [ht]
    \hfill 
    \begin{subfigure}[t]{0.45\textwidth} 
        \centering 
        \includegraphics[width=50mm]{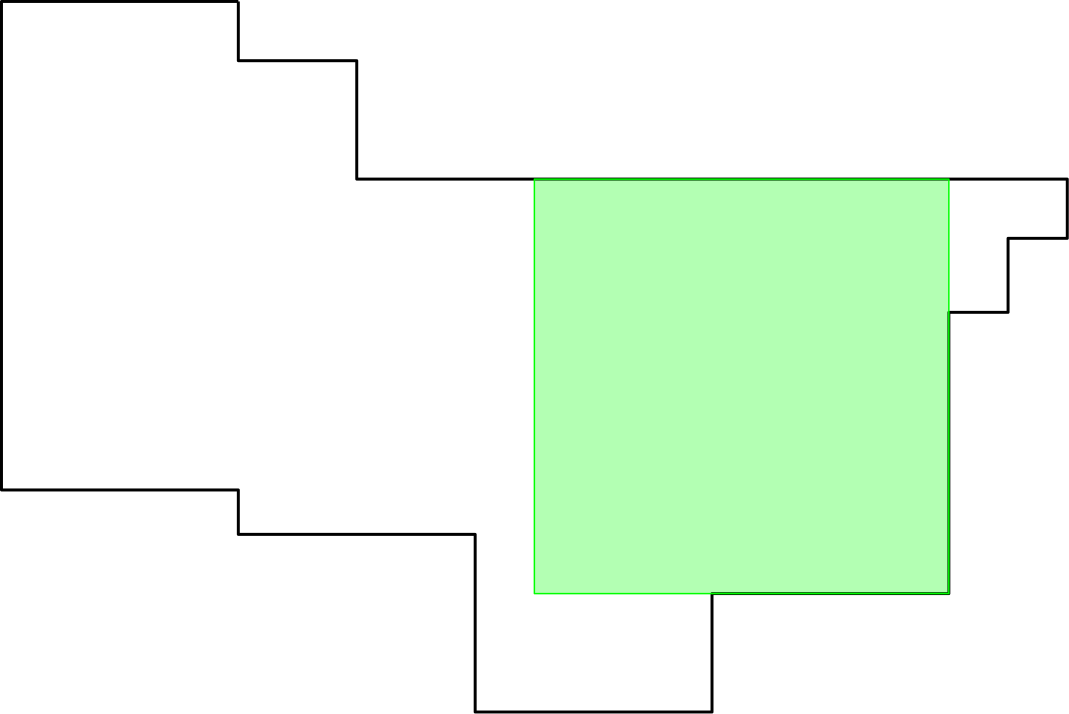}
        \caption{A separating square}
    \end{subfigure} \hfill
    \begin{subfigure}[t]{0.45\textwidth}
        \centering
        \includegraphics[width=50mm]{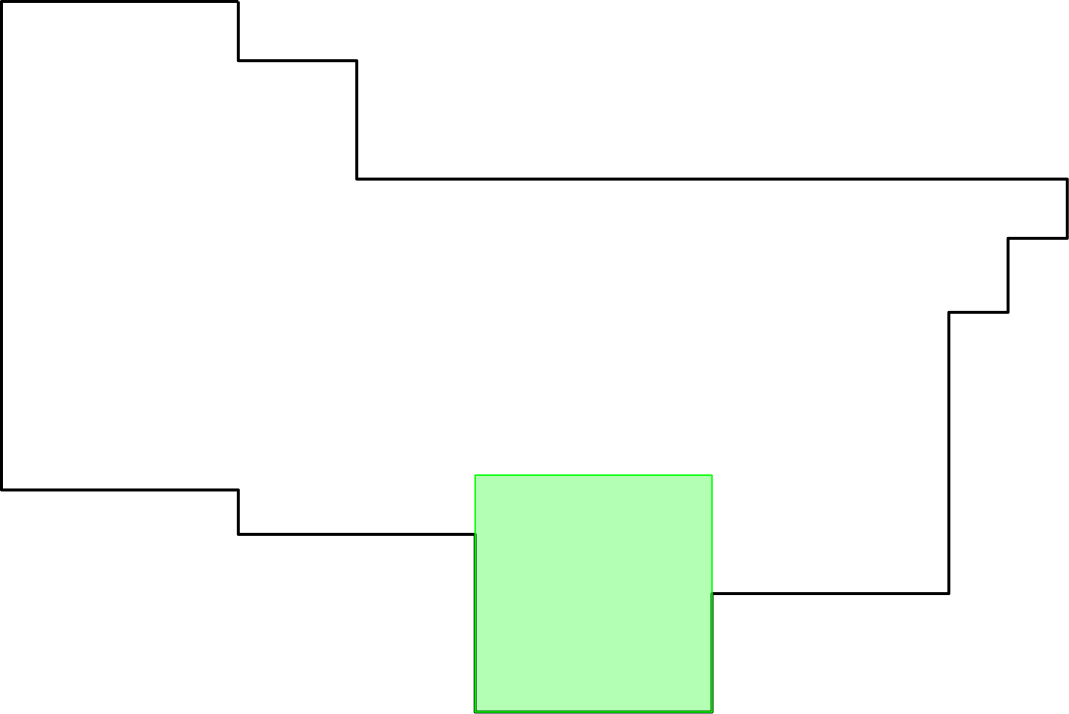}
        \caption{A maximal square which is not a separating square}
    \end{subfigure} \hfill \hfill \hfill
    \caption{Separating Maximal Corner Square}\label{fig:sep-sq}
\end{figure}

This gives us the following result.

\begin{lemma}\label{lem:sep-vert}
    If $v_i$ is a non-knob convex vertex (\cref{def:non-knob-vert}) of an orthogonal polygon $P$, then $MCS(v_i)$ is a separating square which separates $v_{i-2}$ and $v_{i+2}$. 
\end{lemma}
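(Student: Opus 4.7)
The plan is to set up local coordinates at $v_i$ and show that the square $S = MCS(v_i)$ contacts the polygon boundary $\partial P$ along at least two disjoint arcs; a standard planar-topology argument then splits $P \setminus S$ into multiple connected components, and tracing $\partial P$ in cyclic order will place $v_{i-2}$ and $v_{i+2}$ in different ones. Using \cref{prop:invariants} I rotate $P$ so that $v_i = (0,0)$ and the $90^\circ$ interior angle at $v_i$ opens into the upper-right quadrant; then $v_{i-1} = (0, h)$, $v_{i+1} = (w, 0)$ for some $h, w > 0$, and $S = [0, d] \times [0, d]$ for some $d > 0$. Tracing the traversal order with interior on the left, the concavity of $v_{i-1}$ and $v_{i+1}$ (forced by $v_i$ being a non-knob convex vertex) places $v_{i-2} = (-a, h)$ and $v_{i+2} = (w, -b)$ for some $a, b > 0$; in particular, both lie strictly outside $S$.

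The left side of $S$ from $v_i$ up to $(0, \min(d, h))$ lies on edge $(v_{i-1}, v_i)$, and the bottom side from $v_i$ to $(\min(d, w), 0)$ lies on edge $(v_i, v_{i+1})$. Together these form a single connected ``L-shape'' arc $\alpha$ of $\partial S \cap \partial P$ passing through $v_i$. Applying \cref{lem:2-side-poly} to the maximal square $S$, either its top side (at $y = d$) or its right side (at $x = d$) must also overlap a polygon edge, yielding a second contact arc $\beta$. A short case analysis on whether $d < h$, $d = h$, or $d > h$ (and symmetrically for $w$) shows that $\alpha$ and $\beta$ are disjoint: the only candidate meeting points are the corners $(0, d)$ and $(d, 0)$ of $S$, and in each case a meeting would require a polygon vertex there with an outgoing polygon edge entering the top or right side of $S$, which contradicts the concavity of $v_{i-1}$ (whose outgoing edge departs in the $-x$ direction toward $v_{i-2}$) or $v_{i+1}$ (whose outgoing edge departs in the $-y$ direction toward $v_{i+2}$).

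With $S$ simply connected, $P$ simply connected, and $\partial S \cap \partial P$ containing two disjoint arcs, $P \setminus S$ must have at least two connected components, each bounded by one of the two complementary arcs of $\partial P \setminus (\alpha \cup \beta)$ together with an arc of $\partial S$ lying inside $P$. Walking along $\partial P$ in the direction leaving $\alpha$ toward $v_{i+1}$, the next vertex reached is $v_{i+2}$; walking in the opposite direction from the other endpoint of $\alpha$ first reaches $v_{i-2}$. Since $\beta$ sits at $y = d$ or $x = d$ and is only encountered later in the traversal, $v_{i-2}$ and $v_{i+2}$ fall on distinct complementary arcs and therefore lie in different components of $P \setminus S$, as required. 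The main obstacle I anticipate is cleanly handling the degenerate cases $d \ge h$ or $d \ge w$, in which $v_{i-1}$ or $v_{i+1}$ itself lies on $\partial S$ and the corresponding endpoint of $\alpha$ shifts to this polygon vertex; in each such subcase one must reverify the disjointness of $\alpha, \beta$ and the cyclic position of $v_{i \pm 2}$ using the directions of the outgoing polygon edges at the concave vertices.
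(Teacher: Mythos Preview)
Your proof is correct and substantially more rigorous than the paper's. The paper's own proof is a single sentence asserting, without argument, that any curve in $P$ from $v_{i-2}$ to $v_{i+2}$ must intersect $MCS(v_i)$; your proposal actually supplies the missing justification. The key difference is that you invoke \cref{lem:2-side-poly} to produce a second contact arc $\beta \subset \partial S \cap \partial P$ disjoint from the corner arc $\alpha$, then use a standard disk-chord argument to conclude that $P \setminus S$ is disconnected, and finally trace $\partial P$ cyclically to place $v_{i\pm 2}$ on opposite complementary arcs. The degenerate cases $d \geq h$ or $d \geq w$ that you flag do work out as you sketch: the concavity of $v_{i-1}$ (resp.\ $v_{i+1}$) forces the outgoing polygon edge at that vertex to leave $S$ immediately in the $-x$ (resp.\ $-y$) direction, so $\alpha$ terminates at that polygon vertex and cannot merge with $\beta$ at a corner of $S$; and since the edges $(v_{i-2},v_{i-1})$ and $(v_{i+1},v_{i+2})$ lie at $x \le 0$ and $y \le 0$ respectively, they cannot host $\beta$ either. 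What your approach buys is an actual proof; what the paper's one-liner buys is brevity at the cost of leaving the geometric content entirely to the reader.
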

\begin{proof}
    If $v_i$ is a non-knob convex vertex, then any curve lying inside $P$, and having end points at $v_{i+2}$ and $v_{i-2}$ must intersect $MCS(v_i)$ and hence $MCS(v_i)$ must be a separating square separating $v_{i-2}$ and $v_{i+2}$.
\end{proof}

\begin{remark} \label{rem:find-sep}
    Given an orthogonal polygon $P$, we can find a non-knob convex vertex in $\OO(n)$ time (or report that it does not exist) by a simple check on all vertices.
\end{remark}

We now use this to recursively obtain simpler.

\subsection{Recursion with separating squares}

A separating square separates an input orthogonal polygon $P$ into unconnected uncovered regions. We will construct two or more polygons from these uncovered polygons which still preserves the information about $\textsc{OPCS}(P)$. 

First, we define the following.

\begin{definition}~\label{def:parts}
    Given an orthogonal polygon $P$, let $S$ be a maximal separating square which is a maximal square due to a non-knob convex vertex of $P$. We classify the set of connected components of $P$ that are separated by $S$ as follows,

    \begin{itemize}
        \item $Q_t$ be the connected components separated by $S$ that only intersect at more than one point with the top side of $S$ (and no other side). 
        \item $Q_b$ be the connected components separated by $S$ that only intersect at more than one point with the bottom side of $S$ (and no other side). 
        \item $Q_l$ be the connected components separated by $S$ that only intersect at more than one point with the left side of $S$ (and no other side). 
        \item $Q_r$ be the connected components separated by $S$ that only intersect at more than one point with the right side of $S$ (and no other side). 
        \item $Q_{tr}$ be the connected components separated by $S$ that only intersect at more than one point with the top side and right side of $S$ and also at the top-right corner of $S$. 
        \item $Q_{br}$ be the connected components separated by $S$ that only intersect at more than one point with the bottom side and right side of $S$ and also at the bottom-right corner of $S$.
        \item $Q_{tl}$ be the connected components separated by $S$ that only intersect at more than one point with the top side and left side of $S$ and also at the top-left corner of $S$. 
        \item $Q_{bl}$ be the connected components separated by $S$ that only intersect at more than one point with the left side and right side of $S$ and also at the bottom-left corner of $S$.
        \item $Q_{trb}$ be the connected components separated by $S$ that only intersect at more than one point with the top side, bottom side and right side of $S$ and also at the top-right corner and the bottom-right corner of $S$. 
        \item $Q_{tlb}$ be the connected components separated by $S$ that only intersect at more than one point with the top side, bottom side and left side of $S$ and also at the top-left corner and the bottom-left corner of $S$. 
        \item $Q_{rbl}$ be the connected components separated by $S$ that only intersect at more than one point with the right side, bottom side and left side of $S$ and also at the bottom-right corner and the bottom-left corner of $S$.
        \item $Q_{rtl}$ be the connected components separated by $S$ that only intersect at more than one point with the right side, top side and left side of $S$ and also at the top-right corner and the top-left corner of $S$.
    \end{itemize}

    We define $\mathscr Q' = \{Q_t, Q_b, Q_r, Q_l, Q_{tr}, Q_{br}, Q_{tl}, Q_{bl}, Q_{trb}, Q_{tlb}, Q_{rbl}, Q_{rtl}\}$. Further, we define $\mathscr Q = \{Q \in \mathscr Q' | Q \text{ is non-empty}\}$.
\end{definition}

Please refer to Figure~\ref{fig:sep-red} for illustrations of some of these components.

\begin{lemma}\label{lem:sep-red}
Given an orthogonal polygon $P$, let $S$ be a maximal separating square which is a maximal square due to a non-knob convex vertex of $P$. Let $\mathscr Q$ be as defined in~\cref{def:parts}. Then the following must hold true.

    \begin{itemize}
        \item $\forall Q \in \mathscr Q$, $S \cup Q$ is a connected orthogonal polygon without holes.
        \item $\text{\OPCS}(P)= \left(\sum \limits_{Q \in \mathscr Q} \text{\OPCS}(S \cup Q)\right) - (|\mathscr Q| - 1)$
    \end{itemize}
    
\end{lemma}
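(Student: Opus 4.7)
The plan for the first claim is direct from \cref{def:parts}: every $Q \in \mathscr Q$ is a connected component of $P$ with the interior of $S$ removed, and by its construction shares at least one boundary segment with $S$ (along one or more prescribed sides or corners). Hence $S \cup Q$ is connected. Since $P$ and $S$ are both without holes, $Q$ is also without holes, and gluing $S$ to $Q$ along the shared boundary preserves this property, so $S \cup Q$ is a hole-free orthogonal region. Orthogonality itself is immediate, as every edge of $S \cup Q$ is inherited from $P$ or from $S$.

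For the \OPCS identity, I will prove the two inequalities separately. For the upper bound $\text{\OPCS}(P) \le \sum_Q \text{\OPCS}(S \cup Q) - (|\mathscr Q| - 1)$, the first step is the subclaim that the unique maximal square at $v$ inside the sub-polygon $S \cup Q$ is exactly $S$ for every $Q$. This will follow because $S \subseteq S \cup Q$ is already a valid such square, while any strictly larger one in $S \cup Q$ would also lie in $P \supseteq S \cup Q$ and contradict \cref{lem:unique-mcs} applied to $v$ in $P$ (which gives $MCS(v) = S$). With this subclaim in hand, \cref{rem:mcs-always} lets me pick, for each $Q$, a minimum cover $\mathcal M_Q^\star$ of $S \cup Q$ containing $S$. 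Their union covers $P = S \cup \bigcup_Q Q$, and counting $S$ just once instead of $|\mathscr Q|$ times uses exactly $\sum_Q \text{\OPCS}(S \cup Q) - (|\mathscr Q| - 1)$ distinct squares.

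For the reverse inequality, I will start with a minimum cover $\mathcal M$ of $P$ that contains $S$ (again via \cref{rem:mcs-always}), and for each $Q$ define $\mathcal M_Q = \{S' \in \mathcal M : S' \cap Q \neq \emptyset\} \cup \{S\}$. Each $\mathcal M_Q$ covers $S \cup Q$, so $|\mathcal M_Q| \ge \text{\OPCS}(S \cup Q)$. Provided every $S' \in \mathcal M \setminus \{S\}$ belongs to $\mathcal M_Q$ for exactly one $Q$, summing will give $\sum_Q |\mathcal M_Q| = |\mathscr Q| + |\mathcal M \setminus \{S\}| = |\mathcal M| + |\mathscr Q| - 1$, producing the matching lower bound.

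The hard part will be showing that each $S' \in \mathcal M \setminus \{S\}$ meets exactly one component $Q$. I will first rule out the extreme configurations: $S' \subseteq S$ is impossible because $S'$ would be redundant in an optimum cover, and $S \subsetneq S'$ would contradict the maximality of $S$ in $P$ (as $S' \subseteq P$). So $S$ and $S'$ overlap properly, and it suffices to argue that $S' \setminus S$ is connected, since any connected subset of $P$ with $S$ removed lies in a single component, i.e.\ a single $Q$. A short case analysis on how two axis-aligned squares of side lengths $d$ and $d'$ can overlap leaves only three possible shapes for $S' \setminus S$: a single rectangle, an L-shape, or two rectangles separated by a ``stripe'' where $S \cap S'$ spans the full extent of $S'$ in one dimension and strictly protrudes from $S$ on both sides of the other. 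This last configuration would force $d' \le d$ and $d' > d$ simultaneously, which is impossible, so $S' \setminus S$ must always be connected, completing the argument.
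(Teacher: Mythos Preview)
Your proof is correct and follows the same two-inequality approach as the paper. The only notable difference is that the paper states as an observation (from the maximality of $S$) that no valid square of $P$ can meet two distinct members of $\mathscr Q$, whereas you supply the explicit geometric argument that $S' \setminus S$ is always connected; your subclaim that $S$ remains $MCS(v)$ inside each $S \cup Q$ is likewise what the paper uses, somewhat less precisely, to ensure each subproblem admits a minimum cover containing $S$.
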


\begin{figure}[!htbp]  
\centering    
\includegraphics[width=110mm]{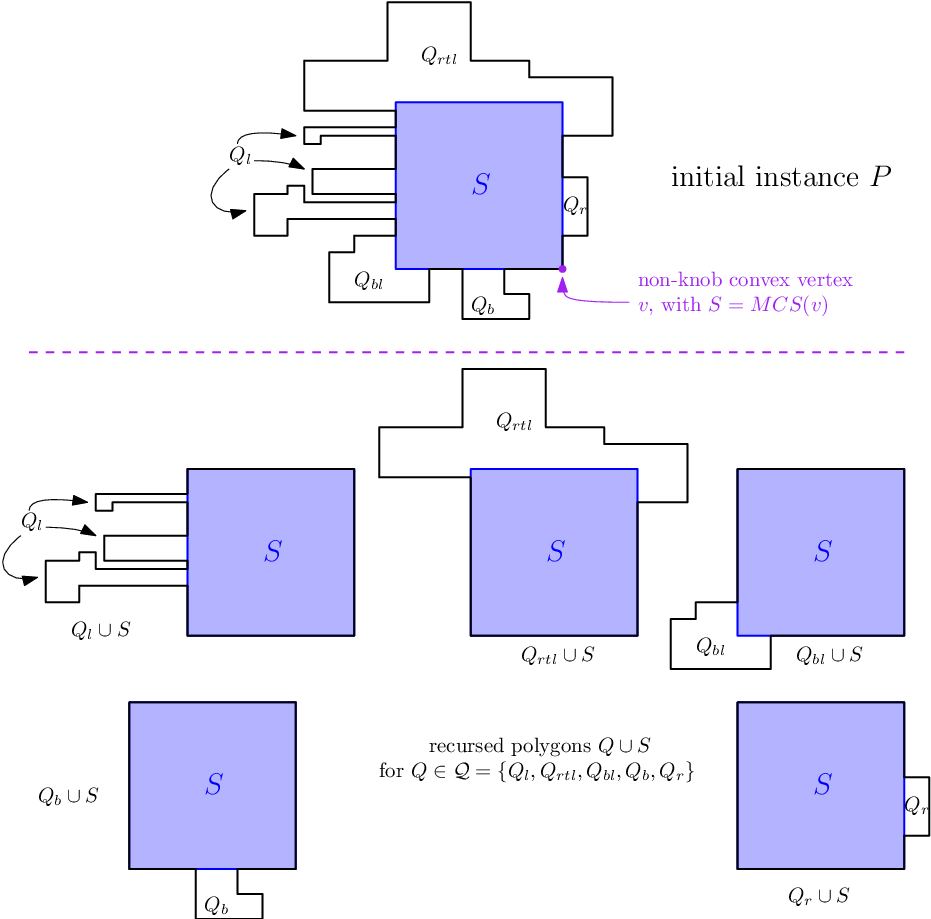}
\caption{$\text{\OPCS}(P)= \left(\sum \limits_{Q \in \mathscr Q} \text{\OPCS}(S \cup Q)\right) - (|\mathscr Q| - 1)$ with $\mathscr Q = \{Q_l, Q_{rtl}, Q_r, Q_{bl}, Q_b\}$ in this case}\label{fig:sep-red} 
\end{figure}

\begin{proof}

    First, we observe that since $S$ is a separating square which is a maximal square due to a non-knob convex vertex $v$, there cannot be a component that intersects with three vertices of $S$ (otherwise $S$ would not be maximal and could be grown by fixing a corner at $v$).

    To see that for all $Q \in \mathscr Q$, $S \cup Q$ is connected, it is sufficient to observe that we can find a curve in the strict interior of $S \cup Q$ from any point $t_1$ in its interior to any point $t_2$ in its interior; either lying inside a single connected component of $Q$ (if $t_1, t_2$ are in that component), or through $S$ (if $t_1, t_2$ are in different components of $Q$). Further, $Q \cup S$ cannot have holes as $P$ does not have holes. 

    Since $S$ is maximal, observe that there cannot be a valid square of $P$, that covers two different points from distinct $Q, Q' \in \mathscr Q$. Also, in a minimum covering with $S$ as one of the squares, all other valid squares must cover at least one point from the interior of exactly one $Q \in \mathscr Q$ (otherwise this square would be completely inside $S$, hence redundant). Therefore if $P$ is covered using a set $\mathscr S$ of $C$ squares, where $S \in \mathscr S$, we can cover $Q \cup S$ using the squares in $\mathscr S$ that cover some part of $Q$ and the square $S$ itself. If we do this for all $Q \in \mathscr Q$ this uses $C + (|\mathscr Q| - 1)$ squares, as $S$ is in this cover of all $|\mathscr Q|$ instances $Q \cup S$. But $S$ appears only once in $\mathscr S$. Since we start with a minimum cover of $P$ and find a valid cover of all $Q \cup S$ polygons with exactly $(|\mathscr Q| - 1)$ more squares in total, we get, 
    
    $$\text{\OPCS}(P) \ge \left(\sum \limits_{Q \in \mathscr Q} \text{\OPCS}(S \cup Q)\right) - (|\mathscr Q| - 1)$$
    
    Now, given any minimum covering of all $Q \cup S$, all such instances must contain $S$ (as $S$ is a maximum corner square). So we superimpose these coverings and delete all but one copy of $S$ to get a covering of $P$. This time we started from a minimal covering of the individual $Q \cup S$ polygons to get a valid covering of $P$. Including this with the rest of the result, we obtain
    
    $$\text{\OPCS}(P) = \left(\sum \limits_{Q \in \mathscr Q} \text{\OPCS}(S \cup Q)\right) - (|\mathscr Q| - 1)$$
\end{proof}

We now prove a crucial result: such a recursive step does not increase the number of knobs in each individual instance $Q \cup S$.

\begin{lemma}\label{lem:sep-red-k}
    Let $S$ be a maximal separating square which is a maximal square due to a non-knob convex vertex of an orthogonal polygon $P$ with $n$ vertices and $k$ knobs. Let $\mathscr Q$ be defined as in \cref{def:parts}. Then, for every $Q \in \mathscr Q$, $(Q \cup S)$ is an orthogonal polygon without holes with at most $n$ vertices and at most $k$ knobs. Moreover, any vertex in $Q \cup S$ which is a corner of $S$ is part of a knob in $Q \cup S$ that coincides with a side of $S$.
\end{lemma}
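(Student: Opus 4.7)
The first assertion, that $Q \cup S$ is an orthogonal polygon without holes, follows directly from \cref{lem:sep-red}. For the remaining claims, the central structural observation I will use is that $v$ being a non-knob convex vertex of $P$ forces the two polygon edges of $P$ adjacent to $v$ to lie along the two sides of $S$ adjacent to the corner $v$: since the interior angle of $P$ at $v$ equals the $90^\circ$ wedge occupied by $S$, both incident edges of $P$ must run along the boundary of this wedge. Consequently, those two sides of $S$ are contained in polygon edges of $P$, no component of $\mathscr{Q}$ attaches to them, and all attachments occur only on the two sides of $S$ opposite to $v$ (and at the corner they share).

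To bound $|V(Q \cup S)| \le n$, my plan is to construct an injection from the ``new'' vertices of $Q \cup S$ (those not in $V(P)$) into the ``lost'' vertices of $P$ (those in $V(P) \setminus V(Q \cup S)$). New vertices arise at corners of $S$ that become vertices without coinciding with polygon vertices, or at points where the boundary of $Q \cup S$ transitions between a portion inherited from $P$ (in a component of $Q$) and a side of $S$. Each new vertex is localized near an absent component of $\mathscr{Q} \setminus Q$, and each such absent component contains polygon vertices of $P$ that are lost in $Q \cup S$; a careful local analysis at each corner of $S$ will produce a well-defined pairing.

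For the knob count, knobs of $Q \cup S$ split into those inherited from $P$ (both endpoints in $Q$'s components, still convex) and new knobs running along sides of $S$ that become fully exposed in $Q \cup S$. Each new knob along a side of $S$ is paired injectively with lost knobs of $P$ contained in absent components of $\mathscr{Q} \setminus Q$ on that side, which yields the $\le k$ bound. For the ``moreover'' claim, a corner $c$ of $S$ appearing as a vertex of $Q \cup S$ must have at least one non-$S$ quadrant outside $Q \cup S$ (otherwise $c$ is interior); the connectivity of the components in $\mathscr{Q}$ rules out configurations where exactly two non-$S$ quadrants at $c$ lie in $Q \cup S$ (such a configuration would require separate components of $\mathscr{Q}$ to meet around $c$ without an intervening polygon edge of $P$, forcing $c$ to be a polygon vertex with local structure that preserves the knob property), so the interior wedge at $c$ is the $90^\circ$ $S$-quadrant, making $c$ convex; the boundary of $Q \cup S$ then leaves $c$ along a fully-exposed side of $S$ and reaches another corner (also convex by the same argument), producing a knob coincident with that side. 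The main technical hurdle is the case analysis over the twelve component types in $\mathscr{Q}$, substantially reduced by the non-attachment constraint on two of $S$'s sides.
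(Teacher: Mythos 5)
There is a genuine gap: the ``central structural observation'' on which your whole case analysis rests is false. From the (correct) fact that the two polygon edges incident to $v$ run along the boundary of the $90^\circ$ wedge at $v$, you conclude that the two sides of $S=MCS(v)$ adjacent to $v$ are contained in polygon edges of $P$, hence that no component of $\mathscr Q$ attaches to them and all attachments occur on the two sides opposite $v$. But the incident edges can be strictly shorter than the side length of $S$: precisely because $v$ is a \emph{non-knob} convex vertex, each incident edge ends at a \emph{concave} vertex, and beyond that concave vertex the polygon spills across the line supporting the corresponding side of $S$, so the rest of that side lies in the interior of $P$ and components of $\mathscr Q$ can attach to it. Concretely, let $P$ have vertices $(0,0),(2,0),(2,-2),(4,-2),(4,4),(-2,4),(-2,2),(0,2)$, i.e.\ the square $[0,4]^2$ with a $2\times 2$ bump hanging below its bottom-right part and a $2\times 2$ bump attached left of its top-left part. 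Then $v=(0,0)$ is a non-knob convex vertex (both neighbours $(2,0)$ and $(0,2)$ are concave), $S=MCS(v)=[0,4]^2$ is a separating square, and the two components of $P\setminus S$ attach to the bottom and left sides of $S$ --- exactly the sides adjacent to $v$. So the ``non-attachment constraint on two of $S$'s sides'' that you use to prune the twelve cases, to decide which sides of $S$ become fully exposed knobs, and to localize the new vertices, simply does not hold, and the configurations it excludes are not degenerate ones.

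Beyond this, the two counting claims are only sketched: ``a careful local analysis at each corner of $S$ will produce a well-defined pairing'' and the pairing of new knobs with ``lost knobs of $P$ contained in absent components'' are statements of intent, not arguments (note also that the distinct knob of $P$ matched to a new knob of $Q\cup S$ may have to be found inside $P\setminus Q$ or inside a \emph{neighbouring} component, not inside an absent component adjacent to that side, which is where the actual work lies). For comparison, the paper's proof makes no assumption about which sides of $S$ the components touch; it argues up to symmetry over the type of $Q$ ($Q_l$, $Q_{tl}$, $Q_{tlb}$), shows directly that any vertex of $Q\cup S$ at a corner of $S$ lies on a knob along a side of $S$, and for each such knob exhibits a distinct knob of $P$ in $P\setminus Q$ or in an adjacent component; the position of $v$ and the maximality of $S$ are used only to rule out one subcase (a component touching two opposite corners of $S$). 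To repair your proof you would have to drop the non-attachment claim and redo the corner-by-corner analysis allowing components on all four sides, at which point you are essentially reconstructing the paper's case analysis.
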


\begin{proof}
    \cref{lem:sep-red} already proves that $\forall Q \in \mathscr Q, (Q \cup S)$ is an orthogonal polygon without holes. Further, the number of vertices can only decrease because the only time a new vertex (vertex not in $P$) would be introduced is when $S$ already distributes the existing vertices of $P$ in each of $(Q \cup S), Q\in \mathscr Q$ (causing no total increase in the number of vertices in $(Q \cup S)$ than in $P$). We now prove that $\forall Q \in \mathscr Q, Q \cup P$ has at most $k$ knobs. 

    Consider $Q \cup S$ for some $Q \in \mathscr Q$. Without loss of generality assume $Q$ is either $Q_l$ or $Q_{tl}$ or $Q{tlb}$ (all other cases have symmetric arguments). We now show that for all knobs in $Q \cup S$, there is a distinct knob in $P$ (which would show that $Q \cup S$ has at most $k$ knobs if $P$ has at most $k$ knobs. Clearly any (distinct) knob $(u1, u2)$ of $Q \cup S$ such that $u1$, $u_2$ are not corners in $S$, $(u_1, u_2)$ must be a (distinct) knob in $P$ (knob in $Q$, in particular). Hence we only need to consider knobs which $(u_1, u_2)$ in $Q \cup S$ such that either $u_1$ or $u_2$ is a vertex of $S$. Without loss of generality, we assume that $u_1$ is a corner in $S$.

    \begin{itemize}
        \item \textbf{Case I: $Q = Q_{tlb}$}. In this case the top-left and the bottom-left corners of $S$ completely lie inside $Q$ and hence are not vertices of $Q \cup S$. Therefore $u_1$ is either the bottom-right corner or the top-right corner of $S$ (both are vertices in $Q \cup S$). Without loss of generality, assume $u_1$ to be the top-right corner of $S$. Consider $e_1$ to be the horizontal edge of $Q \cup S$ (\textit{i.e.} the edge overlapping with top edge of $S$) with $u_1$ as a corner. Since $Q_{tlb}$ intersects with the right corner of $S$, $e_1$ has to shorter than the side length of $S$. As the entire region inside $S$ is inside $Q \cup S$, the other end point of $e_1$ must be a concave vertex. Therefore $(u_1, u_2) \ne e_1$. Moreover, as the other end point of the vertical edge of $Q \cup S$ from $u_1$ is the bottom-right corner of $S$, $u_2$ must be the bottom right corner of $S$; $(u_1, u_2)$ must be a left knob (and the only knob) in $Q \cup S$ (which proves that all vertices in $Q \cup S$ which are corners in $S$ are part of knobs in $Q \cup S$, along a side of $S$). Moreover, for this knob $(u_1, u_2)$, can find a left knob of the region $P \setminus Q$ (which must be a knob in $P$ as $P \setminus Q$ intersects $Q$ only in top, bottom and right edges). Therefore, in this case, for every knob in $S \cup Q$, we can find a distinct knob in $P$. 
        \item \textbf{Case II: $Q = Q_{tl}$}. We consider four subcases. 
        \begin{itemize}
            \item \textbf{Case II(a): $Q$ does not touch the top-right corner or the bottom-left corner of $S$.} By arguments similar to above, we can show that right side of $S$ and the bottom side of $S$ are right and bottom knobs of $Q \cup S$ (which proves that all vertices in $Q \cup S$ which are corners in $S$ are part of knobs in $Q \cup S$, along a side of $S$). By a similar argument $P \setminus Q$ must have a right knob and a bottom knob which are also a right knob and a bottom knob of $P$. Therefore, in this case, for every knob in $S \cup Q$, we can find a distinct knob in $P$. 
            \item \textbf{Case II(b): $Q$ touches the bottom-left corner of $S$, but not the top-right corner of $S$.} Similar to before, the right side of $S$ is a right knob in $Q \cup S$ (which proves that all vertices in $Q \cup S$ which are corners in $S$ are part of knobs in $Q \cup S$, along a side of $S$) and we can find a right knob in $P \setminus Q$ which is also a right knob in $P$. However, there can a bottom knob $(u_1, u_2)$ in $Q \cup S$ which has its right endpoint $u_1$ as the bottom-right corner of $S$, but $u_2$ is a convex vertex in $Q$ (and hence in $P$). In this case, either $(u_1, u_2)$ is a bottom knob in $P$ (in which case we are done), or there is another vertex $u_3$ of $P$ such that $u_1$ lies between $u_2$ and $u_3$; therefore, $u_3$ must be vertex in $Q_r$ or $Q_{tr}$. Again, if $u_3$ is a a convex vertex in $P$, then $(u_2, u_3)$ is a knob in $P$ and we are done. Otherwise there will be a bottom knob in the component $Q' \in \{Q_r, Q_{tr}\}$ containing $u_3$. Further as $Q'$ can only intersect with $P \setminus Q'$ in a right or a top edge, the bottom knob of $Q'$ must be a bottom knob of $P$. This completes the argument for this case that for every knob in $S \cup Q$, we can find a distinct knob in $P$.
            \item \textbf{Case II(c): $Q$ touches the top-right corner of $S$, but not the bottom-left corner of $S$.} Symmetric argument similar to the previous case.
            \item \textbf{Case II(d): $Q$ touches the bottom-left corner and the top-right corner of $S$.} This means the non-knob convex vertex $v$ for much $S$ was the maximal square covering $v$ must be the bottom-right corner of $S$. However, since $Q$ touches the bottom-left corner and the top-right corner of $S$, we can extend $S$ by fixing its bottom right corner at $v$, contradicting the maximality of $S$. Hence this case can never happen.
        \end{itemize}
        \item \textbf{Case III: $Q = Q_{l}$}. We consider four subcases. 
        \begin{itemize}
            \item \textbf{Case III(a): $Q$ does not touch the top-left corner or the bottom-left corner of $S$.} By arguments similar to Case I, we can show that right side, the bottom side and the top side of $S$ are right, bottom and top knobs of $Q \cup S$ respectively (which proves that all vertices in $Q \cup S$ which are corners in $S$ are part of knobs in $Q \cup S$, along a side of $S$). And by similar argument $P \setminus Q$ must have a right knob, a bottom knob and a top knob which are also a right knob, a bottom knob and a top knob of $P$. Therefore, in this case, for every knob in $S \cup Q$, we can find a distinct knob in $P$. 
            \item \textbf{Case III(b): $Q$ touches the top-left corner of $S$, but not the bottom-left corner of $S$.} Again similar to Case II(a), the right side and the bottom side of $S$ is a right knob and a bottom knob in $Q \cup S$ (which proves that all vertices in $Q \cup S$ which are corners in $S$ are part of knobs in $Q \cup S$, along a side of $S$); and we can find a right knob and a bottom knob in $P \setminus Q$ (and also in $P$). Moreover, there can be a top knob of $Q \cup S$ which has one vertex in $S$ and one vertex in $Q$. Again, this case is similar to the second case of Case II(b) and we can find a top knob in $P$ which is not entirely contained in $Q$. Therefore, in this case, for every knob in $S \cup Q$, we can find a distinct knob in $P$. 
            \item \textbf{Case III(c): $Q$ touches the bottom-left corner of $S$, but not the top-left corner of $S$.} Symmetric argument similar to the previous case.
            \item \textbf{Case III(d): $Q$ touches both bottom-left corner and the top-left corner of $S$.} By arguments similar to Case I, we can show that right side of $S$ is right knob of $Q \cup S$ (which proves that all vertices in $Q \cup S$ which are corners in $S$ are part of knobs in $Q \cup S$, along a side of $S$) and there is a right knob in $P \setminus Q$ which is also a right knob in $P$. Moreover, there can be a top knob of $Q \cup S$ which has one vertex in $S$ and one vertex in $Q$; and a bottom knob of $Q \cup S$ which has one vertex in $S$ and one vertex in $Q$. Again, this case is similar to the second case of Case II(b) and we can find a top knob (bottom knob) in $P$ which is not entirely contained in $Q$. Therefore, in this case, for every knob in $S \cup Q$, we can find a distinct knob in $P$
        \end{itemize}
    \end{itemize}

    Therefore in all cases, we can map knobs of $Q \cup S$ for any $Q \in \mathscr Q$ to distinct knobs in $P$. Therefore the total number of knobs of $Q \cup S$ can be at most $k$, if $P$ had at most $k$ knobs.
\end{proof}

We use the following framework:
\begin{enumerate}
    \item Find a non-knob convex vertex $v$ (if any) in $\OO(n)$ time (\cref{rem:find-sep}).
    \item If no such non-knob convex vertex exists, solve \OPCS using \cref{alg:polytime} (base cases).
    \item If it exists, construct $S := MCS(v)$, construct $\mathscr Q$ and use this recursion to recurse into $|\mathscr Q|$ similar instances where the number of knobs do not increase.
\end{enumerate}

Each recursive step which is not a base case, can be done in linear time $\OO(n)$, by a simple traversal of the polygon vertices. As $|\mathscr Q| \le |\mathscr Q'| \le 12$, we get the following result.

\begin{lemma}\label{lem:recurse}
    If $P$ is an orthogonal polygon with $n$ vertices and at most $k$ knobs, in $\OO(n)$ time, we can either report that no non-knob convex vertex exists, or $z \le 12$ smaller instances of orthogonal polygons $P_1, \ldots, P_z$ which individually have at most $k$ knobs and $n$ vertices.
\end{lemma}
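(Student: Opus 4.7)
The plan is to follow the three-step framework stated just before the lemma. First, I would invoke \cref{rem:find-sep} to perform a single $\OO(n)$ scan through the vertex list to locate a non-knob convex vertex (\cref{def:non-knob-vert}): at each $v_i$, check whether $v_i$ is convex while both $v_{i-1}$ and $v_{i+1}$ are concave. If this scan finds none, we report the first alternative of the lemma and stop.

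Otherwise, fix such a non-knob convex vertex $v$. Using \cref{lem:unique-mcs}, compute $S := MCS(v)$ in $\OO(n)$ time. By \cref{lem:sep-vert}, $S$ is a separating square, and since $v$ is a non-knob convex vertex, $S$ is a maximal square due to a non-knob convex vertex, so that \cref{lem:sep-red-k} applies to $P$ and $S$. Next I would construct the set $\mathscr Q$ from \cref{def:parts} with a single traversal of the boundary of $P$: track, at each polygon edge, whether the current boundary portion lies on $S$ or strictly outside $S$. Each maximal subpath of $\partial P$ strictly outside $S$, taken together with the appropriate portion of $\partial S$, constitutes a connected component of $P \setminus S$. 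For each component, determine which of the sides (and possibly corners) of $S$ its boundary meets, thereby mapping it to one of the $12$ labels in $\mathscr Q'$. Components sharing the same label are grouped to form the corresponding $Q \in \mathscr Q$. Finally, for each non-empty $Q$, output the polygon $Q \cup S$ by splicing the boundary of $Q$ with the relevant portion of $\partial S$.

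By \cref{lem:sep-red-k}, each such $Q \cup S$ is a hole-free orthogonal polygon with at most $n$ vertices and at most $k$ knobs, matching the second alternative of the lemma. Since $|\mathscr Q| \le |\mathscr Q'| = 12$, we output $z \le 12$ instances. Each of the substeps (scanning for a non-knob convex vertex, computing $MCS(v)$, the single boundary traversal, and splicing to produce each $Q \cup S$) runs in $\OO(n)$ time, yielding an overall $\OO(n)$ bound.

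The main obstacle lies in the boundary-traversal step: correctly locating the entry and exit points of $\partial P$ with respect to $\partial S$, classifying each connected component of $P \setminus S$ into one of the $12$ categories of $\mathscr Q'$, and stitching together the vertex-list description of each $Q \cup S$ in a single pass without revisiting edges. Once this bookkeeping is set up correctly, both the correctness and the $\OO(n)$ runtime follow directly from the structural results already proved, most notably \cref{lem:sep-red-k}.
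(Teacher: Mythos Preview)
Your proposal is correct and follows essentially the same approach as the paper: scan for a non-knob convex vertex via \cref{rem:find-sep}, compute $S=MCS(v)$ via \cref{lem:unique-mcs}, build $\mathscr Q$ by a single boundary traversal, and invoke \cref{lem:sep-red-k} together with $|\mathscr Q|\le|\mathscr Q'|=12$. The paper's own argument is in fact terser than yours, summarizing the construction of $\mathscr Q$ as ``a simple traversal of the polygon vertices''; your more explicit description of the entry/exit bookkeeping and component classification is a welcome elaboration of that step.
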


\paragraph*{Polygons without non-knob convex vertices.}

\cref{lem:recurse} implies that whenever we have a non-knob convex vertex, we can recurse in linear time. However, we need to analyse what happens if there are no non-knob convex vertices. Our first result is to bound the number of vertices of such polygons. 

\begin{lemma}\label{lem:sep-free-k-cnt}
    There are at most $4k - 4$ vertices in an orthogonal polygon $P$ with at most $k$ knobs and no non-knob convex vertex. 
\end{lemma}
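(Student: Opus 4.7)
The plan is to combine a standard Euler-type identity for orthogonal polygons with a simple double-counting on knob endpoints. For any simple orthogonal polygon, since the exterior angles must sum to $360^\circ$ and each convex (resp.\ concave) vertex contributes $+90^\circ$ (resp.\ $-90^\circ$), the number $c$ of convex vertices and the number $r$ of concave vertices satisfy $c - r = 4$. This is the identity I would invoke (or prove in one line) at the start.

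Next, I would exploit the hypothesis that $P$ has no non-knob convex vertex. By \cref{def:non-knob-vert}, this means that for every convex vertex $v_i$, at least one of $v_{i-1}, v_{i+1}$ is also convex, so at least one of the edges $(v_{i-1}, v_i)$ or $(v_i, v_{i+1})$ is a knob in the sense of \cref{def:knobs}. In other words, every convex vertex is an endpoint of at least one knob.

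Now I would count the set of pairs $\{(v, K) : v \text{ is a convex vertex that is an endpoint of the knob } K\}$ in two ways. Each knob has exactly two convex endpoints, so the number of such pairs equals $2k'$, where $k' \le k$ is the actual number of knobs in $P$. On the other hand, each convex vertex contributes at least one such pair (by the previous paragraph). Thus $c \le 2k' \le 2k$.

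Combining with $c - r = 4$ gives $r = c - 4 \le 2k - 4$, so the total number of vertices satisfies
\[
n = c + r \le 2k + (2k-4) = 4k - 4,
\]
which is the desired bound. I do not foresee any real obstacle here; the only subtle point worth stating carefully is that a convex vertex can be an endpoint of up to two knobs (when both neighbouring vertices on the boundary are convex), so the inequality $c \le 2k$ is not tight in general but is still what falls out of the double count.
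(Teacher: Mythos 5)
Your proof is correct and follows essentially the same route as the paper: the paper also uses the identity that the number of concave vertices equals the number of convex vertices minus $4$ in a hole-free orthogonal polygon, bounds the number of convex vertices by $2k$ because (in the absence of non-knob convex vertices) every convex vertex lies on a knob, and concludes $n = 2n_x - 4 \le 4k-4$. Your explicit double-counting of (convex vertex, knob) pairs merely spells out the step the paper states tersely, and is if anything a slightly more careful justification of the bound $c \le 2k$.
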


\begin{proof}
    Let $n_x, n_v$ be the number of convex vertices and concave vertices in $P$ respectively. Since $P$ is an orthogonal polygon with no holes, we have $n_v = n_x - 4$. Moreover, as only convex vertices are part of knobs and there are at most $k$ knobs, we must have $n_x \le 2k$. Therefore the total number of vertices is $n_x + n_v = 2n_x - 4 \le 4k - 4$. 
\end{proof}

Therefore, for such polygons, we have $n = \OO(k)$. We can detect this in $\OO(n) = \OO(k)$ time, and solve \OPCS in $\OO(n^{10}) = \OO(k^{10})$ time using \cref{alg:polytime}.

\begin{lemma}\label{lem:base-case}
    Given an orthogonal polygon $P$ with $n$ vertices and at most $k$ knobs and with no non-knob convex vertices, we can solve \OPCS in $\OO(k^{10})$ time. 
\end{lemma}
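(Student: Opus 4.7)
The plan is to invoke the two earlier results directly. From \cref{lem:sep-free-k-cnt}, any orthogonal polygon $P$ with at most $k$ knobs and no non-knob convex vertex has at most $4k-4$ vertices. In particular, the vertex count of the input polygon satisfies $n = \OO(k)$.

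Given this bound, I would simply run Algorithm~\ref{alg:polytime} on $P$. By \cref{thm:polytime}, this algorithm solves \OPCS in $\OO(n^{10})$ time on any orthogonal polygon with $n$ vertices and no holes. Substituting $n = \OO(k)$, the running time becomes $\OO(k^{10})$, which is exactly the bound the lemma claims.

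The only step that requires any care is confirming that the hypothesis of \cref{thm:polytime} is satisfied, namely that $P$ is an orthogonal polygon without holes; this is part of the standing assumption for the $p$-\OPCS setting (the input polygon has no holes). There is no real obstacle here, since this lemma is essentially the observation that once we have recursed down to a polygon with no separating structure left to exploit, the polygon has become small (linear in $k$) and our general polynomial-time algorithm handles it within the desired budget.
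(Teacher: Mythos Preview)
Your proposal is correct and matches the paper's own argument essentially verbatim: invoke \cref{lem:sep-free-k-cnt} to bound $n\le 4k-4=\OO(k)$, then apply \cref{thm:polytime} (i.e., run \cref{alg:polytime}) to obtain the $\OO(n^{10})=\OO(k^{10})$ running time. There is nothing to add.
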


\subsection{A recursive algorithm}

We now have the results to design our exact algorithm solving $p$-\OPCS on input orthogonal polygons $P$ having $n$ vertices and at most $k$ knobs. We formally state our recursive framework as an algorithm.
\begin{algorithm}[H]
    \caption{Separating-Square-Recursion ~~~ \textbf{Input:} $P$ with $n$ vertices and at most $k$ knobs}\label{alg:sep-sq-rec}
    \begin{algorithmic}
    \If{$P$ has no non-knob convex vertex} \Comment{$\OO(n)$, \cref{lem:recurse}}
        \State{$C \gets$\OPCS$(P)$} \Comment{\cref{alg:polytime}, $\OO(k^{10})$, refer to \cref{lem:base-case}}
        \State{\Return $C$}
    \EndIf
    \State{$v \gets $ some non-knob convex vertex}
    \State{$S \gets MCS(v)$} \Comment{$\OO(n)$, \cref{lem:unique-mcs}}
    \State{Construct $\mathscr Q$ as in \cref{lem:sep-red}} \Comment{$\OO(n)$}
    \State{$s \gets 0$}
    \For{$Q \in \mathscr Q$} \Comment{recurse, \cref{alg:sep-sq-rec}, $|\mathscr Q| \le 12$}
        \State{$t \gets $ return value when \textbf{Separating-Square-Recursion} is run on $Q \cup S$} 
        \State{$s \gets s + t$}
    \EndFor
    \State{\Return $(s - (|\mathscr Q| - 1))$} \Comment{\cref{lem:sep-red}}
\end{algorithmic}
\end{algorithm}


The correctness of Separating-Square-Recursion (Algorithm~\ref{alg:sep-sq-rec}) is a direct consequence of Lemma~\ref{lem:sep-red} and the correctness of Algorithm~\ref{alg:polytime}, \textit{i.e.} Theorem~\ref{thm:polytime}. We now analyse the time complexity of \cref{alg:sep-sq-rec}.

Firstly, all steps of Algorithm~\ref{alg:sep-sq-rec} take $\OO(n)$ time other than the calls to Algorithm~\ref{alg:polytime} and the recursion step. We now prove some results of this algorithm which helps us bound the number of recursive calls to Algorithm~\ref{alg:sep-sq-rec}.

We observe that if the input polygon is $P_0$ at some recursive step, the chosen separating square for a non-knobbed convex vertex $v$, is $S = MCS(v)$ and the recursed polygons are $P_1, \ldots, P_z$, then the vertices of each of $P_i$ are either vertices in $P_0$ or corners of $S$. With this in mind we prove the following result.

\begin{lemma}
    Let an orthogonal polygon $P$ with $n$ vertices and at most $k$ knobs be the original input to Algorithm~\ref{alg:sep-sq-rec}. At some recursive step, let the input be $P_0$, and let the algorithm choose $v$ to be a non-knob convex vertex of $P_0$. Then $v$ must also be a non-knob convex vertex of the original polygon $P$. Moreover, any corner of $S$ that is also a vertex of $Q \cup S$ can not be a non-knob convex vertex.
\end{lemma}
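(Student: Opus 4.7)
The plan is to prove both claims together, using the second claim to drive an induction for the first. For the second claim, I would invoke \cref{lem:sep-red-k} directly: it asserts that any vertex of $Q \cup S$ coinciding with a corner of $S$ is part of a knob in $Q \cup S$ lying along a side of $S$. By \cref{def:non-knob-vert}, a non-knob convex vertex has both its neighbours concave, so neither of its incident edges can be a knob. Hence any corner of $S$ that is a vertex of $Q \cup S$, being part of a knob, fails to be non-knob convex. This handles the second assertion immediately, and moreover gives a ``one step'' property that I can bootstrap.

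For the first claim, I would induct on the recursion depth. The base case is trivial, as the $0$-th recursive input equals $P$ itself. For the inductive step, suppose $P_0$ arose from a polygon $P_0'$ in the previous call: the algorithm picked a non-knob convex vertex $v'$ of $P_0'$, set $S' = MCS(v')$, constructed $\mathscr Q'$ as in \cref{lem:sep-red}, and chose $P_0 = Q' \cup S'$ for some $Q' \in \mathscr Q'$. Let $v$ be a non-knob convex vertex of $P_0$. The vertex set of $P_0 = Q' \cup S'$ partitions into (i) corners of $S'$ that survive as vertices, and (ii) vertices inherited from $P_0'$. The already-proved second claim, applied to the previous level, rules out (i), so $v$ is an inherited vertex of $P_0'$. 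I then need to check that the two edges of $P_0$ incident to $v$ coincide with the two edges of $P_0'$ incident to $v$, so that the local combinatorial type of $v$ (convex with two concave neighbours) is unchanged. Granting this, $v$ is non-knob convex in $P_0'$, and the inductive hypothesis gives that $v$ is non-knob convex in $P$.

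The main obstacle is justifying that inherited vertices retain their incident edges through the separation step. To handle it, I would argue that since $v$ is not a corner of $S'$, it lies in the interior of $Q'$ relative to $S'$: its two incident edges in $P_0'$ never enter the interior of $S'$. Otherwise one of those edges would cross into $S'$, forcing $v$ to lie on a side of $S'$; but $S'$ is the maximal corner square at the non-knob convex vertex $v'$, and its sides are either extensions of edges of $P_0'$ emanating from $v'$ or are bounded by concave vertices of $P_0'$. A vertex of $P_0'$ can land on a side of $S'$ only at a corner of $S'$, which was excluded. Thus the two edges at $v$ are untouched by the union with $S'$, and the two neighbours of $v$ in $P_0$ are the same as in $P_0'$ and are concave there, completing the induction and establishing both claims.
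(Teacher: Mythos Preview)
Your approach is essentially the same as the paper's: both invoke \cref{lem:sep-red-k} to dispatch the second claim and then argue, by induction along the recursion, that the non-knob property at $v$ traces back to $P$. The paper frames the inductive step contrapositively (if $v$ lies on a knob $(u,v)$ in $P$, then at every later level the edge from $v$ in the direction of $u$ still terminates at a convex vertex, so $v$ never becomes non-knob), whereas you argue the forward direction, that a non-knob convex vertex of $P_0$ was already non-knob convex one level up.

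One imprecision in your final paragraph: the assertion ``a vertex of $P_0'$ can land on a side of $S'$ only at a corner of $S'$'' is false as stated --- concave vertices of $P_0'$ routinely sit in the interior of a side of $S'$ (this is often precisely what makes $S'$ maximal). What you actually need, and what does hold, is that the two \emph{neighbours} of $v$ in $P_0$ cannot be corners of $S'$: those neighbours are concave in $P_0$ by hypothesis, while \cref{lem:sep-red-k} forces any corner of $S'$ that survives as a vertex of $Q'\cup S'$ to be convex (it is the endpoint of a knob along a side of $S'$). Once the neighbours of $v$ are known to be inherited vertices of $P_0'$, one still has to rule out that the edge from $v$ was clipped or extended in passing from $P_0'$ to $P_0$; the paper handles this implicitly by noting that the only way an edge at $v$ can acquire a new endpoint during the recursion is by being cut at a square corner, which is always convex. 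Your argument that the incident edges ``never enter the interior of $S'$'' is correct (edges of $P_0'$ lie on $\partial P_0'$ and $S'\subseteq P_0'$), but this alone does not prevent an edge from running along a side of $S'$ and thereby changing its endpoint, so the last step needs the observation about square corners rather than the over-strong claim about arbitrary vertices.
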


\begin{proof}
    If $v$ is a non-knob convex vertex in $P_0$, then $v$ must be a vertex in $P$ (and not a vertex introduced by some separating square at some recursive step). This is because all vertices introduced by a separating square at an intermediate recursive step have a knob along the side of the separating square itself (\cref{lem:sep-red-k}).

    Next, if $v$ is a vertex participating in a knob $(u,v)$ in $P$, then $u$ is a convex vertex. Due to this, at any intermediate recursion step, there cannot a polygon $P'$ ($P_0$ in particular) with $v$ as its vertex, that has a concave endpoint to the edge originating from $v$ and along $uv$. Thus, $v$ must be a non-knob convex vertex in $P$.
\end{proof}

We now prove that if $v$ is chosen as a non-knob convex vertex at some recursive step, then no subsequent recursive steps can again choose $v$.

\begin{lemma}\label{lem:bounded-vertex}
    Let $v$ be a non-knob convex vertex of the original input polygon $P$. Then $v$ is chosen as the non-knob convex vertex in at most one recursive step of Algorithm~\ref{alg:sep-sq-rec}.
\end{lemma}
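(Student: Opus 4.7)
The plan is to prove a strengthening: once $v$ is chosen at some recursion step with polygon $P_0$, then in every polygon $P_r$ appearing strictly below $P_0$ in the recursion tree of \cref{alg:sep-sq-rec}, either $v$ is not a vertex of $P_r$, or $v$ is incident to a knob of $P_r$. This immediately settles the case in which the two claimed selections of $v$ lie in an ancestor--descendant relation in the recursion tree, since a knob-incident vertex cannot be a non-knob convex vertex. The remaining case, where the two selections sit in disjoint subtrees of a common ancestor $P^*$, is handled by noting that a convex vertex $v$ of $P^*$ can lie in more than one child of $P^*$ only by coinciding with a corner of $P^*$'s separating square (an axis-parallel square cannot overlap the boundary of $P^*$ across a $90^\circ$ turn); then \cref{lem:sep-red-k} makes $v$ knob-incident in every child of $P^*$, and the strengthening propagates this demotion throughout, ruling out any further selection.

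I would prove the strengthening by induction on the depth of $P_r$ below $P_0$. In the base case, $P_r = Q \cup S_0$ with $S_0 = MCS(v)$; since $v$ is a corner of $S_0$, \cref{lem:sep-red-k} directly places $v$ (if present as a vertex of $P_r$) in a knob along a side of $S_0$. For the inductive step, write $P_r = Q_p \cup S_p$ where $S_p = MCS(v_p)$ is the separating square used at the parent $P_p$. If $v$ is a corner of $S_p$, \cref{lem:sep-red-k} again settles the case. Otherwise $v$ must be a vertex of $P_p$ with its local structure inherited into $P_r$, and by the inductive hypothesis there is a knob $(v, w)$ of $P_p$ incident to $v$; the task becomes tracking this knob through the separation into $P_r$.

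The crucial geometric observation is that $(v, w)$ is an axis-parallel boundary segment of $P_p$ and $S_p$ is an axis-parallel square contained in $P_p$, so any interaction between $S_p$ and the edge $(v, w)$ occurs along a single contiguous subsegment of $(v, w)$ delimited by corners of $S_p$. Hence either the edge $(v, w)$ is preserved intact in $P_r$ (because $v, w$ lie in the same connected component of $P_p$ relative to $S_p$ and no intermediate turn is forced), in which case it remains a knob with $w$'s convexity inherited from $P_p$, or coming from \cref{lem:sep-red-k} if $w$ happens to coincide with a corner of $S_p$; or else $w$ is separated from $v$ by $S_p$, so the boundary of $P_r$ at $v$ in the direction of $w$ terminates at a corner $w'$ of $S_p$ lying on the line of $(v, w)$, which is convex in $P_r$ by \cref{lem:sep-red-k}, producing a fresh knob $(v, w')$ incident to $v$. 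Either way $v$ remains knob-incident in $P_r$, completing the induction.

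The main obstacle I expect is this ``clean truncation'' step: verifying that the separating square interacts with a boundary knob only by either leaving it intact or truncating it cleanly at one of the square's own corners, which survives as a convex vertex of the child polygon. Once this axis-aligned geometric regularity is in hand, the proof reduces to a brief case analysis powered by repeated invocation of \cref{lem:sep-red-k}.
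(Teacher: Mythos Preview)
Your proposal is correct and follows the same strategy as the paper: invoke \cref{lem:sep-red-k} to show that once $v$ is chosen it sits at a corner of its own $MCS$ and hence becomes knob-incident in every child, and then argue that this status persists down the recursion tree. The paper's proof is a two-sentence sketch that simply asserts this persistence (``$v$ becomes a part of a knob in the subsequent steps'') and asserts that $v$ does not appear outside the subtree of the first step at which it is chosen; your explicit induction on depth, your separate handling of the disjoint-subtree case via a common ancestor, and your analysis of the clean-truncation step make rigorous exactly what the paper leaves implicit.
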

\begin{proof}
    If $v$ is a non-knob convex vertex of the original input polygon $P$, and let it be chosen at some recursive step $\mathfrak r$ for the first time. $v$ cannot be chosen as a non-knob convex vertex by any recursive step $\mathfrak r'$ which does not lie in the subtree of $\mathfrak r$ in the recursion tree (as $v$ does not even appear as a vertex in those instances). However, once $v$ is chosen as the non-knob convex vertex, $v$ becomes a part of a knob (\cref{lem:sep-red-k}) in the subsequent steps (and hence not a non-knob convex vertex). Therefore $v$ is never chosen a the non-knob convex vertex in the subsequent recursive steps either.
\end{proof}

Now, we can bound the number of recursive calls to Algorithm~\ref{alg:sep-sq-rec}.

\begin{lemma}\label{lem:recursive-call}
    The recursion tree of Separating-Square-Recursion (Algorithm~\ref{alg:sep-sq-rec}) has at most $n$ internal nodes running Separating-Square-Recursion, and at most $12n$ leaf nodes which solve the base case by invoking Algorithm~\ref{alg:polytime}.
\end{lemma}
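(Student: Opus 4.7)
The plan is to bound the two quantities separately, first the internal nodes and then the leaves, using two structural facts that have already been set up just above the statement. Internal nodes are precisely the recursive calls where the input polygon possesses at least one non-knob convex vertex, since these are the calls where the \textbf{if} branch falls through and the algorithm selects some such vertex $v$ before recursing; leaves are the calls where no non-knob convex vertex exists, so \cref{alg:polytime} is invoked.

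To bound the internal nodes, I would define a map $\phi$ sending each internal node to the vertex $v$ selected at that node, and show that $\phi$ is an injection into the set of non-knob convex vertices of the \emph{original} input polygon $P$. Injectivity is precisely \cref{lem:bounded-vertex}. That the image lies inside the non-knob convex vertices of $P$ is what the discussion preceding \cref{lem:bounded-vertex} establishes: any $v$ chosen during a sub-call must already have been a non-knob convex vertex of $P$, because every vertex introduced by an earlier separating-square cut participates in a knob along the corresponding side of that separating square (by \cref{lem:sep-red-k}), and therefore cannot be a non-knob convex vertex in any subsequent sub-polygon. Since $P$ has at most $n$ vertices in total, the image of $\phi$ has size at most $n$, yielding at most $n$ internal nodes.

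For the leaves, I would invoke \cref{def:parts} together with \cref{lem:recurse}: at every internal node the algorithm recurses on the non-empty members of $\mathscr Q \subseteq \mathscr Q'$, and $|\mathscr Q'| = 12$ by inspection of the definition, so the branching factor is at most $12$. Every leaf is a child of some internal node, hence the number of leaves is at most $12$ times the number of internal nodes, i.e.\ at most $12n$.

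The main potential obstacle is making the claim ``any $v$ chosen at a recursive step is a non-knob convex vertex of the original $P$'' fully airtight, since it requires chasing how \cref{lem:sep-red-k} constrains the vertices introduced at each separating-square cut (specifically, that such newly introduced vertices always sit at a corner of some $S$ and are forced to participate in a knob along a side of $S$, so they can never be selected later as non-knob convex vertices). Once this observation is in place, injectivity of $\phi$ from \cref{lem:bounded-vertex} and the branching-factor argument close the proof by straightforward arithmetic.
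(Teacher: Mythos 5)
Your proposal is correct and matches the paper's own argument: the paper likewise combines \cref{lem:bounded-vertex} (together with the preceding lemma guaranteeing every chosen vertex is a non-knob convex vertex of the original $P$) to bound internal nodes by $n$, and the branching factor $|\mathscr Q| \le 12$ to bound leaves by $12n$. Your write-up merely makes the injection explicit; no substantive difference.
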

\begin{proof}
 Due to Lemma~\ref{lem:bounded-vertex}, the number of recursive steps where Algorithm~\ref{alg:sep-sq-rec} recurses (internal nodes in recursion tree), is bounded by the number $n$ of vertices in the original input polygon. As any recursive step calls at most $12$ more recursive steps, the number of steps where \cref{alg:sep-sq-rec} achieves the base-case condition and calls Algorithm~\ref{alg:polytime} is bounded by $12n$ (leaf nodes in recursion tree).    
\end{proof}

Finally, we complete the analysis of our algorithm by bounding the total running time.

\begin{theorem}\label{thm:k-knobbed}
    Separating-Square-Recursion (Algorithm~\ref{alg:sep-sq-rec}) when run on an orthogonal polygon $P$ with $n$ vertices and at most $k$ knobs, solves \OPCS in $\OO(n^2 + k^{10} \cdot n)$ time.
\end{theorem}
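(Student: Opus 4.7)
The plan is to combine the structural bounds on the recursion tree of \cref{alg:sep-sq-rec} (given by \cref{lem:recursive-call}) with a per-node running time analysis, and then separately account for the base-case cost using \cref{lem:base-case}. Correctness follows directly: every non-base invocation returns the value given by the formula of \cref{lem:sep-red} applied to the children's returned values, and every base invocation returns the exact value $\textsc{OPCS}(P_0)$ for its input $P_0$ by \cref{thm:polytime}. So the main work is the running time bound.

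First I would isolate the work done at a single internal node of the recursion tree, excluding the recursive calls it spawns. Inspecting \cref{alg:sep-sq-rec}, the only non-recursive operations are: (i) checking whether the current polygon has a non-knob convex vertex, which takes $\OO(n)$ by \cref{rem:find-sep}; (ii) computing $MCS(v)$, which takes $\OO(n)$ by \cref{lem:unique-mcs}; (iii) constructing the family $\mathscr Q$ of \cref{def:parts}, which is a single traversal of the polygon boundary and therefore $\OO(n)$; and (iv) the constant amount of arithmetic to combine the children's return values, since $|\mathscr Q|\le 12$. Hence each internal node costs $\OO(n)$.

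Next I would bound the per-leaf cost. A leaf is reached exactly when the current polygon has no non-knob convex vertex. By \cref{lem:sep-red-k}, the number of knobs is non-increasing along the recursion, so every leaf instance has at most $k$ knobs, and by \cref{lem:sep-free-k-cnt} it therefore has at most $4k-4$ vertices. Applying \cref{alg:polytime} to such an instance then costs $\OO(k^{10})$ by \cref{lem:base-case} (equivalently, by \cref{thm:polytime} with $n$ replaced by $\OO(k)$).

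Finally I would put these pieces together using \cref{lem:recursive-call}: there are at most $n$ internal nodes and at most $12n$ leaves in the recursion tree, giving a total cost of
\[
\OO(n)\cdot\OO(n) \;+\; \OO(n)\cdot\OO(k^{10}) \;=\; \OO(n^2 + k^{10}\cdot n),
\]
which is the claimed bound. The only mildly subtle step is verifying that the per-internal-node work really is $\OO(n)$ and not $\OO(n_{P_0})$ for the subinstance $P_0$ (which is trivially at most $n$ anyway), so there is no serious obstacle; the analysis is essentially a direct aggregation of already-established lemmas.
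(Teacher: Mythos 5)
Your proposal is correct and follows essentially the same route as the paper: correctness via \cref{lem:sep-red} and \cref{thm:polytime}, $\OO(n)$ work per internal node (\cref{lem:recurse}), $\OO(k^{10})$ per leaf (\cref{lem:base-case} via \cref{lem:sep-red-k} and \cref{lem:sep-free-k-cnt}), and aggregation over the $\OO(n)$ internal nodes and $\OO(n)$ leaves from \cref{lem:recursive-call}. No gaps to report.
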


\begin{proof}
    Each internal node of the recursion tree for the algorithm takes $\OO(n)$ time (\cref{lem:recurse}). Therefore, following from the bound in Lemma~\ref{lem:recursive-call}, the total time taken by the internal nodes of the recursion tree for the algorithm is $\OO(n^2)$. 
    
    Each leaf node of the recursion tree is an execution of Algorithm~\ref{alg:polytime}, each taking $\OO(k^{10})$ time (\cref{lem:base-case}). Now, as there are $\OO(n)$ leaf nodes in the recursion tree (Lemma~\ref{lem:recursive-call}), the base cases together take $\OO(k^{10} \cdot n)$ time. Total running time becomes $\OO(n^2 + k^{10} \cdot n)$.
\end{proof}

\begin{remark}
    We should only prefer \cref{alg:sep-sq-rec} when $k = o(n^{9/10})$, otherwise \cref{alg:polytime} provides better or the same asymptotic running time.
\end{remark}

Note that we may use any exact algorithm for {\sc OPCS} to solve the base case.

\begin{corollary}
    If there is an algorithm solving \OPCS in time $T(n)$ for polygons with $n$ vertices, there exists an algorithm solving $p$-\OPCS on orthogonal polygons with $n$ vertices and at most $k$ knobs in time $\OO(n^2) + n \cdot T(4k-4)$.
\end{corollary}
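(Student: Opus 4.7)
The plan is to obtain this corollary as an almost immediate modification of the proof of \cref{thm:k-knobbed}. The recursive framework of \cref{alg:sep-sq-rec} uses \cref{alg:polytime} only as a black-box subroutine invoked at the base cases, namely when the current sub-polygon has no non-knob convex vertex. Since the correctness of the recursion (\cref{lem:sep-red}) and the bookkeeping on knob counts (\cref{lem:sep-red-k}) are independent of which exact algorithm is used for the base cases, I would simply substitute the hypothetical $T(n)$-time algorithm in place of \cref{alg:polytime} inside \cref{alg:sep-sq-rec}, keeping every other step of the algorithm unchanged.

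For the running-time analysis, I would reuse the recursion-tree accounting verbatim. By \cref{lem:recursive-call}, the modified algorithm's recursion tree still has $\OO(n)$ internal nodes, each performing $\OO(n)$ work by \cref{lem:recurse}; this contributes $\OO(n^2)$ in total. The recursion tree also has $\OO(n)$ leaf nodes, each being a base-case polygon with no non-knob convex vertex and at most $k$ knobs. Applying \cref{lem:sep-free-k-cnt}, every such base-case polygon has at most $4k-4$ vertices, so invoking the assumed $T(\cdot)$-time algorithm on it costs $T(4k-4)$ time. Summed over the leaves, the base cases contribute $\OO(n) \cdot T(4k-4) = n \cdot T(4k-4)$ (absorbing the constant into $T$).

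Adding the two contributions yields the claimed bound $\OO(n^2) + n \cdot T(4k-4)$. The only potential subtlety, and the step I would double-check, is that the argument of \cref{lem:sep-free-k-cnt} still applies to the sub-polygons $Q \cup S$ arising in the recursion: these are orthogonal polygons without holes (by \cref{lem:sep-red}) and have at most $k$ knobs (by \cref{lem:sep-red-k}), so the hypothesis $n_v = n_x - 4$ used in \cref{lem:sep-free-k-cnt} is available, and the $\le 4k-4$ vertex bound transfers without change. No further work is required, and the corollary follows.
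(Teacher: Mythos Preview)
Your proposal is correct and matches the paper's approach exactly: the paper presents this corollary without proof, immediately after the one-line remark ``Note that we may use any exact algorithm for \textsc{OPCS} to solve the base case,'' and your argument spells out precisely the intended substitution and recursion-tree accounting from \cref{thm:k-knobbed}. There is nothing to add.
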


\subsection{Discussion on orthogonally convex polygons}

We discuss a well-studied special case of orthogonal polygons: orthogonally convex polygons. 

\begin{definition}[Orthogonally convex polygon] \label{def:ortho-convex}
    An orthogonal polygon $P$ is said to be \emph{orthogonally convex} (Figure~\ref{fig:ortho-convex}), if the following hold true.
    \begin{itemize}
        \item $P$ is a simply connected polygon with polygon edges parallel to the x-axis or the y-axis.
        \item The intercept of any line parallel to x-axis or y-axis with $P$ produces \emph{one} continuous (possibly empty) line segment.
    \end{itemize}
\end{definition}

\begin{figure} [ht]
    \hfill 
    \begin{subfigure}[t]{0.45\textwidth} 
        \centering 
        \includegraphics[height=50mm]{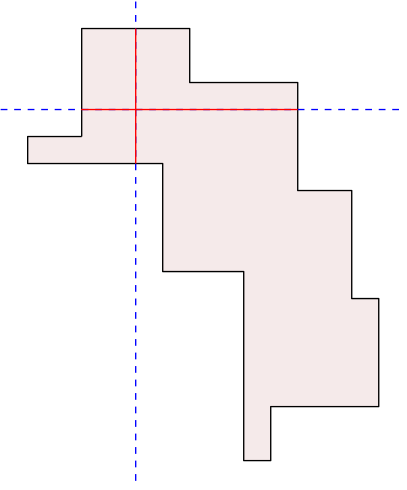}
        \caption{Orthogonally Convex Polygon}
    \end{subfigure} \hfill
    \begin{subfigure}[t]{0.45\textwidth}
        \centering
        \includegraphics[height=50mm]{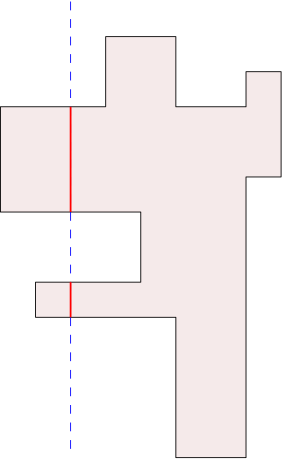}
        \caption{Not Orthogonally Convex}
    \end{subfigure} \hfill 
    \caption{Orthogonally Convex Polygons}\label{fig:ortho-convex}
\end{figure}

It is easy to construct a simple orthogonal polygon having an arbitrarily large number of knobs. However, we show that for orthogonally convex polygons, the number of knobs must be exactly $4$. 

\begin{lemma}\label{lem:4-knobs}
    Any orthogonally convex polygon $P$ contains exactly $4$ knobs. Moreover, $P$ contains exactly one left knob, exactly one right knob, exactly one top knob and exactly one bottom knob.
\end{lemma}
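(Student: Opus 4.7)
The plan is to prove that for each of the four directions (top, bottom, left, right), there is exactly one knob of that type. By symmetry (via \cref{prop:invariants}, using $90^\circ$ rotations and reflections), it suffices to handle the top knobs; the other three cases follow by identical reasoning.

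First I would introduce the ``top profile function'' $f : [x_{\min}, x_{\max}] \to \mathbb R$, defined so that $f(x)$ is the largest $y$-coordinate of any point of $P$ on the vertical line $x = x_0$. Since $P$ is orthogonally convex, the $x$-range of $P$ is a single interval (orthogonal convexity of horizontal intercepts at $y = y_{\min}$, say), so $f$ is defined on $[x_{\min}, x_{\max}]$. Moreover, $f$ is piecewise constant with finitely many pieces, each piece corresponding to a horizontal polygon edge lying on the top boundary of $P$. The key claim is that $f$ is \emph{unimodal}: there is no triple $x_1 < x_2 < x_3$ with $f(x_2) < \min(f(x_1), f(x_3))$. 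If such a triple existed, then for small $\varepsilon > 0$ the horizontal line $y = f(x_2) + \varepsilon$ would meet $P$ at both $x_1$ and $x_3$ but not at $x_2$, so the intercept would split into at least two components, contradicting \cref{def:ortho-convex}.

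Next I would match top knobs with plateaus of $f$. A top knob is a horizontal polygon edge at some height $c$ whose two endpoints are both convex, with the interior locally below the edge; equivalently, it corresponds to a maximal interval $[x_a, x_b]$ on which $f \equiv c$ and such that $f$ drops strictly both just to the left of $x_a$ and just to the right of $x_b$ (inside the domain of $f$). Using unimodality, the only such interval is the one realising the global maximum $y_{\max}$: any other plateau is either still on the ascending part (so $f$ is $\le c$ to its left but $> c$ to its right), or on the descending part (symmetric), and neither configuration gives a top knob. Conversely, the global maximum plateau is a genuine top knob: by unimodality $\{x : f(x) = y_{\max}\}$ is a single interval, so there is a unique topmost horizontal polygon edge, and both of its endpoints must be convex because from either endpoint the top boundary strictly descends (otherwise $y_{\max}$ would not be the global maximum).

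This yields exactly one top knob. Applying the same argument after a $180^\circ$ rotation to the bottom profile (which is unimodal as a minimum), and after $90^\circ$ rotations to the left/right profiles, gives exactly one knob of each remaining type. Summing, $P$ contains exactly $4$ knobs, one of each orientation. The main subtle point in carrying this out is to verify that the ``plateau vs.\ top knob'' equivalence really is clean, i.e.\ that top knobs are in bijection with strictly isolated plateaus of $f$ rather than arbitrary horizontal edges lying on the top boundary; unimodality then forces uniqueness. Everything else is straightforward bookkeeping from \cref{def:ortho-convex} and \cref{def:knobs}.
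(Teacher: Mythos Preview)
Your approach is correct and genuinely different from the paper's. The paper handles existence and uniqueness separately and very directly: the extremal edge in each direction must be a knob (existence), and if there were two left knobs $(v_i,v_{i+1})$ and $(v_j,v_{j+1})$ with $x_i\ge x_j$, then any curve in $P$ from $v_i$ to $v_j$ would hit the line $x=x_i$ at a $y$-coordinate outside the segment $[\min(y_i,y_{i+1}),\max(y_i,y_{i+1})]$, so that vertical line meets $P$ in more than one piece (uniqueness). You instead introduce the top profile $f$, argue it is unimodal, and identify top knobs with strict local-maximum plateaus of $f$; unimodality then forces exactly one such plateau, settling existence and uniqueness in a single stroke. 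Your route yields a cleaner global picture of the boundary (four monotone staircases meeting at the four knobs), while the paper's argument is shorter and never leaves the raw definition. One step in your outline deserves more care: asserting that the line $y=f(x_2)+\varepsilon$ meets $P$ at the specific abscissae $x_1$ and $x_3$ tacitly assumes the bottom profile there lies at or below $f(x_2)$, which is not immediate from horizontal convexity alone; you will need to combine horizontal and vertical slice conditions (or use simple connectivity of $P$) to close this, though the conclusion is certainly true.
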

\begin{proof}

    \textbf{Existence.} Consider the leftmost vertical line that intersects $P$ to form a non-empty vertical line segment of intersection. This must intersect with a vertical polygon edge of $P$. The endpoints of this vertical polygon edge form a left knob. Hence a left knob always exists. Symmetric arguments yield that a top knob, a bottom knob and a right knob exists as well.
    
    \begin{figure}[!htbp]  
    \centering    
    \includegraphics[height=70mm]{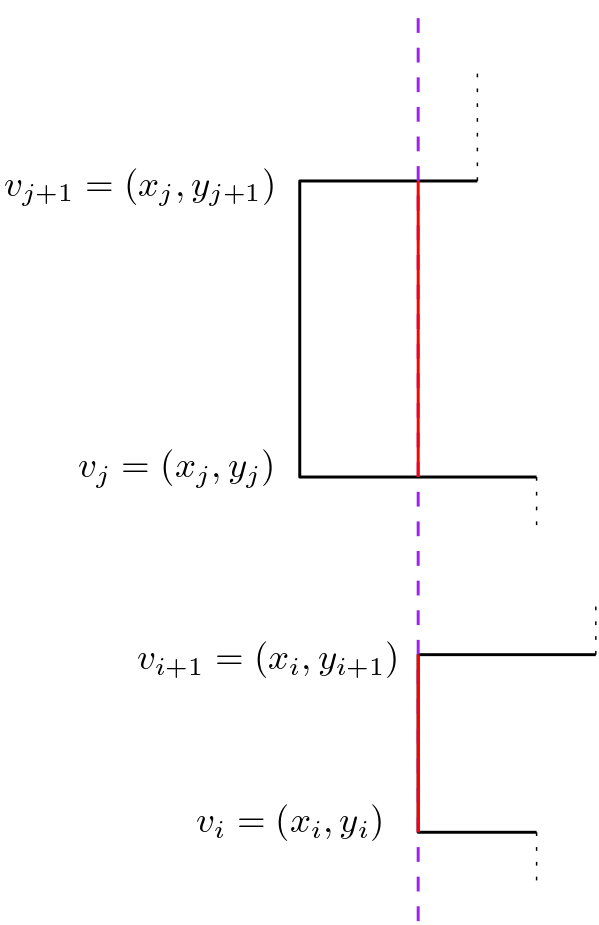}
    \caption{For proof of Lemma~\ref{lem:4-knobs}}\label{fig:4-knobs-proof} 
    \end{figure}
    
    \textbf{Uniqueness.} We show that there cannot be two distinct left knobs. For the sake of contradiction, consider that there are two left knobs $(v_i, v_{i + 1})$ and $(v_j, v_{j+1})$ with $x_i = x_{i + 1} \ge x_j = x_{j+1}$ (refer to Figure~\ref{fig:4-knobs-proof}). Consider the vertical line $x = x_i$. This line intersects $P$ at the entire edge $(v_i, v_{i+1})$ with the intercept the y coordinates being $[\min(y_i, y_{i+1}), \max(y_i, y_{i+1})]$. Now consider any curve lying inside $P$ with $v_i$ and $v_j$ as endpoints. This curve must also intersect the line $x = x_i$ (as $x_j \le x_i$) at some y coordinate outside $[\min(y_i, y_{i+1}), \max(y_i, y_{i+1})]$. This means the intersection of the line $x = x_i$ with $P$ is not a single line segment, which contradicts the assumption that $P$ is orthogonally convex. Therefore, there can be exactly one left knob. By symmetric arguments, there is exactly one right knob, exactly one top knob and exactly one bottom knob.
\end{proof}

Therefore, if we use \cref{alg:sep-sq-rec} to solve \OPCS on orthogonally convex polygons, we can substitute $k=4$ in the analysis, giving us a running time of $\OO(n^2 + n \cdot 4^{10}) = \OO(n^2)$.

\begin{corollary}\label{cor:ortho-convo-time}
    Separating-Square-Recursion (Algorithm~\ref{alg:sep-sq-rec}) takes $\OO(n^2)$ time to solve \OPCS on orthogonally convex polygons. 
\end{corollary}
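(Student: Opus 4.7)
The plan is to derive this corollary as a direct substitution into the time bound of Theorem~\ref{thm:k-knobbed}, using Lemma~\ref{lem:4-knobs} to control the knob count. First, I would invoke Lemma~\ref{lem:4-knobs}, which establishes that any orthogonally convex polygon has exactly four knobs (precisely one in each of the four cardinal directions). This gives us $k = 4$ as a constant upper bound on the number of knobs for any input that belongs to the orthogonally convex subclass.

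Next, I would apply Theorem~\ref{thm:k-knobbed} to the input polygon $P$ with parameter $k = 4$. That theorem guarantees a running time of $\OO(n^2 + k^{10} \cdot n)$ for Algorithm~\ref{alg:sep-sq-rec}. Substituting $k = 4$ yields $\OO(n^2 + 4^{10} \cdot n) = \OO(n^2 + n) = \OO(n^2)$, since $4^{10}$ is an absolute constant. The correctness of the algorithm in this regime follows immediately from the correctness of Algorithm~\ref{alg:sep-sq-rec} on arbitrary orthogonal polygons without holes, and an orthogonally convex polygon is in particular a simple orthogonal polygon without holes (by \cref{def:ortho-convex}).

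There is essentially no obstacle here; the work was already done in proving Theorem~\ref{thm:k-knobbed} and Lemma~\ref{lem:4-knobs}. The only sanity check I would explicitly mention in the write-up is that the recursive structure of Algorithm~\ref{alg:sep-sq-rec} preserves the \emph{simply connected without holes} condition at every recursive call (guaranteed by Lemma~\ref{lem:sep-red}), so base cases reached via the recursion are still valid inputs for Algorithm~\ref{alg:polytime}; beyond that, the proof is a one-line substitution.
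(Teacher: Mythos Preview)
Your proposal is correct and matches the paper's own argument essentially verbatim: invoke Lemma~\ref{lem:4-knobs} to obtain $k=4$, then substitute into the $\OO(n^2 + k^{10}\cdot n)$ bound of Theorem~\ref{thm:k-knobbed} to get $\OO(n^2)$. The additional sanity check you mention is fine but not needed, since Theorem~\ref{thm:k-knobbed} already encapsulates that reasoning.
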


\section{Hardness Results for Polygons with Holes} \label{sec:hardness}

In this Section, we consider the {\sc OPCSH} problem. We consider the setting as described by Aupperle et. al~\cite{aupperle1988covering}: we need to solve \OPCSH in orthogonal polygons, where the input consists of all $N$ lattice points lying in the polygon or on the boundary. We first state the issue with their claimed proof for \OPCSH being NP-complete. Then we state a correct proof of hardness. We will use some structures defined in their paper, but use novel constructions for some gadgets used in the reduction.

\subsection{Issue with the existing proof}

The proof by Aupperle et. al~\cite{aupperle1988covering} for NP-completeness of \OPCSH gives a reduction from \textsc{Planar 3-CNF}~\cite{doi:10.1137/0211025} to \OPCSH. The reduction gives a polynomial-time algorithm to reduce any \textsc{Planar 3-CNF} instance (say $\psi$) and transforms it to an instance of \OPCSH. First, the formula is negated to obtain $\phi = \neg \psi$, which is a formula in disjunctive normal form (DNF) due to De-Morgan's law. This means that $\psi$ is satisfied if and only if $\phi$ evaluates to {\sf false} on some truth assignment of the variables. Now, $\phi$ is reduced to an instance of \OPCSH using three kinds of gadgets: \emph{wires}, \emph{variable gadgets} and \emph{junction gadgets}. A variable gadget is introduced for each variable of $\phi$ and a junction gadget is introduced for each conjunction in $\phi$. Further wire gadgets (or simply, wires) are introduced to `connect' junction gadgets to variable gadgets. 

The variable gadgets, wires and the junction gadgets shall be placed according to the formula $\phi$. For each variable gadget or each wire, it is easy to compute, in polynomial time, the minimum number of squares needed to cover the gadget. It is possible that some of the squares used for covering a wire may cover a region inside a junction gadget. Each variable gadget permits two kinds of covering (representing the two kinds of truth assignment to a variable in $\phi$). Let us denote these coverings by {\sf true}-covering and {\sf false}-covering, respectively, Both coverings require the same number of squares. Similarly we can extend this observation to see that there are two ways in which a variable gadget and its adjacent wires can be covered optimally; both coverings use the same number of squares. A combination of how a variable gadget is covered affects which portions of all its corresponding junction gadgets (gadgets for the \textsc{and}-clauses containing the corresponding variable in $\phi$) will be covered. Let $V$ be the number of squares required to cover all variable gadgets and $W$ be the number of squares required to cover all wire gadgets. 

In the paper by Aupperle et. al~\cite{aupperle1988covering}, it is claimed that the junction gadgets (or simply, a junction) are such that, the minimum number of squares needed to cover a junction gadget is $12$ if all three variable gadgets corresponding to the \textsc{and}-clause are covered with a covering which makes the \textsc{and}-clause evaluate to {\sf true} ({\sf true}-covering if the variable appears non-negated in the clause, or {\sf false}-covering if the variable is negated in the clause); otherwise the minimum number of squares needed is $13$. Therefore, junctions act similar to \textsc{and}-gates. Let the total number of junctions be $j$. Then the paper concludes by stating that $\psi$ is satisfiable if and only if the reduced instance can be covered with less than $(V + W + 13j)$ squares; otherwise exactly $(V + W + 13j)$ squares are required. Hence solving \OPCSH would also solve \textsc{Planar 3-CNF}.

We take a deeper look into this reduction by performing the following case work on the behaviour of $\psi$ on various truth assignments of the variables.

\begin{itemize}

    \item \textbf{Case I, $\psi$ is {\sf true} for all assignments}. This implies $\phi = \neg \psi$ is {\sf false} for all assignments. Since $\phi$ is in DNF, all assignments must render \emph{all} \textsc{and}-clauses as {\sf false}. Therefore, for all minimum covering of the variable gadgets, coverings all junctions would require $13$ squares. This means the minimum number of squares needed to cover such an instance is $V + W + 13j$.

    \item \textbf{Case II, $\psi$ is {\sf false} for all assignments}. This implies $\phi = \neg \psi$ is {\sf true} for all assignments. Since $\phi$ is in DNF, all assignments must render \emph{at least one} \textsc{and}-clauses as {\sf true}. Therefore, for any arbitrary minimum covering of the variable gadgets, there would be at least one junction that can be covered with $12$ squares (while others take at most $13$ squares). This means the minimum number of squares needed to cover such an instance is strictly less than $V + W + 13j$.

    \item \textbf{Case III, $\psi$ is {\sf true} for some assignments and {\sf false} for some}. This implies $\phi = \neg \psi$ is also {\sf true} for some assignments and {\sf false} for some. Consider any assignment such that $\phi$ is {\sf true}. Since $\phi$ is in DNF, this assignments must render \emph{at least one} \textsc{and}-clauses as {\sf true}. Therefore, if we cover the variable gadgets corresponding to such an assignment, there would be at least one junction that can be covered with $12$ squares (while others take at most $13$ squares). This means the minimum number of squares needed to cover such an instance is strictly less than $V + W + 13j$.
\end{itemize}

Therefore, the minimum number of squares to cover the reduced instance is equal to $V + W + 13j$ if and only if $\psi$ is {\sf true} for all assignments (Case I) and less than $V + W + 13j$ otherwise (Case II \& Case III). Therefore, solving \OPCSH     solves tautology for $\psi$ instead of satisfiablity. Moreover, tautology in a CNF formula can be checked in linear time (Since $\psi$ is a tautology if and only if for all clauses $c$ in $\phi$ there is a variable $x$ such that both $x$ and $\neg x$ appear in $c$). Hence this reduction does not prove NP-hardness of \OPCSH.

\subsection{Fixing the construction}

The issue in this claimed NP-hardness reduction in arose as the clauses require more number of squares when the literals of a clause do not all evaluate to {\sf true} even if this assignment sets the clause to {\sf true}.

We construct such a junction that requires $29$ squares if \emph{all three} of its literals evaluate to {\sf false}, whereas the junctions require $28$ squares when at least one literal evaluates to {\sf true}. Therefore our modified junction behaves similar to an \textsc{or}-gate.

\paragraph*{Recapping the constructions of variable gadgets and wires.}

The construction due to Aupperle et. al~\cite{aupperle1988covering} starts by defining \emph{even-lines} (\emph{odd-lines}) as horizontal lines with an even (odd) y-coordinate or a vertical line with an even (odd) x-coordinate. The construction of variable gadgets and wires follow these properties:

\begin{itemize}

    \item All maximal squares of variable gadgets and wires are $2 \times 2$ in dimension. 
    
    \item The wires connecting a variable gadget for a variable $x$, to a junction for a clause $c$, where $x$ appears \emph{non-negated} in $c$ --- connect to the variable gadget (as well as the junction) horizontally along \emph{two consecutive odd lines} (\textit{i.e.} an even line passes through the middle of such a connection). Please refer to diagrams of the original construction~\cite{aupperle1988covering}.
    
    \item The wires connecting a variable gadget for a variable $x$, to a junction for a clause $c$, where $x$ appears \emph{negated} in $c$ --- connect to the variable gadget (as well as the junction) horizontally along \emph{two consecutive even lines} (\textit{i.e.} an odd line passes through the middle of such a connection). Please refer to diagrams of the original construction~\cite{aupperle1988covering}.

    \item If a variable gadget for a variable $x$ is covered with a {\sf true}-covering, then all wires connecting it to a junction for a clause $c$ where $x$ appears \emph{non-negated}, will be covered in a way such that half a square protrudes out of destination and hence covers a part of the junction. Otherwise, for a {\sf false}-covering, the entire junction would remain uncovered, even when the variable gadget and the wire are covered. Please refer to diagrams of the original construction~\cite{aupperle1988covering}.
    
    \item If a variable gadget for a variable $x$ is covered with a {\sf false}-covering, then all wires connecting it to a junction for a clause $c$ where $x$ appears \emph{negated}, will be covered in a way such that half a square protrudes out of destination and hence covers a part of the junction. Otherwise, for a {\sf true}-covering, the entire junction would remain uncovered, even when the variable gadget and the wire are covered. Please refer to diagrams of the original construction~\cite{aupperle1988covering}.
    
\end{itemize}

We will use these construction for variable gadgets as it is, but we provide novel constructions for junction gadgets as mentioned earlier.

\paragraph*{Modifying the junction gadgets.}

We construct new junction gadgets as shown in Figure~\ref{fig:jn-gadgets}. Wires connect to these junction gadgets in the same way as described in the original construction~\cite{aupperle1988covering}.

\begin{figure} [ht]
    \hfill 
    \begin{subfigure}[t]{0.47\textwidth} 
        \centering 
        \includegraphics[width=65mm]{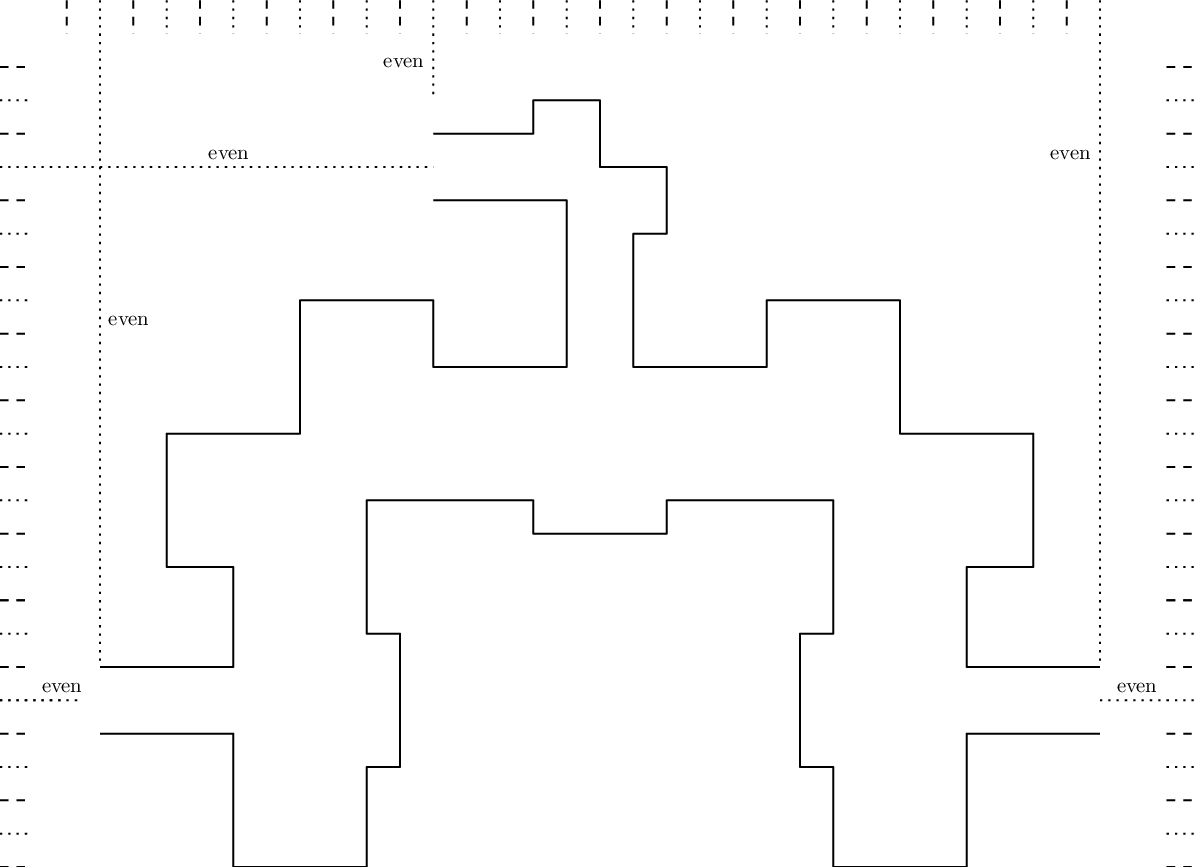}
        \caption{Junction for clauses of type $(x \vee y \vee z)$}\label{fig:jn-gadgets-a}
    \end{subfigure} \hfill
    \begin{subfigure}[t]{0.47\textwidth}
        \centering
        \includegraphics[width=65mm]{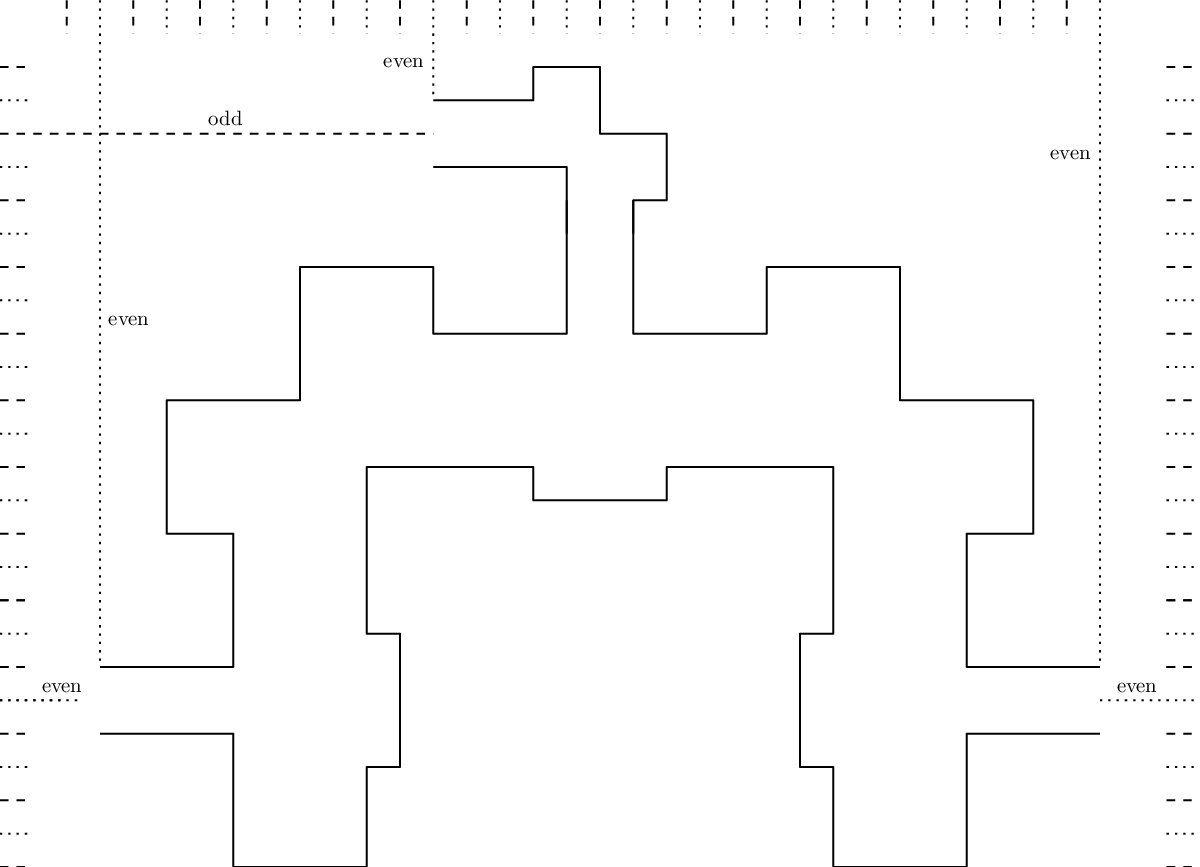}
        \caption{Junction for clauses of type $(x \vee \neg y \vee z)$}\label{fig:jn-gadgets-b}
    \end{subfigure} \hfill 
    \caption{Modified junction}\label{fig:jn-gadgets} 
\end{figure}

The junctions for clauses of type $(\neg x \vee \neg y \vee \neg z)$ can be constructed by shifting the gadget in Figure~\ref{fig:jn-gadgets-a} vertically up by one square. Similarly the junctions for clauses of type $(\neg x \vee y \vee \neg z)$ can be constructed by shifting the gadget in Figure~\ref{fig:jn-gadgets-b} vertically up by one square.

We now investigate minimal coverings of squares for each of these gadgets. For the remaining paper, we denote by the term `literal', a variable or its negation that appears in a clause.

\begin{lemma}\label{lem:new-jn}
    Each junction gadget requires $28$ squares to be covered when at least one literal in it evaluates to {\sf true}. Otherwise the junction gadget requires $29$ squares to be covered.
\end{lemma}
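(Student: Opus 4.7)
The plan is to prove both directions of the bound by case analysis on the three literals, combined with the structure of the wire-coverings described just before the lemma. By the translation remark following Figure~\ref{fig:jn-gadgets}, the four clause types (three non-negated literals, one/two/three negated literals) are congruent up to a vertical shift, so it suffices to treat the two gadgets in Figure~\ref{fig:jn-gadgets-a} and Figure~\ref{fig:jn-gadgets-b}; the argument extends verbatim.

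For the upper bound I would exhibit explicit coverings. Start with the all-\textsf{false} configuration, in which no wire contributes any pre-covered portion of the junction: describe a canonical covering by $29$ maximal $2 \times 2$ (and a few degenerate $1 \times 1$) squares that tiles the junction interior. Then, for the case when some literal is \textsf{true}, observe that the corresponding wire-covering protrudes exactly half a square into the junction along the port for that literal, as recalled in the bulleted properties. This protruding half-square exactly covers a region that, in the all-\textsf{false} covering, was covered by a dedicated square; replacing that square with a single square that covers the remaining strip left by the protrusion reduces the count by $1$ and yields a valid covering of $28$ squares. By symmetry among the three ports the same substitution works regardless of which literal is \textsf{true}, and having two or three \textsf{true} literals only makes the task easier (redundant substitutions yield the same count of $28$ since the extra protrusions fall inside squares that are already forced).

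For the lower bound I would exhibit a set $W$ of $29$ distinguished unit blocks inside the junction such that no valid square of the junction contains two blocks of $W$. The standard way to produce such witnesses is to pick blocks adjacent to concave polygon vertices: at a concave vertex two incident edges constrain any maximal square containing a neighbouring block to be small and to lie on a specific side of the vertex, so the maximal squares covering distinct such blocks are forced to be distinct. A careful inspection of Figure~\ref{fig:jn-gadgets} shows that the junction has enough concave vertices (including three close to the three ports) to select $29$ such pairwise-incompatible blocks. When all literals are \textsf{false}, no block in $W$ is covered by any wire-covering, so any cover of the junction needs at least $29$ squares, matching the upper bound. When at least one literal is \textsf{true}, the half-square protrusion from the corresponding wire covers exactly one of the three port-adjacent blocks of $W$; hence only $28$ fresh squares are needed for the remaining witnesses, again matching the upper bound.

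The main obstacle will be verifying the witness set $W$ rigorously. For each of the $29$ chosen blocks I must compute the maximal square of the junction containing it and check that this square is disjoint from the other $28$ blocks; this is a finite but slightly tedious inspection that must be carried out separately for the two gadgets of Figure~\ref{fig:jn-gadgets}, since the vertical offset of the wire channels shifts a few maximal squares. A secondary subtlety is to ensure that the upper-bound substitution step in the \textsf{true} case is compatible with the wire-side of the interface: the replacement square must be valid inside the junction and must not double-cover any square already counted in the wire contribution, which forces a specific choice of replacement square near each port. Once these two verifications are complete the two bounds match and the lemma follows.
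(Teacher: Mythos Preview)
Your overall strategy matches the paper's---explicit coverings for the upper bound, pairwise-incompatible witness blocks for the lower bound---but the execution has concrete gaps. The gadget is not covered by $2\times 2$ squares: it contains nine $4\times 4$ maximal corner squares together with eleven $2\times 2$ maximal corner squares at the wire ports, so your ``canonical covering by $29$ maximal $2\times 2$ (and a few $1\times 1$) squares'' cannot exist. More importantly, your lower bound seeks a single set $W$ of $29$ pairwise-incompatible unit blocks; the paper does not claim such a set exists and instead argues in two stages. Using \cref{rem:mcs-always} one may assume the $20$ maximal corner squares are present in an optimal covering; then only $8$ residual blocks $p_1,\dots,p_8$ (respectively $9$ blocks $q_1,\dots,q_9$ in the all-\textsf{false} case) need to be shown uncovered by those $20$ squares and pairwise incompatible, giving $20+8=28$ always and $20+9=29$ when every literal is \textsf{false}. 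Because several of the $4\times 4$ corner squares overlap, corner blocks at distinct convex vertices can be jointly coverable by one valid square, so your direct $29$-block independent set is not obviously available (and since the junction has holes, $G(P)$ need not be chordal, so equality of the clique-cover and independence numbers is not automatic).

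Your upper-bound substitution (``the protrusion covers a region that was handled by a dedicated square, so drop it'') is also too coarse: the wire contributes only a $1\times 2$ half-square at a port, which does not coincide with any single square of the all-\textsf{false} covering, and your description of the replacement step is count-neutral as written. The paper instead exhibits a separate explicit $28$-square covering for each truth pattern with at least one \textsf{true} literal (and a $29$-square covering for $(\textsf{F},\textsf{F},\textsf{F})$); these differ from one another by more than a single local swap.
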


\begin{proof}
The following proof works for both the gadgets in Figure~\ref{fig:jn-gadgets-a} and Figure~\ref{fig:jn-gadgets-b}. Both configurations are such that there are nine $4 \times 4$ maximal squares due to convex vertices present due, and eleven $2 \times 2$ squares necessary for joining with wire gadgets (\cref{rem:mcs-always}). Other than these $20$ squares, we analyse other valid squares needed to cover it fully. 

In both gadgets, we can find eight blocks ($1 \times 1$ regions) $p_1, \ldots, p_8$, as shown in Figure~\ref{fig:jn-8}, such that these are not covered by any of the $20$ previously placed squares (irrespective of the {\sf true}/{\sf false} values carried through the wires); and for $i \ne j$, $p_i$ and $p_j$ cannot be covered by a single valid square. This means at least $8$ more valid squares are required to cover the entire gadget. This means at least $28$ squares are required to cover the entire gadget (for any truth values being carried through the wires).  

\begin{figure} [ht]
    \hfill 
    \begin{subfigure}[t]{0.47\textwidth} 
        \centering 
        \includegraphics[width=65mm]{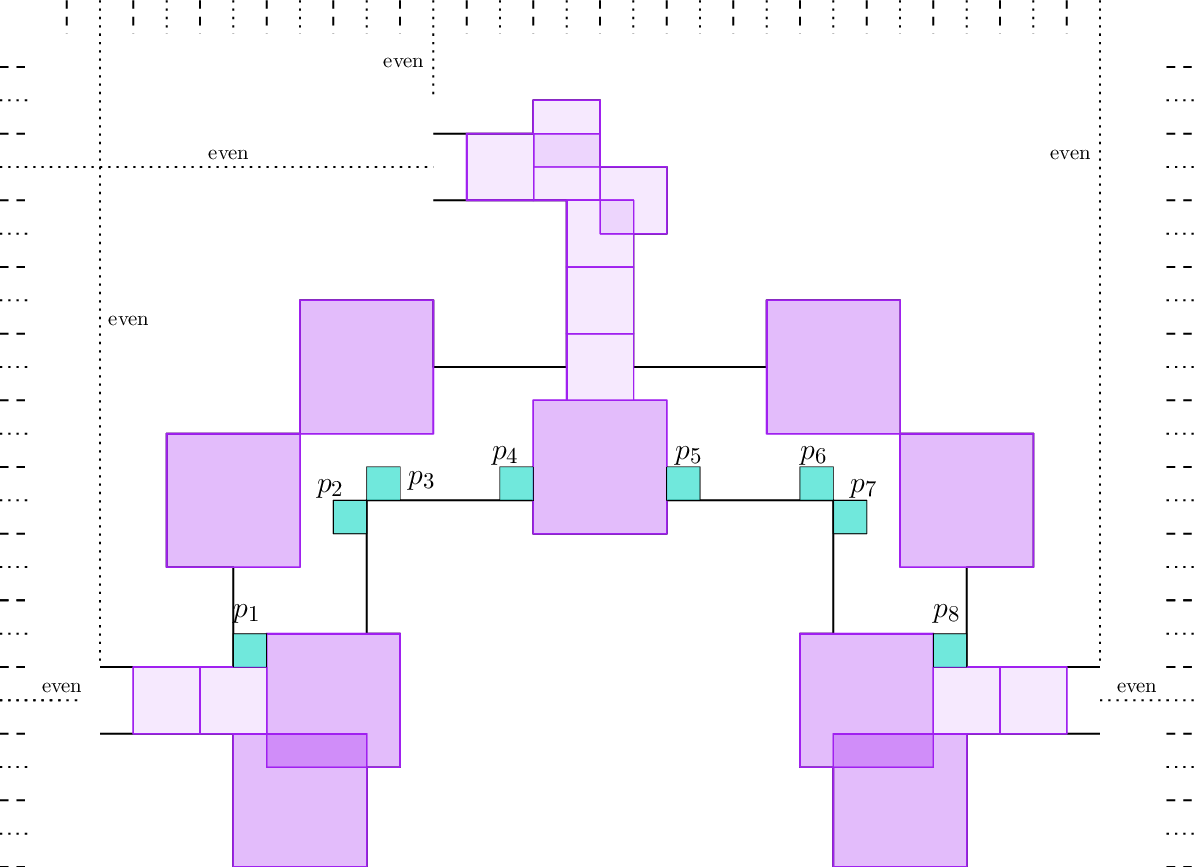}
        \caption{$p_1, \ldots, p_8$ for Figure~\ref{fig:jn-gadgets-a}}
    \end{subfigure} \hfill
    \begin{subfigure}[t]{0.47\textwidth}
        \centering
        \includegraphics[width=65mm]{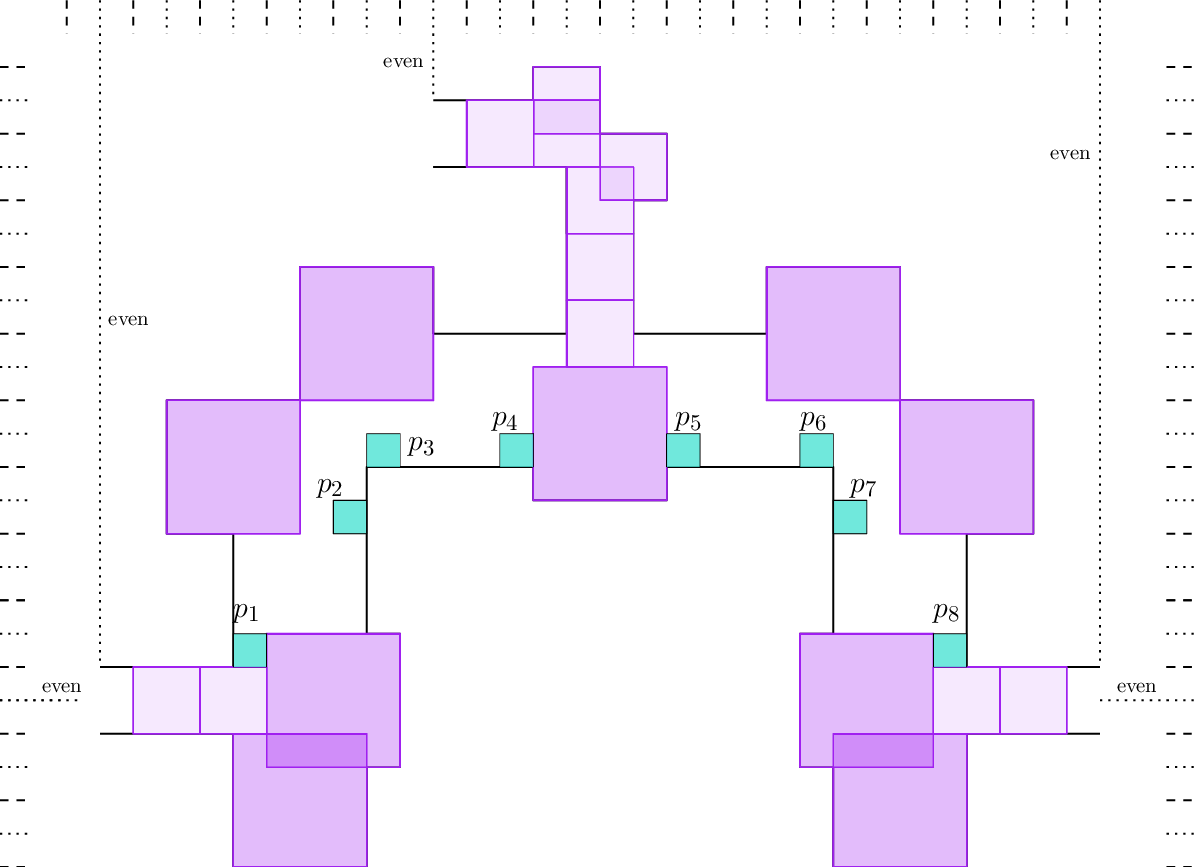}
        \caption{$p_1, \ldots, p_8$ for Figure~\ref{fig:jn-gadgets-b}}
    \end{subfigure} \hfill 
    \caption{Positions of $p_1, \ldots, p_8$}\label{fig:jn-8}
\end{figure}

\begin{figure} [ht]
    \hfill 
    \begin{subfigure}[t]{0.47\textwidth} 
        \centering 
        \includegraphics[width=65mm]{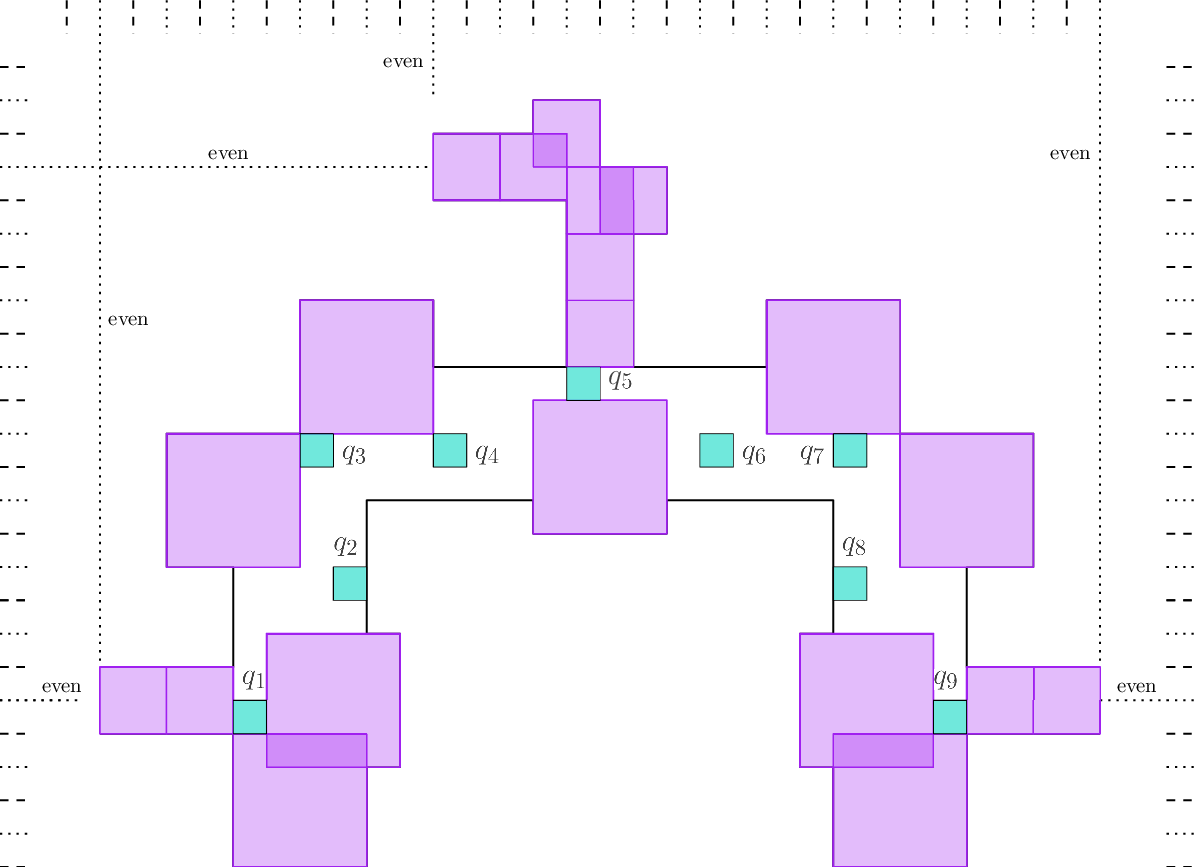}
        \caption{$q_1, \ldots, q_9$ for Figure~\ref{fig:jn-gadgets-a}}
    \end{subfigure} \hfill
    \begin{subfigure}[t]{0.47\textwidth}
        \centering
        \includegraphics[width=65mm]{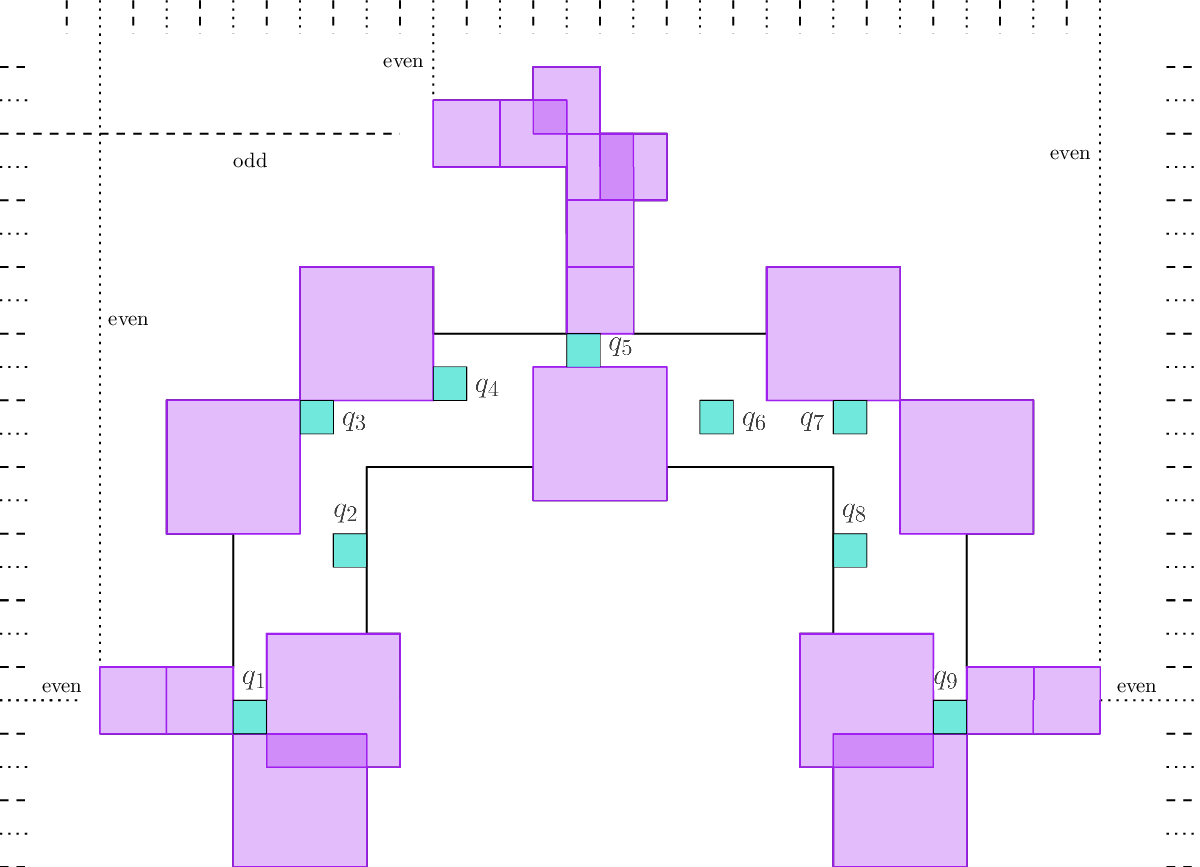}
        \caption{$q_1, \ldots, q_9$ for Figure~\ref{fig:jn-gadgets-b}}
    \end{subfigure} \hfill 
    \caption{Positions of $q_1, \ldots, q_9$}\label{fig:jn-9}
\end{figure}

Further, if all wires contain a {\sf false} value, we can find nine blocks ($1 \times 1$ regions) $q_1, \ldots, q_9$, as shown in Figure~\ref{fig:jn-9}, such that these are not covered by any of the $20$ previously placed squares; and for $i \ne j$, $q_i$ and $q_j$ cannot be covered by a single valid square. This means at least $9$ more valid squares are required to cover the entire gadget. Therefore at least $29$ squares are required to cover the entire gadget when all wires contain a {\sf false} value.  

In Figure~\ref{fig:proof-jn-a} and Figure~\ref{fig:proof-jn-b}, we indeed construct a set of $29$ valid squares covering the entire gadget when all three wires carry a {\sf false} value, and a set of $28$ valid squares covering the entire gadget when at least one of the wires contain a {\sf true} value. This construction proves that indeed $29$ squares and $28$ squares are precisely the minimum number of valid covering squares required to cover the gadgets if all literals are {\sf false}, and at least one literal is {\sf true}, respectively. This completes the proof.

\begin{figure} [htbp]
    \hfill 
    \begin{subfigure}[t]{0.47\textwidth} 
        \centering 
        \includegraphics[width=65mm]{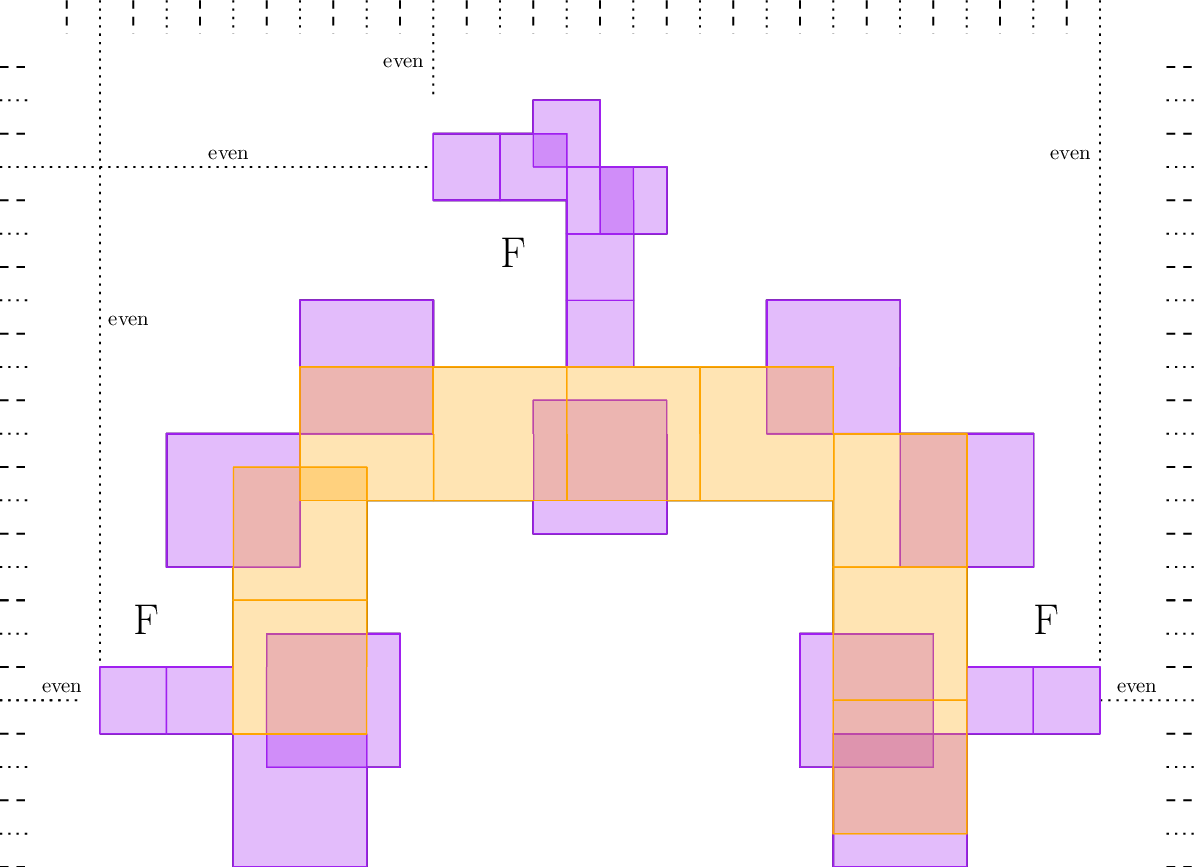}
        \caption{$29$ squares when literals are $(F,F,F)$}
    \end{subfigure} \hfill
    \begin{subfigure}[t]{0.47\textwidth}
        \centering
        \includegraphics[width=65mm]{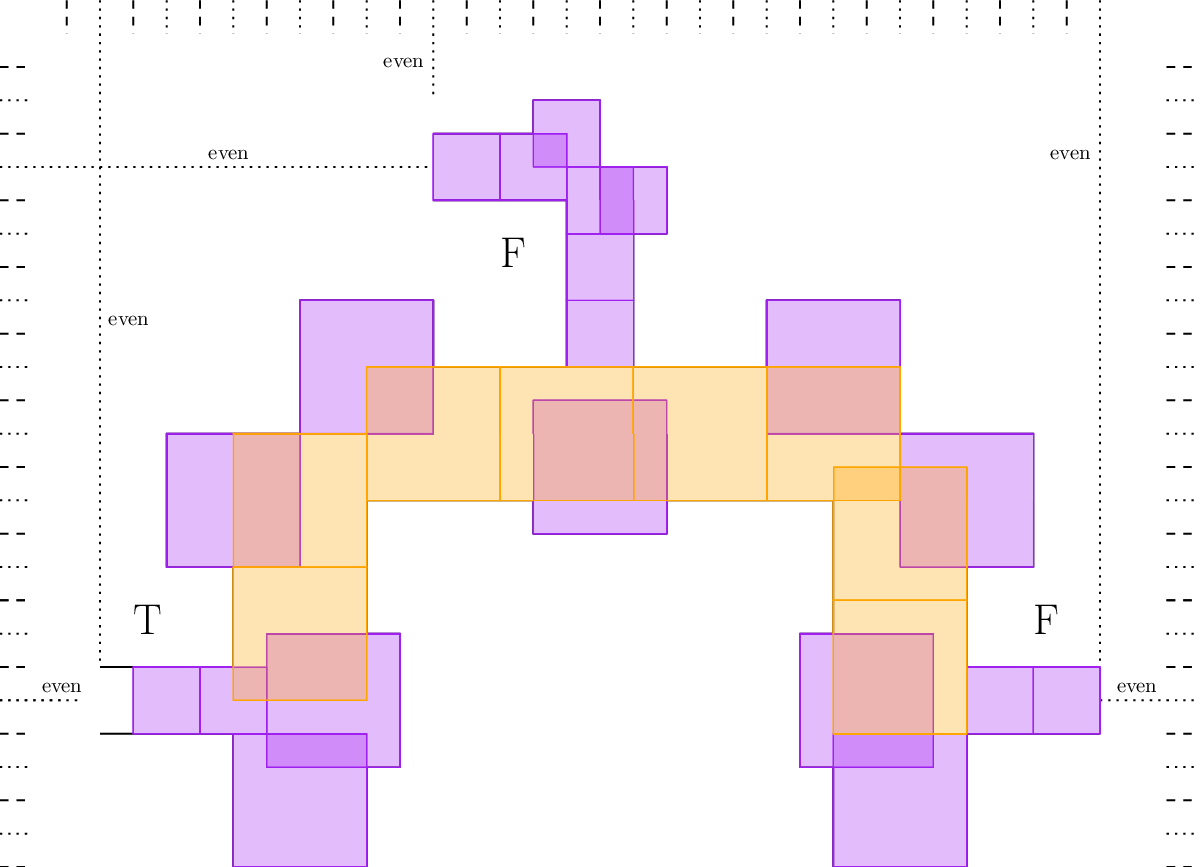}
        \caption{$28$ squares when literals are $(F, F, T)$ or $(T, F, F)$}
    \end{subfigure} \hfill \\
    \begin{subfigure}[t]{0.47\textwidth} 
        \centering 
        \includegraphics[width=65mm]{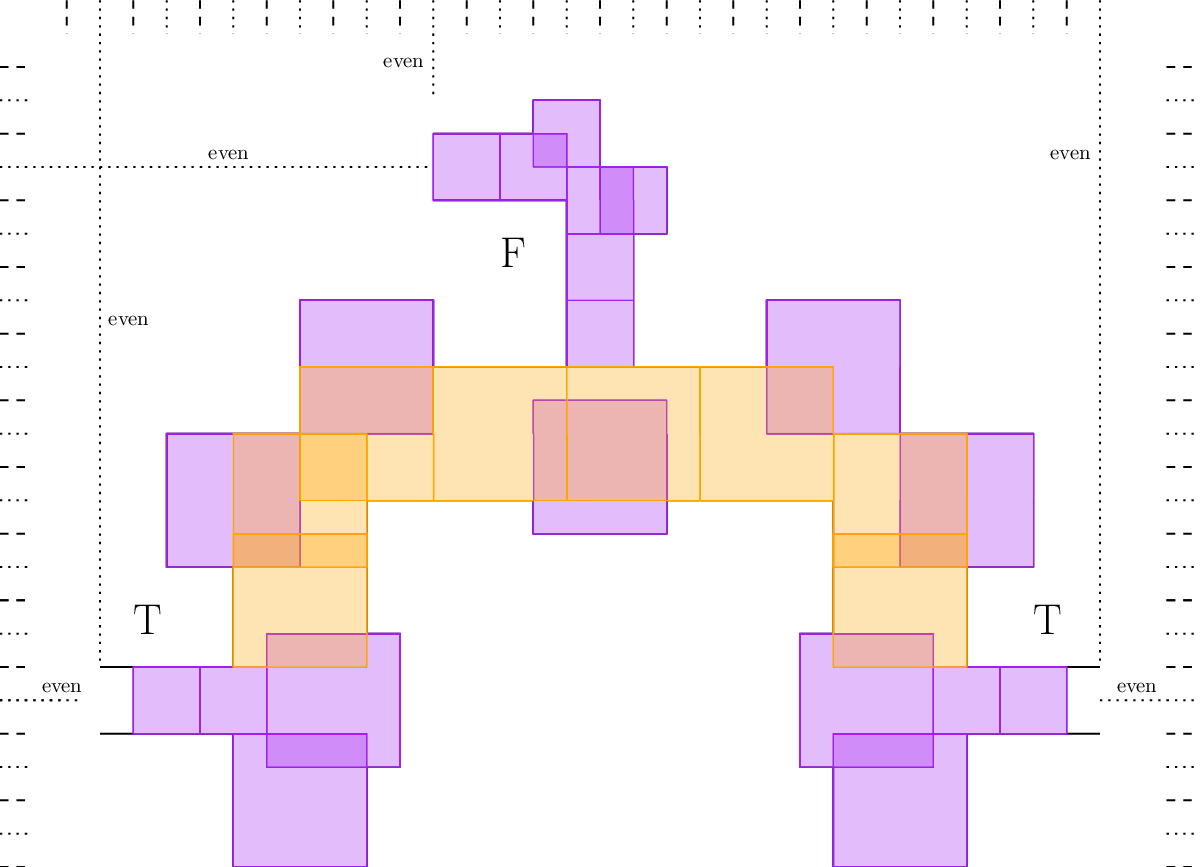}
        \caption{$28$ squares when literals are $(T,F,T)$}
    \end{subfigure} \hfill
    \begin{subfigure}[t]{0.47\textwidth}
        \centering
        \includegraphics[width=65mm]{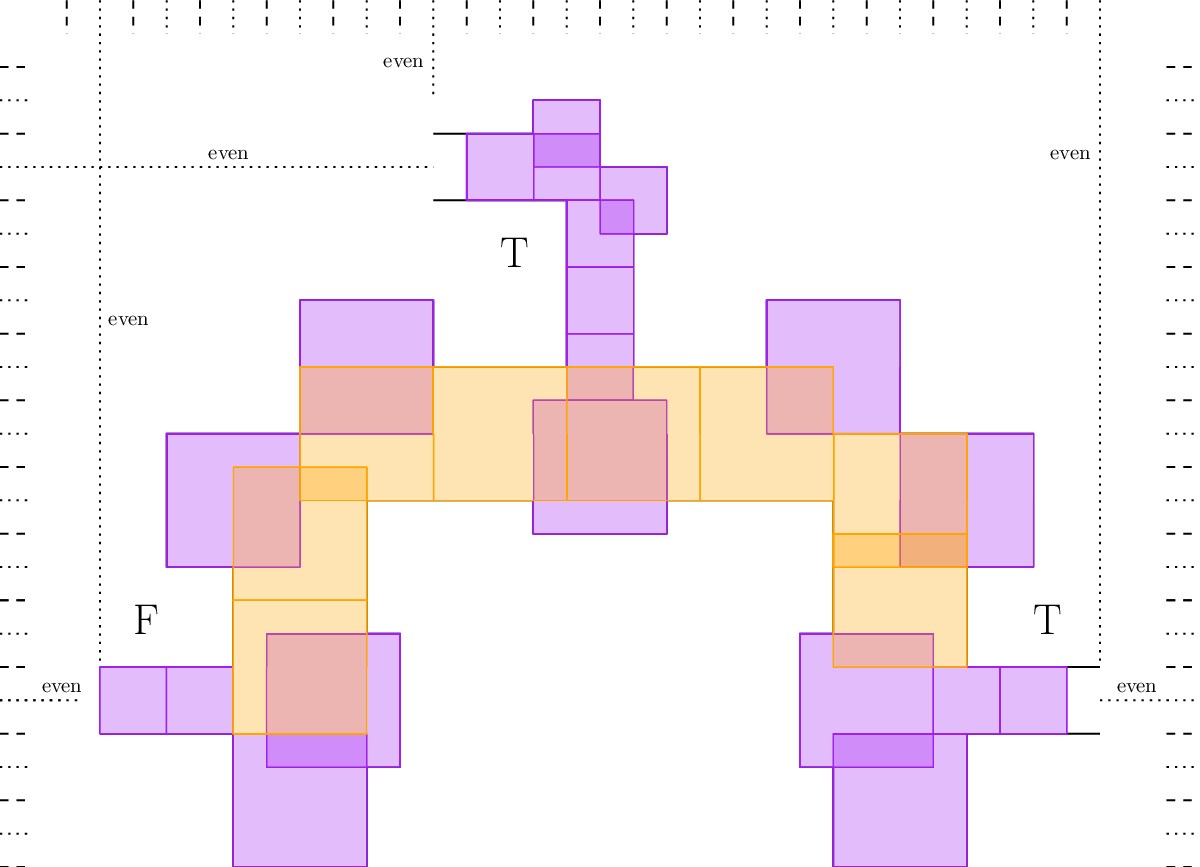}
        \caption{$28$ squares when literals are $(F, T, T)$ or $(T, F, F)$}
    \end{subfigure} \hfill \\
    \begin{subfigure}[t]{0.47\textwidth} 
        \centering 
        \includegraphics[width=65mm]{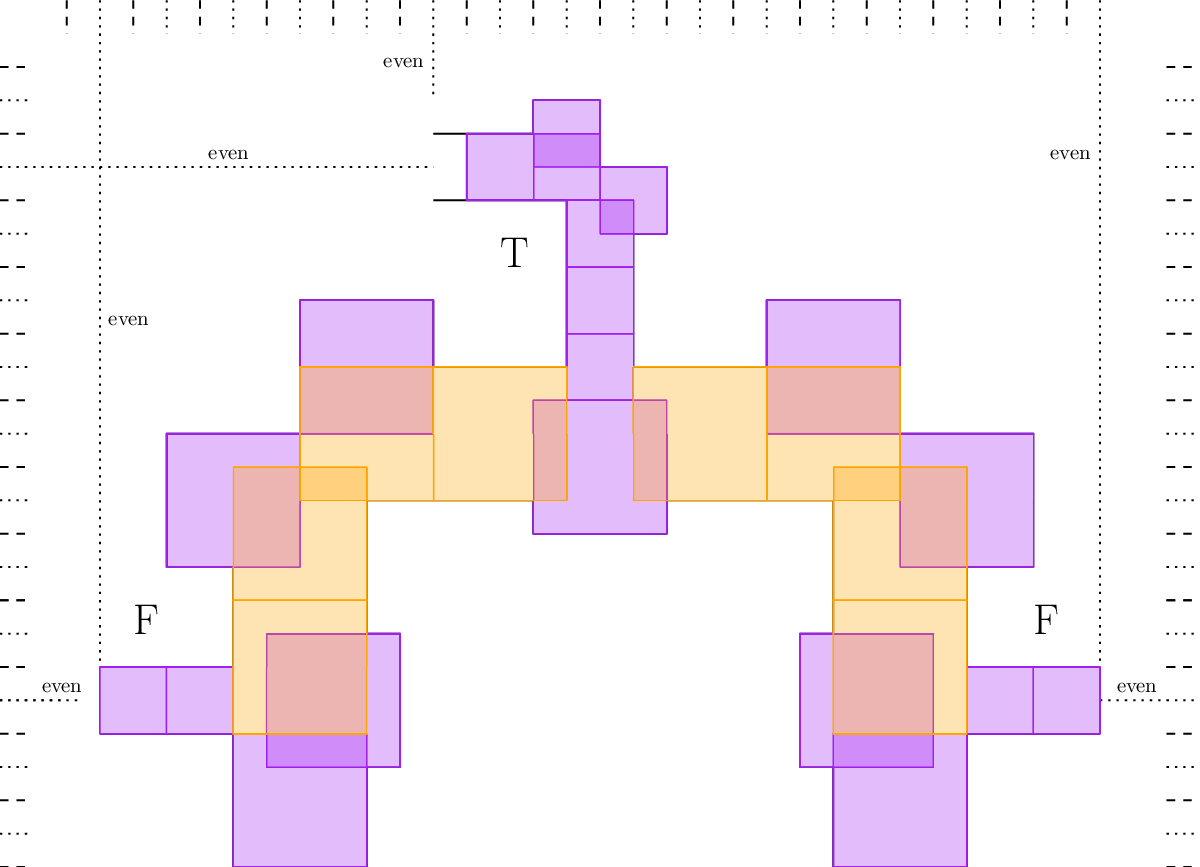}
        \caption{$28$ squares when literals are $(F,T,F)$}
    \end{subfigure} \hfill
    \begin{subfigure}[t]{0.47\textwidth}
        \centering
        \includegraphics[width=65mm]{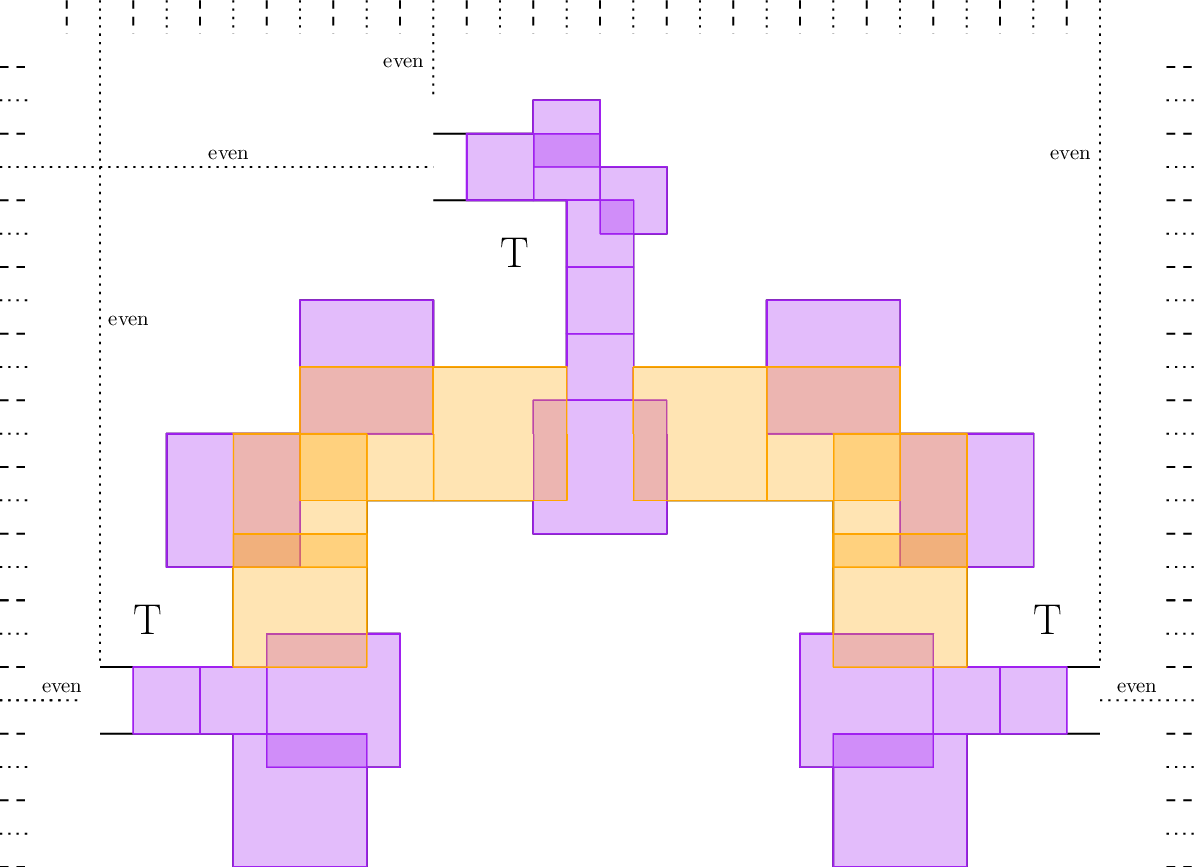}
        \caption{$28$ squares when literals are $(T, T, T)$}
    \end{subfigure} \hfill 
    \caption{Minimal covering of junctions in Figure~\ref{fig:jn-gadgets-a}}\label{fig:proof-jn-a} 
\end{figure}

\begin{figure} [htbp]
    \hfill 
    \begin{subfigure}[t]{0.47\textwidth} 
        \centering 
        \includegraphics[width=65mm]{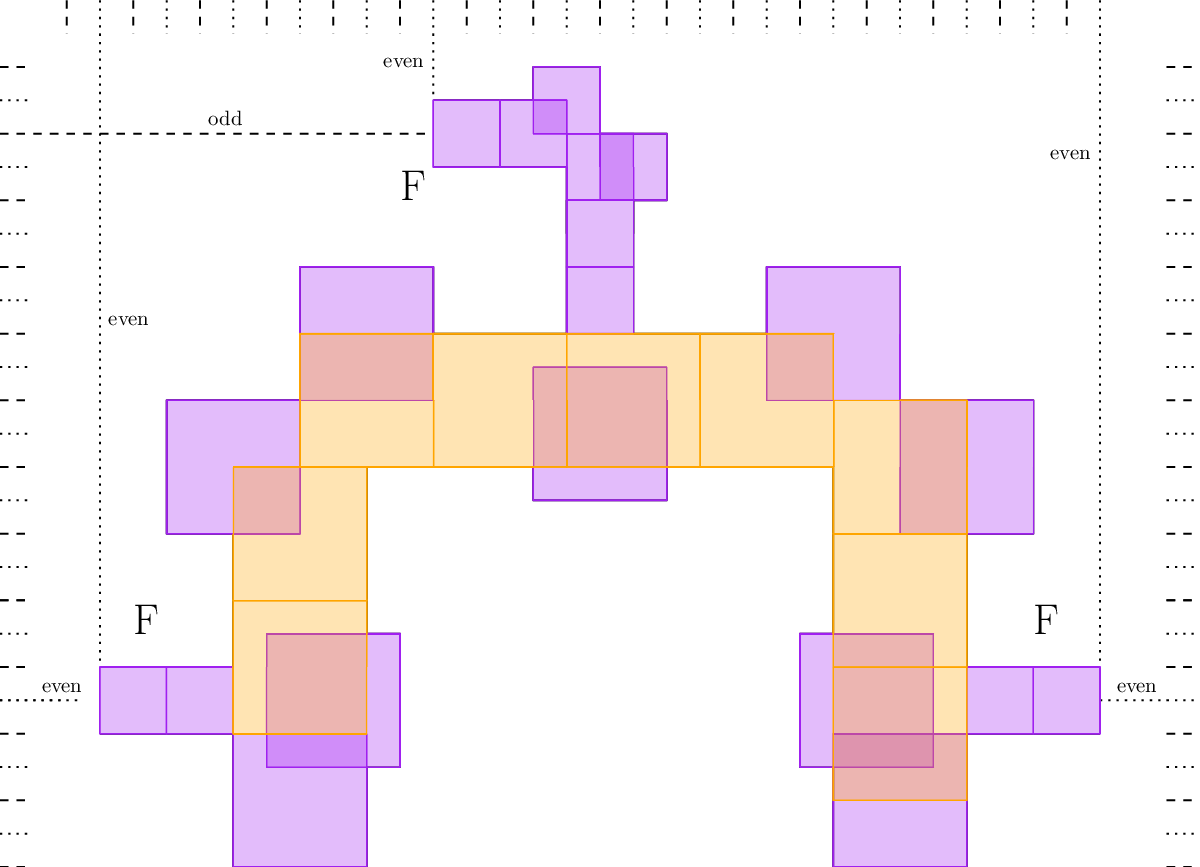}
        \caption{$29$ squares when literals are $(F,F,F)$}
    \end{subfigure} \hfill
    \begin{subfigure}[t]{0.47\textwidth}
        \centering
        \includegraphics[width=65mm]{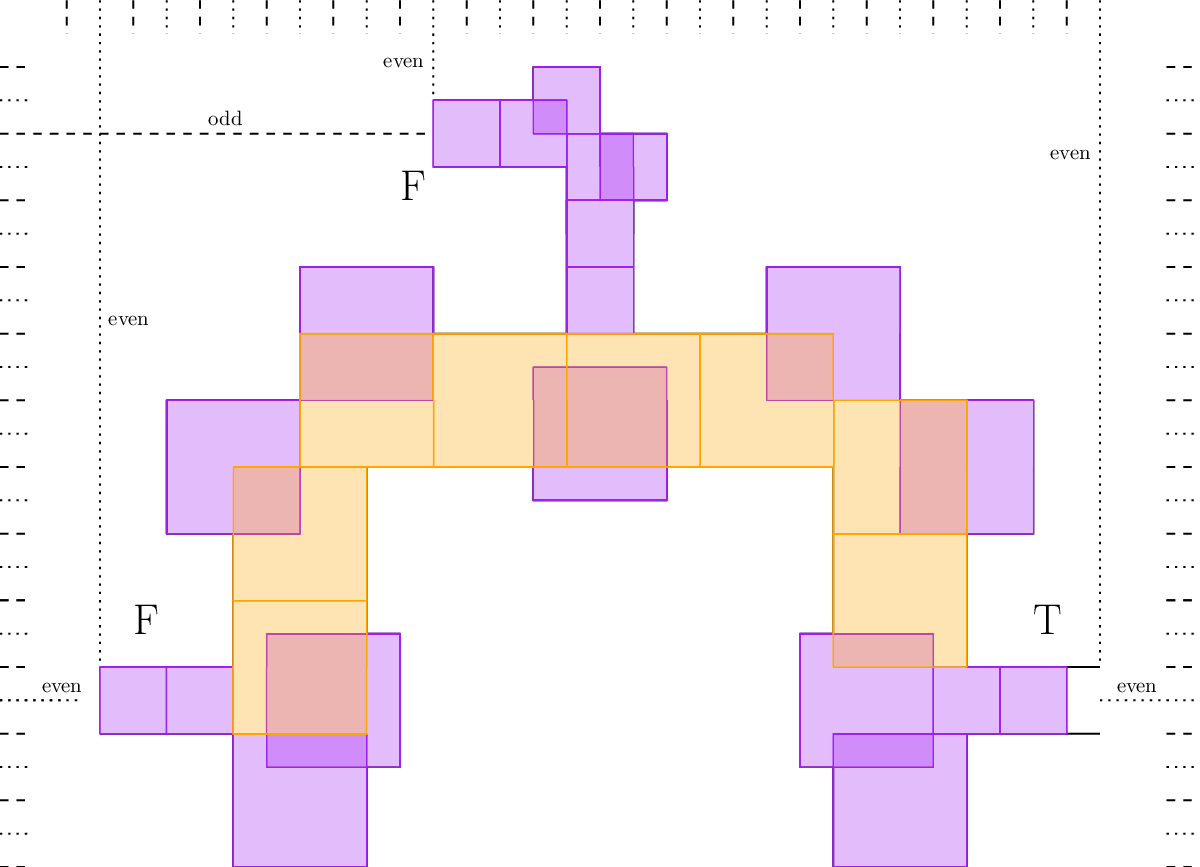}
        \caption{$28$ squares when literals are $(F, F, T)$ or $(T, F, F)$}
    \end{subfigure} \hfill \\
    \begin{subfigure}[t]{0.47\textwidth} 
        \centering 
        \includegraphics[width=65mm]{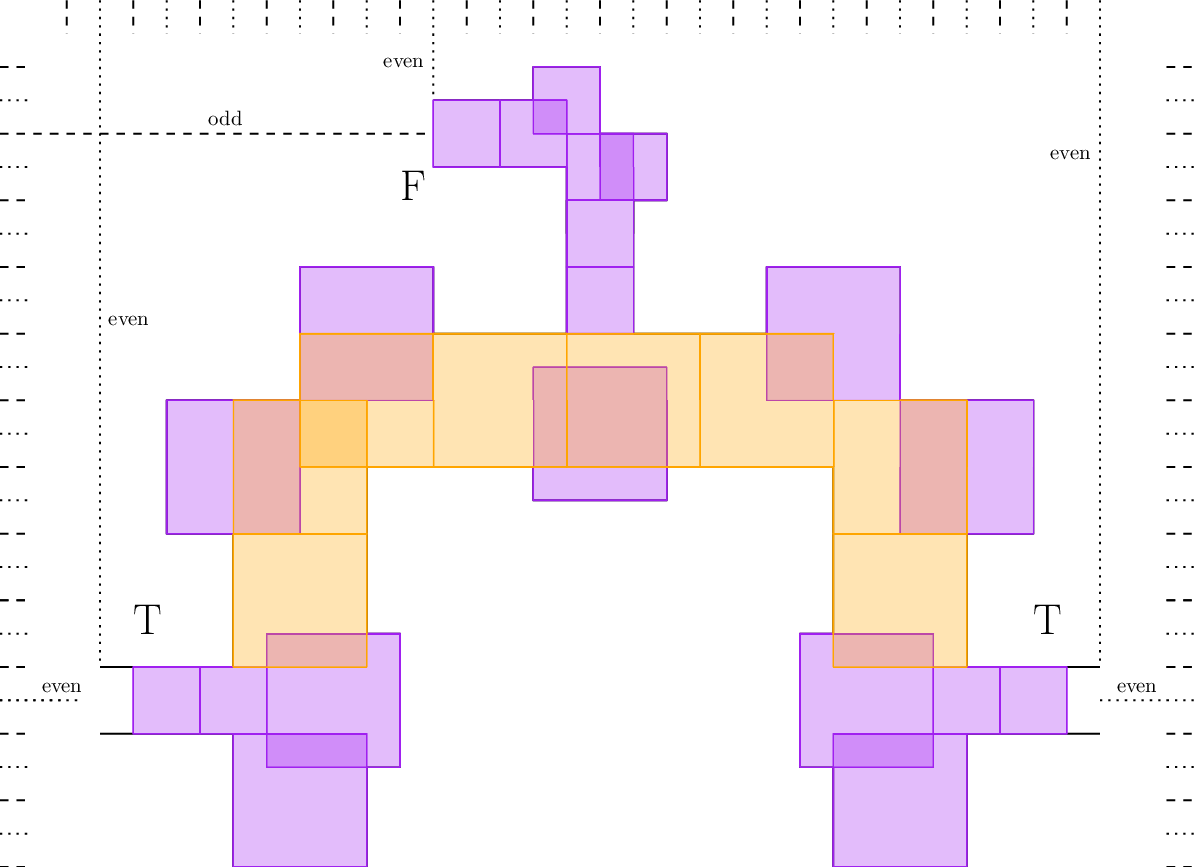}
        \caption{$28$ squares when literals are $(T,F,T)$}
    \end{subfigure} \hfill
    \begin{subfigure}[t]{0.47\textwidth}
        \centering
        \includegraphics[width=65mm]{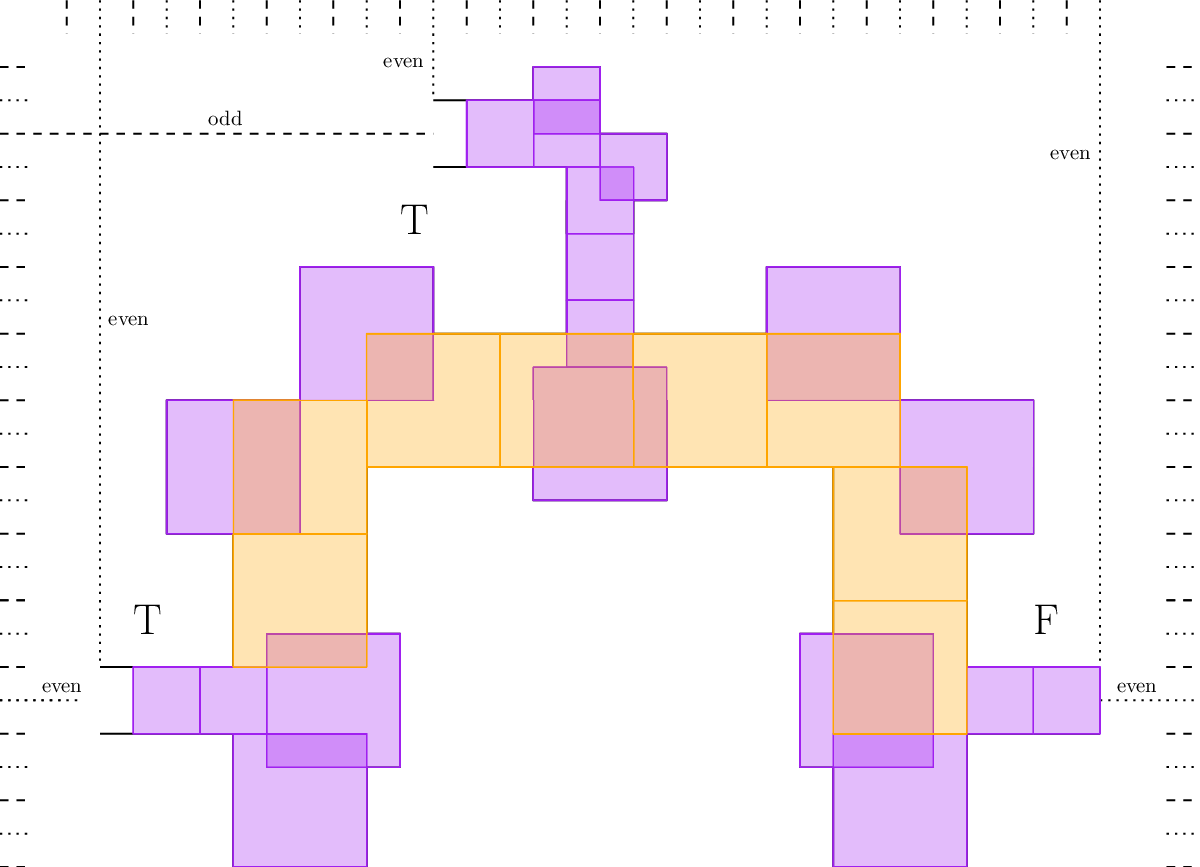}
        \caption{$28$ squares when literals are $(F, T, T)$ or $(T, F, F)$}
    \end{subfigure} \hfill \\
    \begin{subfigure}[t]{0.47\textwidth} 
        \centering 
        \includegraphics[width=65mm]{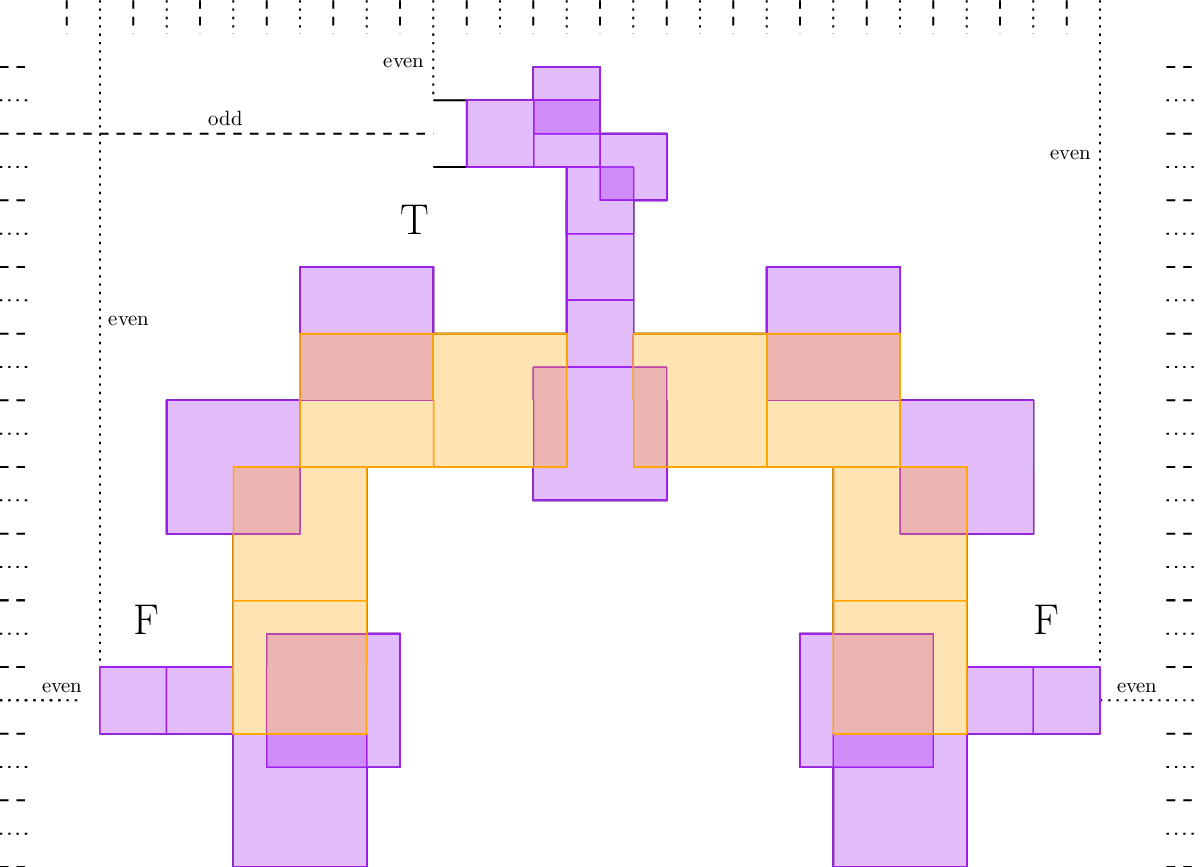}
        \caption{$28$ squares when literals are $(F,T,F)$}
    \end{subfigure} \hfill
    \begin{subfigure}[t]{0.47\textwidth}
        \centering
        \includegraphics[width=65mm]{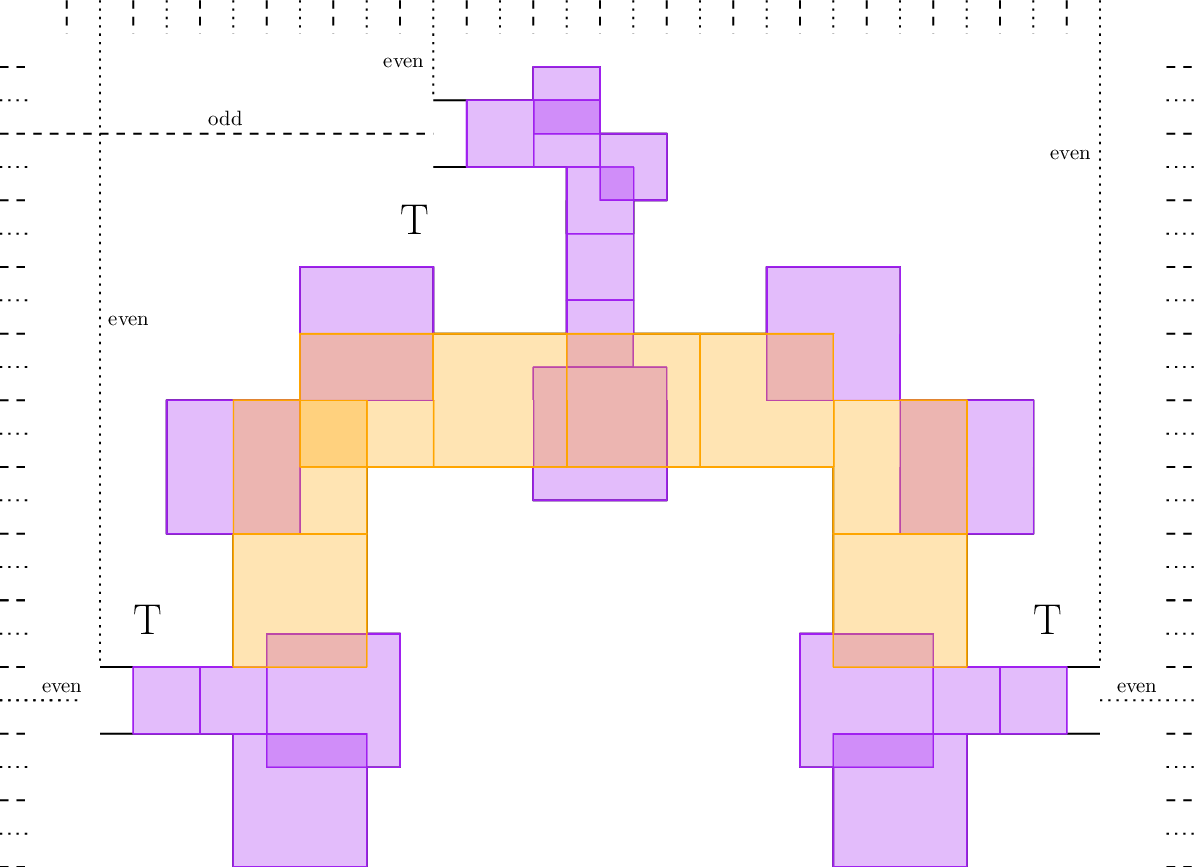}
        \caption{$28$ squares when literals are $(T, T, T)$}
    \end{subfigure} \hfill 
    \caption{Minimal covering of junctions in Figure~\ref{fig:jn-gadgets-b}}\label{fig:proof-jn-b} 
\end{figure}

\end{proof}

Now, we complete the reduction from \textsc{Planar 3-CNF}~\cite{doi:10.1137/0211025}.

\paragraph*{Completing the reduction.}

We consider a CNF boolean formula $\psi$, which is an instance of \textsc{Planar 3-CNF}. We construct variable gadgets for each variable in $\psi$, and set up junctions (\cref{fig:jn-gadgets}) for each clause in $\psi$ and connect them with wires. Let $V + W$ be the squares required to cover all variable gadgets and wires. Note that we are working with $\psi$ directly, instead of working with $\phi = \neg \psi$ as done by Aupperle et. al~\cite{aupperle1988covering}.

Now, we analyse the three cases like before. Let $j$ be the number of junctions. Note that, due to \cref{lem:new-jn}, the minimum number of squares required to cover the reduced instance is at least $V + W + 28j$.

\begin{itemize}

    \item \textbf{Case I, $\psi$ is {\sf true} for all assignments (satisfiable)}. Since $\psi$ is in CNF, all assignments must render all \textsc{or}-clauses as {\sf true}; \textsl{i.e.} at least one literal in each clause is {\sf true}. Therefore from \cref{lem:new-jn}, for all minimum covering of the variable gadgets, coverings \emph{all} junctions would require a minimum of only $28$ valid squares. This means the minimum number of squares needed to cover such an instance is \emph{equal to} $V + W + 28j$.

    \item \textbf{Case II, $\psi$ is {\sf false} for all assignments (not satisfiable)}. Since $\psi$ is in CNF, all assignments must render \emph{at least one} \textsc{or}-clause as {\sf false}; \textsl{i.e.} all literals in at least one clause is {\sf false}. Therefore from \cref{lem:new-jn}, \emph{at least one} junction requires a minimum of $29$ valid squares, whereas others require at least $28$ valid squares for being covered. This means the minimum number of squares needed to cover such an instance is \emph{strictly more than} $V + W + 28j$.

    \item \textbf{Case III, $\psi$ is {\sf true} for some assignments and {\sf false} for some (satisfiable)}. Consider any assignment such that $\psi$ is {\sf true}. Since $\psi$ is in CNF, this assignments must render \emph{all} \textsc{or}-clauses as {\sf true}; \textsl{i.e.} at least one literal in each clause is {\sf true}. Consider the square covering corresponding to this assignment. Due to \cref{lem:new-jn}, for such a minimum covering of the variable gadgets, coverings \emph{all} junctions would require exactly $28$ squares. This means the minimum number of squares needed to cover such an instance is \emph{equal to} $V + W + 28j$.
\end{itemize}

This implies that $\psi$ is satisfiable if and only if the reduced instance has a minimum covering with exactly $V + W + 28j$ squares. On the other hand, if $\psi$ is not satisfiable, then the minimum number of squares required to cover the reduced instance is strictly more than $V + Q + 28j$. Since \textsc{Planar 3-CNF} is NP-hard, \OPCSH must also be NP-hard. Finally, as any certificate can be verified in polynomial time (\cref{lem:check-algo}), \OPCSH is NP-complete.

\begin{theorem} \label{thm:opcsh-np-hard}
    The problem of covering orthogonal polygons with holes using minimum number of squares (\OPCSH) is NP-complete, where the input is the set of all $N$ lattice points inside the orthogonal polygon.
\end{theorem}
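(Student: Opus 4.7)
The plan is to establish NP-completeness by first observing membership in NP, then providing a polynomial-time reduction from \textsc{Planar 3-CNF}~\cite{doi:10.1137/0211025}. Membership in NP is immediate: a certificate is a set of squares, and by \cref{lem:check-algo} we can check in polynomial time whether a given set of rectangles covers $P$ exactly (after comparing against the area of $P$, which is readable from the $N$ lattice points). The interesting direction is hardness, and the overall strategy is to salvage the gadget framework of Aupperle et al.~\cite{aupperle1988covering} by diagnosing the flaw in their junction construction and replacing it with a correctly behaving one.

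Given a \textsc{Planar 3-CNF} formula $\psi$, I would build an orthogonal polygon with holes containing one \emph{variable gadget} per variable, one \emph{junction gadget} per clause, and \emph{wires} joining each variable gadget to each clause containing the variable. Since the variable gadgets and wires of Aupperle et al. already provide the invariants we need (each variable gadget has exactly two optimal coverings corresponding to {\sf true}/{\sf false}, each wire propagates the chosen truth value by either protruding half a square into the adjacent junction or not, with a parity distinction between negated and non-negated literals via even/odd attachment lines), I would reuse those gadgets verbatim, and only modify the junction. Let $V$ and $W$ denote the total minimum number of squares forced inside the variable and wire gadgets, respectively, and let $j$ be the number of clauses.

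The main obstacle, and the one place where the paper departs from prior work, is the junction gadget. The required behaviour is of an OR-gate, not an AND-gate: the junction should need exactly $28$ squares when at least one of its three incoming literals is {\sf true}, and exactly $29$ when all three literals are {\sf false}. My plan for designing and analysing such a junction (and its variants obtained by vertical shifts and reflections for the four patterns of literal negations) is as follows. First, I would lay out nine $4\times 4$ maximal corner squares and eleven $2\times 2$ squares at the wire attachments; by \cref{rem:mcs-always}, any optimal cover must contain these $20$ squares. Next, for the lower bounds I would exhibit $8$ pairwise-incompatible uncovered unit blocks $p_1,\dots,p_8$ (blocks no valid square can simultaneously cover two of) to force $\ge 28$ squares in every case, and an additional $9$th incompatible block in the all-false case to force $\ge 29$. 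Finally, to show tightness I would exhibit, for each of the truth-value patterns on the three wires, an explicit covering meeting the bound, mirroring \cref{fig:proof-jn-a} and \cref{fig:proof-jn-b}. This yields \cref{lem:new-jn}.

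Once the junction is established, the reduction is completed by a three-way case analysis on $\psi$. If $\psi$ is satisfiable, pick a satisfying assignment, use the corresponding coverings of variable gadgets and wires, and use a $28$-square cover of each junction (available because every clause has at least one true literal by \cref{lem:new-jn}), giving a total of $V+W+28j$; conversely, any cover of size $V+W+28j$ must use exactly $28$ squares per junction, hence realises a satisfying assignment. If $\psi$ is unsatisfiable, every assignment leaves some clause with all literals {\sf false}, so at least one junction requires $\ge 29$ squares, forcing the total to be strictly greater than $V+W+28j$. Thus $\psi$ is satisfiable iff the reduced instance has a square cover of size at most $V+W+28j$, and since the reduction is polynomial-time with $N=\Theta(n)$ lattice points in the constructed polygon, NP-hardness (and thus NP-completeness) of \OPCSH follows.
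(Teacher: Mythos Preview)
Your proposal is correct and follows essentially the same approach as the paper: reuse the variable and wire gadgets of Aupperle et al., replace the junction by an OR-gate gadget requiring $28$ squares when some literal is {\sf true} and $29$ when all are {\sf false}, prove these bounds via twenty forced squares plus eight (resp.\ nine) pairwise-incompatible blocks and explicit tight covers (\cref{lem:new-jn}), and conclude by the satisfiable/unsatisfiable case split that $\psi$ is satisfiable iff the instance admits a cover of size $V+W+28j$. The only cosmetic difference is that the paper phrases the final argument as a three-way case split on whether $\psi$ is always true, always false, or mixed, whereas you collapse this to satisfiable vs.\ unsatisfiable; both are equivalent.
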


Moreover, as both the answer $V+W+28j$ and the numerical value of the coordinates of each vertex is linear in the number $N$ of lattice points inside the polygon, the problem is strongly NP-complete, and hence cannot have a fully polynomial time approximation scheme (FPTAS)~\cite{garey1979computers, vazirani1997approximation}.

\begin{corollary}
    \OPCSH is strongly NP-complete. Therefore, there is no FPTAS scheme for \OPCSH unless \pnp. That is, there is no $(1 + \varepsilon)$-approximation scheme for an arbitrarily small $\varepsilon$, such that runs in time $\mathcal O\left(\mathsf{poly}\left(\frac{1}{\varepsilon}, N\right)\right)$, where $N$ is the number of lattice points inside the polygon.
\end{corollary}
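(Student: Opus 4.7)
The plan is to extract strong NP-completeness directly from the reduction just constructed, and then invoke the classical Garey--Johnson argument to rule out an FPTAS. First I would verify that every numerical quantity produced by the reduction is bounded by a polynomial in $N$: the vertex coordinates are determined by where variable gadgets, junctions and wires are placed in the plane, and since each gadget occupies only a constant-sized region and wires have length polynomial in the planar embedding of the input 3-CNF formula, all coordinates are polynomially bounded in $N$. Likewise the target value $V + W + 28j$ is linear in $N$, since each gadget contributes only a constant to this quantity and the total number of gadgets is polynomial in the size of the input formula.

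Next I would formally conclude strong NP-completeness. A problem is strongly NP-complete if it is NP-complete and remains so under the restriction that all numerical input is given in unary. Since the reduction from \textsc{Planar 3-CNF} to \OPCSH already produces instances whose numerical data is polynomially bounded in the length of the combinatorial description, passing to unary encoding does not change anything asymptotically. Combined with \cref{thm:opcsh-np-hard} (membership in NP is clear from \cref{lem:check-algo}, which gives a polynomial-time verifier), this shows \OPCSH is strongly NP-complete.

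For the FPTAS non-existence, I would use the standard argument: suppose towards contradiction that an FPTAS existed, running in time $\OO(\mathsf{poly}(1/\varepsilon, N))$. Since $\text{OPT} \le V + W + 28 j = \OO(N)$ on the instances produced by the reduction, setting $\varepsilon := 1/(2 \cdot \text{OPT} + 2) = \Omega(1/N)$ would force any $(1+\varepsilon)$-approximate integer-valued solution to coincide with the true optimum. The overall running time would then be polynomial in $N$, which would solve a strongly NP-hard problem in polynomial time and hence imply $\mbox{P}=\mbox{NP}$.

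I expect no substantial obstacle; the only point requiring care is the bound on the optimum value in the reduced instances, which must be polynomial in $N$ for the FPTAS-to-exact-algorithm conversion to go through. This is immediate from the explicit expression $V + W + 28j$ together with the observation that $V$, $W$ and $j$ are each bounded by the number of gadgets, which is linear in the number of variables and clauses of the source Planar 3-CNF instance and therefore polynomial in $N$.
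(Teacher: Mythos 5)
Your proposal is correct and follows essentially the same route as the paper: the paper's argument is exactly that the target value $V+W+28j$ and all vertex coordinates in the reduced instances are linear in $N$, so the reduction already yields strong NP-completeness, and the non-existence of an FPTAS then follows by the standard Garey--Johnson conversion (which you spell out explicitly with $\varepsilon = \Theta(1/\mathrm{OPT})$, whereas the paper simply cites it). No gaps.
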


Due to the structure of the reduced instance, we can also infer the following.

\begin{corollary} The following problems are NP-complete:
    \begin{itemize}
        \item The problem of finding a minimum square covering of polygons with holes, where all squares are restricted to have a side-length of at most $\eta$ is NP-complete whenever $\eta \ge 4$.
        \item The problem of finding the minimum square covering of polygons, where the squares can take side lengths from a set $\Lambda \subseteq \mathbb{N}$ is NP-complete, even if $|\Lambda| = 2$.
    \end{itemize}

    \begin{proof}
        The results follow from the observation that the reduced instance from \textsc{Planar 3-CNF} only contains maximal squares of size $2\times2$ and $4\times4$, and from Observation~\ref{prop:max-sq-cov}.
    \end{proof}
\end{corollary}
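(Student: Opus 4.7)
My plan is to piggyback on the reduction from \textsc{Planar 3-CNF} to \OPCSH constructed in the proof of \cref{thm:opcsh-np-hard}, and exploit the fact that the resulting polygons have very few ``sizes'' of maximal square.

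First, I would inspect the variable gadgets, the wire gadgets, and the modified junction gadgets of \cref{fig:jn-gadgets}, and verify that every maximal valid square inside the reduced polygon has side length either $2$ or $4$. The paper already notes that the variable gadgets and wires are designed so that all their maximal squares are $2 \times 2$; for each junction, the construction yields nine $4 \times 4$ maximal squares at the convex vertices and eleven $2 \times 2$ squares interfacing with the wires, and no other maximal square arises. This can be made rigorous by applying \cref{lem:unique-mcs} at each convex vertex of the reduced polygon to enumerate all maximal corner squares explicitly.

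Combining this inspection with \cref{prop:max-sq-cov}, the reduced polygon admits a minimum square covering consisting entirely of $2 \times 2$ and $4 \times 4$ squares. For the first bullet, restricting each square to have side length at most $\eta$, for any $\eta \geq 4$, cannot increase the optimum on reduced instances, since an optimal covering using only sizes in $\{2,4\}$ is already feasible. Hence the reduction transfers verbatim and the $\eta$-bounded variant inherits NP-hardness. For the second bullet, choosing $\Lambda = \{2,4\}$ gives a size-$2$ alphabet that also leaves the optimum unchanged on reduced instances, so the same reduction works. In both cases, NP-membership follows from the polynomial-time area-based verification in \cref{lem:check-algo} applied to a candidate covering.

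The only delicate step is confirming that no unintended ``large'' maximal square arises at the interfaces between gadgets in the composite layout. I would handle this by observing that every pair of parallel polygon edges supporting a maximal square (in the sense of \cref{lem:2-side-poly}) lies within a single gadget: the wires are narrow enough to admit only $2 \times 2$ maximal squares, and the junctions are positioned so that no horizontal or vertical strip crosses two gadgets. Once this boundary check is made, both claims follow immediately from \cref{thm:opcsh-np-hard} and \cref{prop:max-sq-cov}.
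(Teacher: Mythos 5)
Your proposal is correct and follows essentially the same route as the paper: both rest on the observation that every maximal square in the reduced \textsc{Planar 3-CNF} instance is $2\times 2$ or $4\times 4$, combined with Observation~\ref{prop:max-sq-cov}, so the restricted variants have the same optimum on reduced instances and inherit NP-hardness. Your added details (explicit verification via \cref{lem:unique-mcs}, the gadget-interface check, and NP-membership via \cref{lem:check-algo}) merely flesh out what the paper states tersely.
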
 

Another immediate corollary of \cref{thm:opcsh-np-hard} is that fact that \OPCSH is NP-hard when the $n$ vertices of the polygon constitute the input, instead of the $N$ lattice points lying inside. This is because the reduced instance had $n = \Theta(N)$. It turns out that \OPCSH is in fact in the class NP, with $n$ vertices as input. This gives us the following result.

\begin{corollary}
    \OPCSH is NP-complete when the input is the $n$ vertices. 
\end{corollary}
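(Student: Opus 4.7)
The plan is to prove both NP-hardness and membership in NP for \OPCSH in the $n$-vertex input regime. NP-hardness will come essentially for free from the reduction already exhibited for the lattice-point representation, while NP membership requires a polynomial-size witness that encodes a potentially exponentially-large optimal covering.

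For NP-hardness, I would inspect the reduction constructed in the proof of \cref{thm:opcsh-np-hard}: every variable gadget, wire gadget, and junction gadget (\cref{fig:jn-gadgets}) occupies a constant number of vertices and constant lattice-dimensions, and they are laid out according to a planar drawing whose coordinate magnitudes are polynomial in the size of the \textsc{Planar 3-CNF} instance. Hence the reduced polygon has $n = \Theta(N)$. Rewriting the reduction to emit the $n$-vertex boundary description instead of the full lattice-point listing is immediate and still runs in polynomial time, so NP-hardness transfers verbatim to the $n$-vertex input format.

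For membership in NP of the decision version ``is $\text{\OPCSH}(P) \le k$?'' with $k$ given in binary, the witness I propose is a collection $\mathscr R$ of rec-packs (\cref{def:rec-pack}) whose total strength is at most $k$. The verifier checks that each rec-pack lies inside $P$ (doable in $\mathcal O(n)$ per rec-pack by scanning the polygon edges, including those bounding holes) and that the union of the rec-packs equals the region of $P$ (Klee's measure in the plane, as in \cref{lem:check-algo}, running in $\mathcal O(|\mathscr R|\log|\mathscr R|)$ time). Overall verification is polynomial in $|\mathscr R|$, $n$, and $\log k$.

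The crux, and the main obstacle, is to show that some optimal covering admits a description with $|\mathscr R| = \mathsf{poly}(n)$. The analogous bound for hole-free polygons, \cref{lem:R-bound}, relied on the chordality of $G(P)$ via the algorithmic construction of \cref{alg:polytime}, and chordality is no longer guaranteed once holes are admitted. I would therefore rebuild the bound purely geometrically: draw horizontal and vertical lines through every vertex of every boundary component of $P$, producing $\mathcal O(n^2)$ axis-parallel rectangular grid cells inside $P$, and then argue by a local case analysis mirroring Cases~I and~II of \cref{lem:R-bound} that any optimal covering can be rearranged (without increasing its size) so that its intersection with each cell consists of a constant number of squares together with at most one maximal-strip rec-pack running through the cell. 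Summing over cells yields $|\mathscr R| = \mathcal O(n^2)$; combined with the verifier above this places \OPCSH in NP, completing the corollary.
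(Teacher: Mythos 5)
Your proposal coincides with the paper's proof in both halves: NP-hardness is inherited from \cref{thm:opcsh-np-hard} because the reduced instance has $N = \Theta(n)$, and membership is argued with exactly the paper's certificate --- a set of rec-packs, verified by an $\OO(n)$ containment check per rec-pack plus the coverage check of \cref{lem:check-algo}. Where you go beyond the paper is in insisting that a yes-instance admit a certificate of size polynomial in $n$. The paper's proof stops at exhibiting the verifier; it never argues that some covering of the required cardinality can be encoded by polynomially many rec-packs once holes are allowed, and its $\OO(n^2)$ rec-pack bound (\cref{lem:R-bound}) is proved only for hole-free polygons, through the run of \cref{alg:polytime}, whose correctness rests on chordality of $G(P)$ --- a property lost in the presence of holes. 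So identifying this as the crux is a fair observation about a step the paper glosses over, not a redundancy.

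That said, your proposed repair of this step is only a sketch, and its central claim is not established. \cref{lem:R-bound} does not rearrange an arbitrary optimal covering cell by cell; it bounds the number of unambiguous squares inserted by the algorithm via an exchange argument against a competing cover built from the vertex-induced grid, and it leans on the seed-and-rec-pack generation step of \cref{alg:polytime}. Hence ``mirroring Cases~I and~II'' does not directly give the assertion that any optimal covering of a polygon with holes can be normalised, without increasing its size, so that each grid cell meets only a constant number of squares plus one strip rec-pack: a single square of the covering meets many cells, independent per-cell replacements can uncover points elsewhere or interact with one another, and no analogue of the unambiguous-square machinery is available. If your goal is to reproduce what the paper actually proves, your first two paragraphs suffice; if your goal is to close the certificate-size gap rigorously, the compression lemma for polygons with holes still needs a genuine proof (or a different style of certificate), and as written your third paragraph asserts rather than establishes it.
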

\begin{proof}
    \cref{thm:opcsh-np-hard} immediately gives us that \OPCSH is NP-hard when the input is the $n$ vertices, as the reduced instance of orthogonal polygon has $N = \Theta(n)$ lattice points inside.
    
    To prove that \OPCSH with $n$ vertices is in NP, we need to show a polynomial time verifier for a certificate. We consider the set of rec-packs to be a certificate, and we can do the following to check if these indeed form a valid covering:
    \begin{itemize}
        \item Check if each rec-pack lies inside the polygon. This can be checked in $\mathcal O(n)$ by checking if an arbitrary point in the rec-pack lies in the polygon, followed by ensuring that no polygon-edge intersects a rec-pack edge non-trivially.
        \item If all rec-packs lie in the polygon, we can use \cref{lem:check-algo} to check if they form a complete covering of the polygon. Note that \cref{lem:check-algo} generalizes to polygons with holes as well. 
    \end{itemize}
    This proves there there is in fact a polynomial time verifier for \OPCSH.
\end{proof}

\section{Conclusion}\label{sec:conclusion}

In this paper, we answer the open problem, initially posed by Aupperle et. al~\cite{aupperle1988covering}, asking whether \OPCS has an exact algorithm running in time polynomial in the number $n$ of vertices of the input orthogonal polygon without holes. Our exact algorithm for \OPCS runs in $\OO(n^{10})$ time. We further optimize the running time for orthogonal polygons with $n$ vertices and a small number of knobs $k$, by designing a recursive algorithm with running time $\OO(n^2 + k^{10} \cdot n)$. This gives us an $\OO(n^2)$ algorithm for solving \OPCS on orthogonally convex polygons. We also provide a correct proof for the NP-hardness of \OPCSH; this is a novel result as the proof claimed in the works of Aupperle et. al~\cite{aupperle1988covering} incorrectly reduce from a polynomial-time solvable problem. A natural future direction is to study the problem with respect to covering of orthogonal polygons with other geometric objects like triangles, line segments, orthogonally convex polygons etc. and try to obtain an algorithm whose running time is polynomial in the number of vertices of the input polygon and not on the total number of interior lattice points of the input polygon. Another interesting question directly related to our current work would be to find tight bounds for the number of rec-packs required to produce a minimum covering of an orthogonal polygon; our paper only proves an upper bound of $\OO(n^2)$ rec-packs required to produce a minimum covering of an entire polygon. 
\bibliography{refs}

\end{document}